\DeclarePairedDelimiter\floor{\lfloor}{\rfloor}
\patchcmd\linenumberpar{\@LN@parpgbrk}{\penalty\@LN@parpgpen\relax}{}{}
\DeclareMathOperator*{\argmin}{arg\,min}
\newtheorem{theorem}{Theorem}[]
\newtheorem{corollary}{Corollary}[]
\newtheorem{remark}{Remark}[]
\begin{document}


\title{
The shape of memory in temporal networks\\
}

\author{Oliver E. Williams}
\affiliation{School of Mathematical Sciences, Queen Mary University of London, London, E1 4NS, United Kingdom}

\author{Lucas Lacasa}
\affiliation{School of Mathematical Sciences, Queen Mary University of London, London, E1 4NS, United Kingdom}

\author{Ana P. Mill\'an}
\affiliation{Amsterdam UMC, Vrije Universiteit Amsterdam, Department of Clinical Neurophysiology and MEG Center, Amsterdam Neuroscience, De Boelelaan 1117, Amsterdam, The Netherlands}

\affiliation{Institute \textit{Carlos I} for Theoretical and Computational Physics, University of Granada, Spain}

\author{Vito Latora}
\affiliation{School of Mathematical Sciences, Queen Mary University of London, London, E1 4NS, United Kingdom}

\affiliation{Dipartimento di Fisica ed Astronomia, Universit\`a di Catania and INFN, I-95123 Catania, Italy}

\affiliation{The Alan Turing Institute, The British Library, London NW1 2DB, United Kingdom}




\date{\today}

\maketitle
    {\bf Temporal networks \cite{Holme_rev12,masuda_guide_temp_net,holme2013temporal,holme2019temporal}
are widely used models for describing the architecture of complex systems 
\cite{Starnini:2013,Szell:2012aa,Yoneki:2009,corsi2018measuring,mazzarisi2019dynamic,millan2018concurrence,Valencia08,zanin2009dynamics,tang2010sw,lambiotte2019networks}.  Network memory --that is the dependence of a temporal
network's structure on its past-- has been shown to play a prominent
role in diffusion \cite{delvenne2015diffusion,Lambiotte_jcn15,masuda2013temporal, Scholtes_natcomm14}, 
epidemics \cite{Hiraoka:2018aa, takaguchi2013bursty,lambiotte2013burstiness,karsai2011small,Williams_2019,van2013non}
and other processes \cite{Fallani08,Singer_PLOSONE14} {occurring} over the network, and even to alter
its community structure \cite{peixoto2017modelling,Rosvall_natcomm14}.
Recent works have proposed to estimate the length of memory in a
  temporal network by using high-order Markov models~\cite{Scholtes_natcomm14,scholtes2017network,peixoto2018change}.
Here we show that network memory is 
inherently multidimensional and cannot be meaningfully reduced to a single
scalar quantity. Accordingly, we  introduce a mathematical framework  
for defining and efficiently estimating the microscopic shape of memory, 
which fully characterises how the activity of each link intertwines with
the activities of all other links.  
We validate our methodology on a wide range of
synthetic models of temporal networks with tuneable memory,
and subsequently study the heterogeneous shapes of memory emerging in
various real-world networks. 
}


\medskip
A temporal network is a graph whose structure changes over time.   
A temporal network $\mathcal{G}$ over
$N$ nodes can be 
{formalised} as a set of $L$ discrete-time stochastic processes
$\mathcal{G} = \{ {\mathcal{E} }^{\alpha} \}^{\alpha=1,2,\ldots,L}$, 
where $L\leq N(N-1)/2$ is the number of different pairs of nodes that
can be connected by links over time. Each 
${\mathcal{E} }^{\alpha}= \{ E^{\alpha}_t \}_{t=1,2,\ldots} $
is the stochastic process governing the dynamics of link $\alpha$,
with the random variable $E_t^\alpha$ taking the value 1 if link 
$\alpha$ is present at time $t$, and 0 otherwise.
Note that, in general, these stochastic processes are not independent. 
Indeed, the properties of a temporal network not only depend  
on the patterns of activities of each of its links, but also on the ways in
which these patterns influence each other across the network. 
Since the set of every possible graph with $N$ nodes is finite, it is
in principle possible to enumerate all the configurations of a
temporal network, build an alphabet accordingly, and transform the
temporal network into a time series of symbols from this alphabet. A
straightforward way to define a {\em scalar memory}
$\Omega(\mathcal{G})$ of $\mathcal{G}$ is then, by direct analogy to
the case of a scalar time series, as the order $p$ of the lowest-order
Markov chain that is able to reproduce the sequence of symbols
generated by $\mathcal{G}$ (see SI Section I-A and B and II-A for
details). This approach can only work in practice for very small numbers
of nodes $N$, as the size of the alphabet grows extremely rapidly
($\sim 2^{\frac{1}{2} (N^2 - N)}$) and very long time series would
be required for an accurate estimate.

\begin{figure*}
\includegraphics[width=0.9\textwidth]{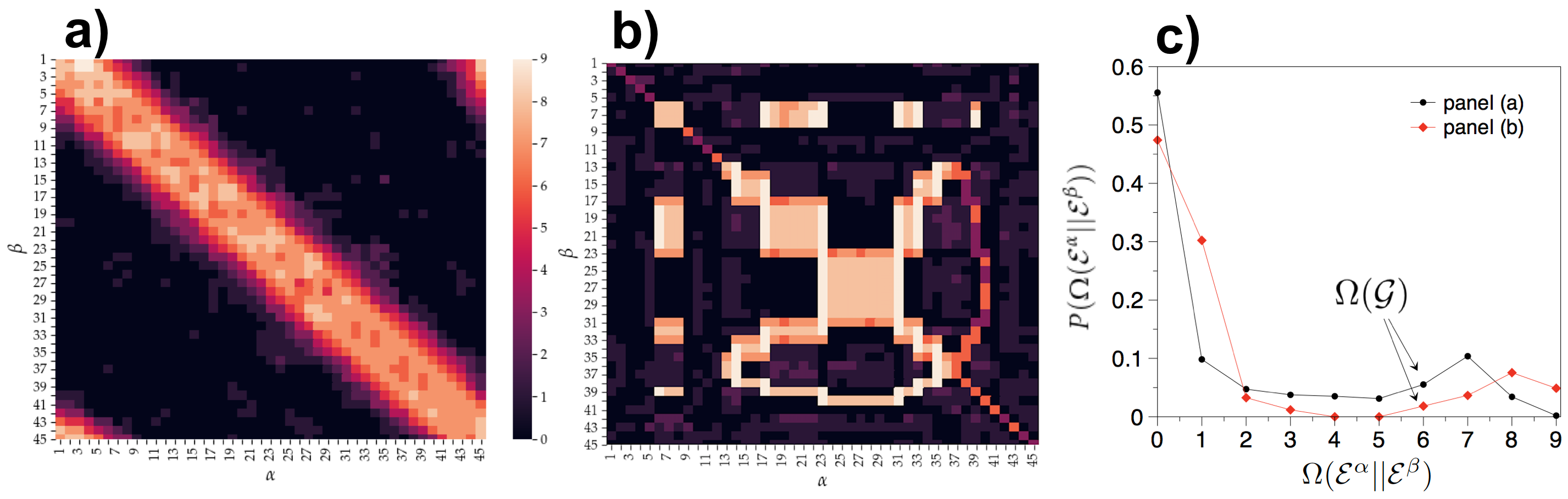}
\caption{{\bf Shape of memory and emergence of virtual loops in
    temporal networks with correlated link dynamics} A temporal fully
  connected network $\mathcal{G}$ with $N=10$ nodes and $L=45$ links
  whose dynamics are both autocorrelated and heterogeneously
  cross-correlated, generated from an eCDARN($p$) model with
  parameters $q=0.9$, $y=0.5$, $c=0.7$ and a set of memory lengths $p$
  randomly sampled with uniform (panel (a)) or a bimodal (panel (b))
  probability from $\{0, 1, \ldots, 6 \}$ 
  (see SI section IV-A for details).  {\bf (a)} The $45 \times 45$ entries of
  the co-memory matrix $\mathbb{M}$ (shown with a color code) display
  the shape of the network memory at the microscopic scale of pairs of
  links. In this specific case the eCDARN($p$) model is chosen such
  that the causal structure of link dependencies is restricted in a
  Bayesian ring topology of $L=45$ nodes, so that when link $\alpha$
  samples its activity from the past history of other links, it
  randomly samples from $\alpha \pm 1$.  The scalar memory of the
  network is
$\Omega(\mathcal{G})=6$. Pairs of neigbouring {or close} links in the Bayesian ring 
exhibit high memory co-order, often above $\Omega(\mathcal{G})$, due to the onset
of virtual loops (see the text), whereas distant links are seldom cross-correlated
and thus display low co-order memory. {\bf (b)} Similar to (a),
but where the link's causal structure is given by a different Bayesian graph
(see SI section IV-A for full details). A notably different memory shape emerges,
however the scalar memory of the network is still $\Omega(\mathcal{G})=6$.
 {\bf (c)} Distribution of memory co-orders in both examples, showing different  heterogeneous
profiles which in both cases are not well characterized
by $\Omega(\mathcal{G})$.  }
\label{fig:1}
\end{figure*}

\noindent
There is, however, a more fundamental problem with this approach.
Not only is the scalar memory order $\Omega(\mathcal{G})$ 
hard to estimate, but it also fails to capture fundamental microscopic
differences between temporal networks. As we will show below, each
temporal network is characterised by a precise pattern of memories at
a microscopic scale, that we name the {\em shape of the memory}. 
Links can heterogeneously influence the
activity of other links, and the entangled temporal dependencies among
these can even bring about {virtual memory resonances} in the activity of each link
which are systematically undetected by
$\Omega(\mathcal{G})$, yet have real and measurable physical effects on e.g. spreading dynamics. Overall, memory is indeed a
heterogeneous, multidimensional fingerprint which is not reducible to a scalar quantity.\\


\begin{figure*}
\includegraphics[width=0.9\textwidth]{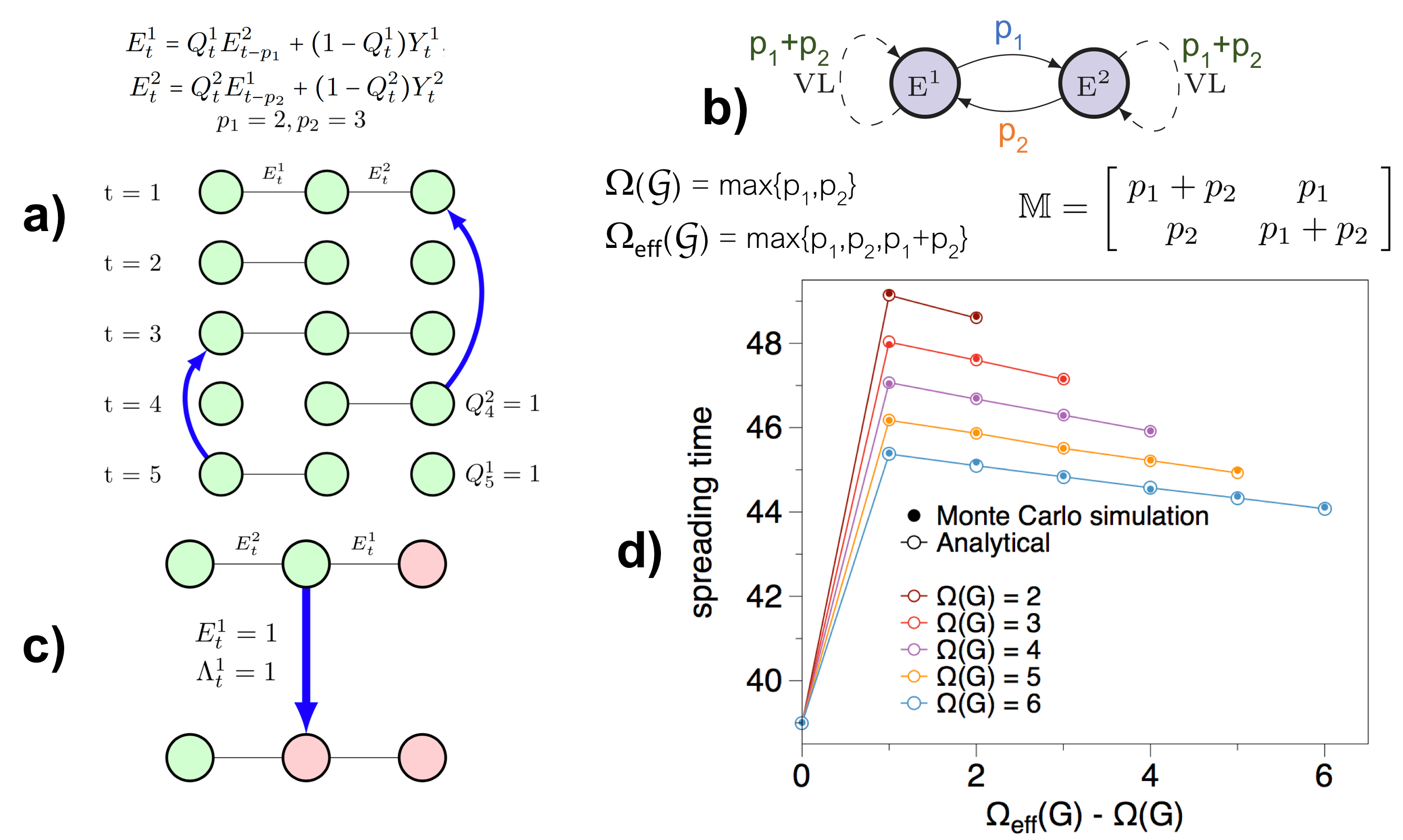}
\caption{{\bf Virtual loops and their effects on spreading processes 
    in temporal networks
  }  {\bf (a)} A sketch of five steps of the temporal evolution of a network
  with three nodes and two links evolving according to the
  displayed equation (for details see SI section IIIA). If 
  $Q^{\ell}_{t} = 0$ then link $\ell$ at time $t$ is generated randomly. 
  Conversely, if $Q^{\ell}_t = 1$ link $\ell$ copies a state from the past,
  namely link  $1$ copies the value of link 2 at time $t-2$, while link $2$
  copies the value of link 1 at time $t-3$.  {\bf (b)} Bayesian graph of
  the causal dependencies between the two links. Link 1 copies from
  the past of link 2 ($p_1$), whereas link 2 copies from the past
  of link 1 ($p_2$), thereby inducing first-order virtual loops in the
  memory of links 1 and 2, which virtually copy from their own past,
  $p_1+p_2$ steps back. The co-memory matrix $\mathbb{M}$ whose entries are 
  the co-memory order of each pair of links is also shown.  The
  scalar memory of the process can be proved to be $\Omega({\mathcal
    G})=\max\{p_1,p_2\}$, whereas the effective memory
  $\Omega_{\text{eff}}({\mathcal G})$, obtained as the largest entry of the
  co-memory matrix, is $p_1+p_2$, which differs from
  $\Omega({\mathcal G})$ due to the existence of the virtual loops. {\bf (c)} A
  SI (Susceptible-Infectious) epidemic spreading is defined over the temporal network. 
  Each node can either be in the infected (red) or susceptible (green) state. If at
  time $t$ there is a link $E^{\ell}_t = 1$ between an infected node
  and a susceptible one, then the infection will be passed with some
  probability (if random variable $\Lambda_t = 1$) and the
  susceptible node will become infected (see SI section IIIE for
  details).  {\bf (d)} Analytical and numerical results for the
  average time taken for every susceptible to become infected,  
  as a function of the difference between the scalar
  memory $\Omega(\mathcal{G})$ and the effective memory
  $\Omega_{\text{eff}}(\mathcal{G})$ (this latter being extracted from $\mathbb{M}$).
  For each curve, the value of
  $\Omega(\mathcal{G})$ is fixed to a given value $p_1$, while the 
  value of $\Omega_{\text{eff}}(\mathcal{G}) = p_1 + p_2 $ is varied
  by changing $p_2 \leq p_1$. We find that the spreading times depend on the value
  of the effective memory $\Omega_{\text{eff}}(\mathcal{G})$, which is then 
  a better descriptor of the effects of memory than $\Omega(\mathcal{G})$, this latter quantity
  being unable to detect any of these effects. Numerical results are
  obtained as averages over $10^7$ realisations of the network, and are in perfect agreement with the analytical prediction (see SI section IIIF for the full analysis).}
\label{fig:toy_si_spreading}
\end{figure*}


\noindent {\bf Theory --}
In order to fully characterise the shape of the memory of a temporal network, we
propose to define the {\em memory co-order} $\Omega ( {\mathcal
  E}^\alpha || {\mathcal E}^\beta) $ of a pair of links $\alpha$ and
$\beta$ 
as the furthest point in the past history of $\{ E^{\beta}_t \}$ 
which has influence on the current evolution of
$\{ E^{\alpha}_t \}$
(see SI section IIB for details).
Notice that for $\alpha=\beta$ we have 
  $\Omega (  {\mathcal E}^\alpha ||  {\mathcal E}^\beta) =
  \Omega (  {\mathcal E}^\alpha ||  {\mathcal E}^\alpha)  = \Omega( {\mathcal E}^\alpha)$. 
  The evaluation of the whole $L \times L $ co-memory matrix
$\mathbb{M}$,  
whose element $m_{\alpha  \beta}= 
\Omega (  {\mathcal E}^\alpha ||  {\mathcal E}^\beta)$ 
is the memory co-order of the pair of links $\alpha$ and $\beta$,  
allows us then to describe, at a microscopic level,  
the type of memory present in a network. 
As an example, Fig.\ref{fig:1}(a,b) displays the co-memory matrices
$\mathbb{M}$ we have extracted in the case of a synthetic temporal
network with $N=10$ nodes and $L=45$ links with two different types of correlated dynamics.
To compute the values of $m_{\alpha \beta}$ here (and throughout this work)
we have used a modified version of the 
Efficient Determination Criterion (EDC) \cite{zhao2001determination,dorea2014simulation} 
as this performs well as an estimator, is strongly consistent,
and allows for optimised implementations (for full details see SI sections I-C and II-C).
The network's memory exhibits a very peculiar {\em shape} induced by both the specific link dependence that we have planted and the pre-specified set of memory length parameters. This is further highlighted in the heterogeneous 
distribution of memory co-orders reported in Fig.\ref{fig:1}(c), where it is clear that
memory cannot indeed be characterised by
the value of the scalar memory $\Omega(\mathcal{G})$ alone, which in this case is equal to $6$ in both networks.
The long memory contributions (above order $6$) that we see in this distribution 
are a manifestation of what we call {\it virtual loops (VLs)}.
These emerge e.g. when link $\alpha$ depends on the past of link $\beta$, and
link $\beta$ in turn depends on the past of link $\alpha$, inducing a
long-memory loop in the activity of each link separately.
While being virtual in the sense that they are not pre-specified by
the model nor captured by $\Omega(\mathcal{G})$, they do indeed play an important role
in the dynamics of the network and affect processes occurring on it. These virtual loops typically emerge when causal dependencies of link
activities are described by a cyclic Bayesian network, and are indeed reminiscent of other forms of causal loops appearing in various fields of physics
and modern science (see SI section III-B and C for a discussion).
\\
\noindent 
To better illustrate this, Fig.\ref{fig:toy_si_spreading}(a) and (b)
show an example of a toy model of a temporal network with only three nodes
and two links. The model allows us to tune the shape of the memory, while the
scalar memory $\Omega(\mathcal{G})$ of the network is kept fixed. The adopted causal 
dependencies between the two links (each link can copy
from the past of the other link) induce virtual loops. 
These govern
the two diagonal terms of the co-order matrix $\mathbb{M}$ and have 
measurable and important effects on dynamical processes taking place over the network. 
Fig.\ref{fig:toy_si_spreading}(c) and (d) show that the time taken for an infection
to spread over the entire network can indeed be very different in 
networks with the same value of $\Omega(\mathcal{G})$, but with 
different memory shapes
 (see SI sections III and IV for other models, thorough mathematical analysis of virtual loops, and additional details). 

\noindent Furthermore, it is easy to prove (see theorem 1 in SI
section II-A and B) that $\Omega(\mathcal{G})\leq
\max_{\alpha,\beta}\{m_{\alpha
  \beta}\} :=\Omega_{\text{eff}}(\mathcal{G})$,
that is, the scalar memory is bounded from above by the maximum
co-order over all link pairs, which we term the {\em effective memory}
of the network. Fig.\ref{fig:toy_si_spreading}(d) shows that
$\Omega_{\text{eff}}(\mathcal{G})$ accounts for the virtual loops in
the toy network model and thus captures the measurable differences in the
spreading times. Of course, $\Omega_{\text{eff}}(\mathcal{G})$ is still not able to account for the rich memory heterogeneity of a temporal network (see panel (c) of Fig.\ref{fig:1}), but is (i) better conceptually defined than $\Omega(\mathcal{G})$ as it captures the effect of virtual loops, and (ii) can be 
computed efficiently from $\mathbb{M}$. In those cases where virtual loops are absent or they are {\it decoherent},   $\Omega(\mathcal{G})$ indeed approaches $\Omega_{\text{eff}}(\mathcal{G})$ (see SI section IV for a thorough exposition of virtual loop decoherence).
%
%

\begin{figure*}[htb]
\includegraphics[width=0.95\textwidth]{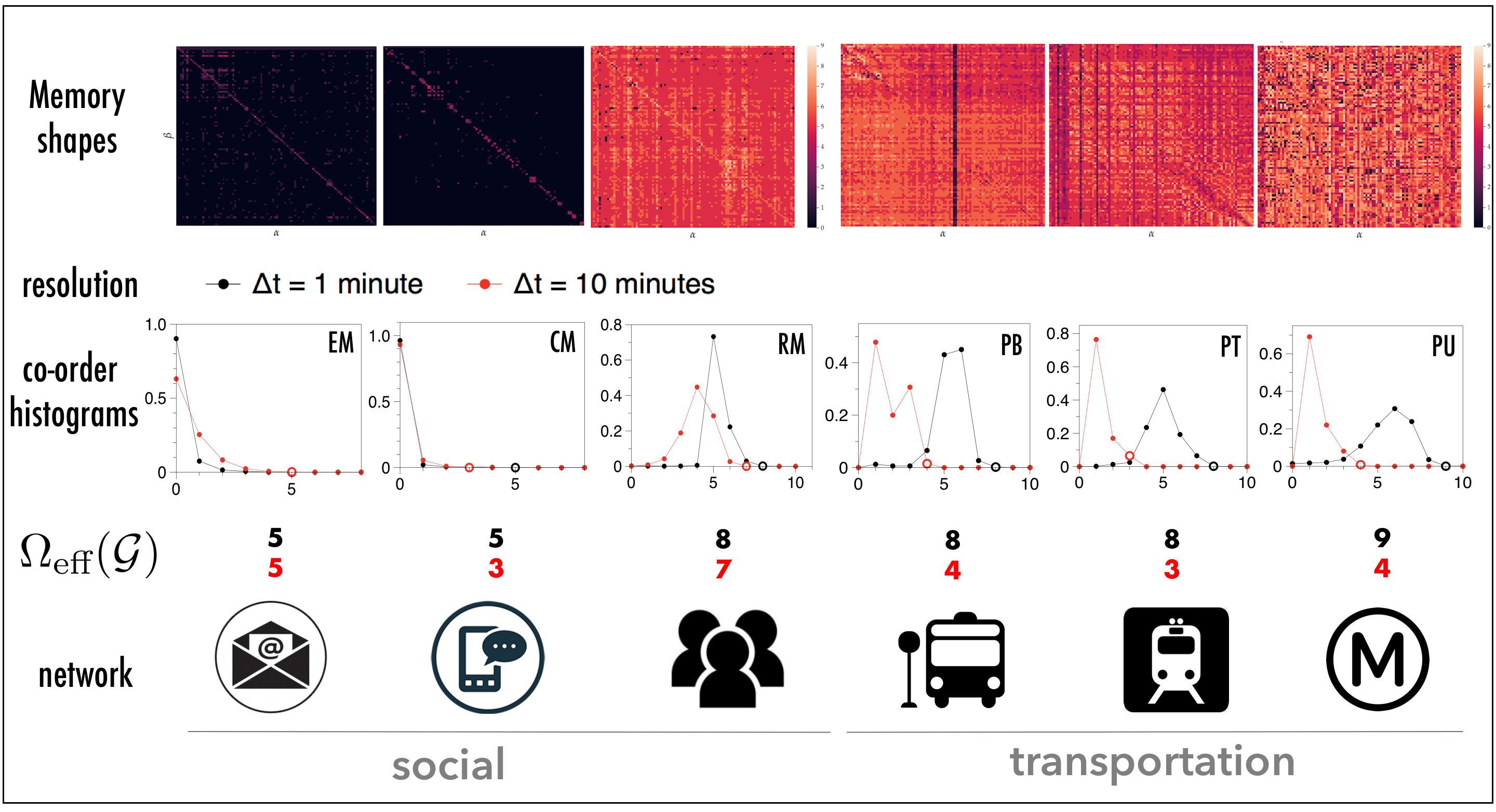}
\caption{{\bf The shape of memory of six real-world temporal
    networks:} 
  (EM) emails between the
  employees at a construction company \cite{email_bb}, (CM) text
  messages between college students \cite{panzarasa2009patterns}, (RM)
  social contacts at a US university from the Reality Mining
  experiment \cite{eagle2006reality}, routes taken by (PB) buses, (PT)
  overground trains and (PU) underground trains in the Paris public
  transport system \cite{kujala2018collection}. For each temporal
  network, we estimate the shape of the memory $\mathbb{M}$ restricted
  to the 100 most frequently active links, and plot the respective
  heat maps (lighter colour means higher memory). From these we
  extract the distributions of memory co-orders. The
  {effective} network memory $\Omega_{\text{eff}}(\mathcal{G})$ is
  also highlighted by hollow circles, and the actual values are reported 
  below the plots. Networks have been sampled at two different temporal
  resolutions $\Delta t$, namely every $1$ and $10$ minutes (heatmaps only show the $\Delta t=1$ min resolution). {In the two online
  social networks (EM and CM) the distributions of memory co-orders
  concentrate around zero and decay rapidly, indicating very short
  memory overall, except for a few pairs of links. In the offline university social network (RM) we find instead
  two clear peaks corresponding to the presence of memory at
  two time-scales of about 5 and 40-50 minutes (corresponding to interaction during lecture room changes 
  and during whole lectures) respectively.} 
{Two peaks are also observed in the three engineered networks. However,
both peaks are compatible with a time scale of 5-7 minutes in PT and PU,  
suggesting that such systems exhibit only one effective timescale, 
due to enforced planning and scheduling. The bus network (PB) in addition
to the 5-7 minutes also shows
a memory timescale of about 30 minutes, possibly due 
to external phenomena such as collective delays
induced by traffic jams.}
 }
\label{fig:pair_mem_dist}
\end{figure*}

\noindent {\bf Validation in synthetic networks -- } {We have tested
  the accuracy of our memory shape estimator in four generative
  temporal network models, each of varying complexity and with
  differing memory {shapes}. Whereas we lack analytical expressions
  for the memory shape, in all these models a ground truth for
  $\Omega(\mathcal{G})$ can be obtained analytically, so our analysis
  can examine the measurable effect of virtual loops. (i) First, we
  consider the DARN($p$) and eDARN($p_j$) models \cite{Williams_2019},
  where all links have independent yet autocorrelated dynamics. By
  design these network models are free from virtual loops, thus we
  expect $\Omega_{\text{eff}}(\mathcal{G}) \approx \Omega(\mathcal{G})
  $. (ii) Then, we consider the CDARN($p$) and eCDARN($p$) models
  \cite{williams2019_diff}, where link dynamics are not only
  autocorrelated but also cross-correlated, since links can sample
  their next state from either their own history or from the history
  of other chosen links. Virtual loops are expected to emerge in these
  cases, inducing
  $\Omega_{\text{eff}}(\mathcal{G})>\Omega(\mathcal{G}) $ (see
  Fig.\ref{fig:1} for an illustration of the eCDARN($p$) model and SI
  section IV-A for details of all four models and precise theorem
  statements and analytical derivations of the scalar memory). In
  every case we generate $10^3$ realisations of each temporal network
  model (fully connected backbone of $N=10$ nodes with randomly chosen
  ground truth scalar memory and a range of different parameter
  configurations) and count the hit rate (percentage of correct
  predictions) between the estimated $\Omega_{\text{eff}}(\mathcal{G})$
  and the analytical value of $\Omega(\mathcal{G})$. Results are reported 
  in SI Figure S8 and Section IV-B.  For long enough temporal series
  the hit rate is consistently 100\% in models which are free from
  virtual loops, suggesting that not only is our estimator accurate,
  but that in these cases 
  $\Omega_{\text{eff}}(\mathcal{G}) = \Omega(\mathcal{G})$. In those
  models where virtual loops emerge, the hit rate
  decays as expected. Interestingly, in a variety of cases a high hit
  rate is maintained due to the phenomenon of virtual loop decoherence
  (see SI Section IVB-D for full details and a more in depth analysis
  of the performance of our estimator).


\noindent {\bf Real-world temporal networks -- }
We have then studied the shape of memory in various real-world
temporal networks, including online and offline social interactions
and different types of transportation systems. Results are shown in
Fig.\ref{fig:pair_mem_dist} for the six networks. Only the
$10^2$ most active links for each network, i.e. those with largest
value of $\sum_t E_t^\alpha$, have been considered when constructing
the co-memory matrix $\mathbb{M}$.  The coloured heat maps in the top
row indicate that memory 
shapes vary across networks, overall being notably longer 
in offline
networks than in online ones. The middle panels show 
the distribution of memory co-orders. In order to
detect memory at different timescales, we have sampled the networks
at two temporal resolutions, namely $\Delta t=1$ and $\Delta t=10$
min (see SI Section V-A for details). 
The results should be interpreted accordingly: notice for instance,
that order 2 at the $\Delta t= 1$ resolution is equivalent to a memory
length of 2 minutes, whereas order 2 at the $\Delta t= 10$ resolution
is equivalent to a memory length of 20 minutes.
We have found that, while in the transportation networks with tight scheduling
only one memory timescale flags up, in the case of the bus
network, whose scheduling can be more affected by external factors such as
traffic jams, and in the case of the offline social network, at least 
at least two different memory timescales show up. The situation is particular clear
in the case of human contacts at university, which show memory lengths of 
5 and 40-50 minutes corresponding to different types of mechanisms of recurrent
social interactions during lectures and in between lecture room changes). 
\\
As a complement, a different projection of $\mathbb{M}$ into the so-called
$(\left< \Omega \right>_{\text{in}}^i, \left< \Omega
\right>_{\text{out}}^i)$ plane is considered in SI section V-C,
confirming that there is a notable difference between online and
offline temporal networks, with the former having on average weaker
and more homogeneous memory profiles than the latter, as well as a
difference between social and engineered ones.

\medskip
\noindent {\bf Conclusions -- } Our approach, based on the evaluation of
the co-memory matrix, not only provides a sound and efficient
approximation of the memory of a temporal network, but also offers a
comprehensible description of its microscopic shape.
Our results unveil previously hidden rich and heterogeneous memory shapes and indicate that fully considering this microscopic structure  
is of capital
importance when it comes to understanding how epidemics spread or
information diffuses in time-varying systems. We hope our work will
prompt further studies and will find useful applications 
in areas such as urban
mobility, epidemiology or information processing in neuroscience.

\bigskip
\noindent{\bf Data Availability.}
Data associated with this study can be found via the following links: 
manufacturing company e-mail communication - \url{www.ii.pwr.edu.pl/~michalski/},
text messages between college students - \url{snap.stanford.edu/data/CollegeMsg.html},
reality mining experiment - \url{http://realitycommons.media.mit.edu/realitymining.html},
public transport data - \url{www.nature.com/articles/sdata201889}

\medskip
\noindent {\bf Code Availability.} Complete implementations of our general method and all examples are available
in C++, Python 2.7, Python 3.6, Java, MatLab and Rust at \url{github.com/oewilliams/temp-net-memory} (see also SI section VI).

\medskip
\noindent {\bf Acknowledgments.} 
L.L. acknowledges support from EPSRC ECF EP/P01660X/1.
A.P.M. is supported by ZonMw and the Dutch Epilepsy Foundation, project number 95105006.
A.P.M. acknowledges support from the Spanish Ministry of Science and Technology and the ``Agencia Espa{\~n}ola de
Investigaci{\'o}n (AEI)''  under grant FIS2017-84256-P (FEDER funds), and from ``Obra Social La Caixa'' (ID 100010434 with code LCF/BQ/ES15/10360004).
V.L. acknowledges support from the EPSRC project
EP/N013492/1 and from the Leverhulme Trust Research Fellowship ``CREATE: the network components of creativity and success''.

\medskip
\noindent {\bf Author contributions.} OEW, LL and VL designed the research. OEW developed the formal analysis, implemented all algorithms and cleaned the data. APM contributed to the computational and formal analysis. All authors wrote the paper.

\medskip
\noindent {\bf Competing Interests statement} The authors declare no competing interests.
\vspace*{\fill}
\null










\pagebreak

\onecolumngrid

\begin{center}
  \textbf{\large Supplementary information for  ``The shape of memory in temporal networks"}\\[.2cm]
\end{center}

\setcounter{equation}{0}
\setcounter{figure}{0}
\setcounter{table}{0}
\setcounter{page}{1}
\renewcommand{\theequation}{S\arabic{equation}}
\renewcommand{\thefigure}{S\arabic{figure}}

\section{The memory of a time series}
\label{SI_ts}

\subsection{Random variables and entropies}

Let us consider a discrete random variable $X$ with sample space
$\mathcal{S}$ and probability mass function ${\mathbb P}(x)= {\rm Prob} \{ X=x \}$
with $x \in \mathcal{S}$. 
The entropy $H(X)$ of the random variable $X$ is defined in terms of the
probabilities ${\mathbb P}(x)$ of observing $x \in \mathcal{S}$ as: 
\begin{equation}
	H \left( X \right) = - \sum_{x} {\mathbb P} \left( x \right) \log {\mathbb P} \left( x \right).
	\label{shannon_ent}
\end{equation}
The definition of entropy can be extended to a pair or more discrete
random variables. 
Let $Y$ be a second discrete random variable with sample space
$\mathcal{S}'$. 
The joint entropy of the pair $X$, $Y$ is given by:
\begin{equation}
  H \left( X , Y \right) =  - \sum_{x,y} {\mathbb P} \left(x,y \right) \log {\mathbb P} \left( x ,y \right), 
\end{equation}
where ${\mathbb P}(x,y)= \rm{Prob} \left(X=x, Y=y \right)$  is the joint 
distribution of the two random variables. 
We can also define the entropy of the random variable $X$ when it is
conditioned on the second discrete random variable $Y$ as:   
\begin{equation}
  H \left( X | Y \right) =  \sum_{x} {\mathbb P}(x) H(X|Y=y) =
  - \sum_{x} {\mathbb P}(x)  \sum_{y} {\mathbb P}(y|x) \log p(y|x) =
  - \sum_{x,y} {\mathbb P} \left(x,y \right) \log {\mathbb P} \left( x | y \right).
\end{equation}

\subsection{Entropy and memory of a time series}

A time series ${\cal T}= \{X_t \}_{t=0,1,\ldots}$ or simply $\{X_t \}$ 
is a time-discrete stochastic 
process in which, at each time step $t$, with $t=0,1,2,\ldots$, 
the random variable $X_t$ takes values in state space ${\cal S}$.
We will indicate as $x_t$ the realization of random
variable $X_{t}$, i.e. the value taken at time $t$ by time series. 

The entropy rate $H$ of the time series $\mathcal{T}$ can be defined 
as:
\begin{equation}
	H \left( \mathcal{T} \right) = \lim_{n \to \infty} \frac{1}{n} H \left( X_0,...,X_n \right),
\end{equation}
and the conditional entropy of $\mathcal{T}$ as: 
\begin{equation}
	H' \left( \mathcal{T} \right) = \lim_{n \to \infty} H \left( X_n | X_{0},...,X_{n-1} \right) .
	\label{eqn:entropy}
\end{equation}
If the process $\mathcal{T}$ is strongly stationary, i.e. if its
joint probability distribution does not change when shifted
in time \cite{gagniuc2017markov}, 
then it can be proven that 
$H \left( \mathcal{T} \right) = H' \left( \mathcal{T} \right) $.  For
our purposes we will assume that this is always the case, allowing us
to study only conditional entropies. Here and in the following, for
the sake of simplicity, we introduce the following notation.  We
denote the sequence of random variables $X_0,X_1,.....,X_n$ as
$X_{0,n}$, and similarly for the realisations $x_0, x_1,..., x_n$ we
write $x_{0,n}$. Since $x_i \in {\mathcal S}~ \forall i,$ then we have $x_{0,n}\in
{\mathcal S}^{n+1}$. 

We then define the $n_{th}$ order block entropy $H_n$ of 
the process $\mathcal{T}$ as the 
entropy associated with the first $n+1$ random variables $X_0,\dots,X_n$: 
\begin{equation}
	H_n \left(\mathcal{T} \right) = - \sum_{x_{0,n}} {\mathbb P} \left( x_{0,n} \right) \log {\mathbb P}\left( x_{0,n} \right).
\end{equation}
Note that in particular $H_0$ coincides with the entropy of the marginal distribution of the first random variable $X_0$. Since $\mathcal{T}$ is stationary, $H_0$ is then the entropy associated with the marginal distribution of any of the random variables, in other words $H_0 = H(X_i), \forall i$.

Similarly,  for $n>0$, $H_n$ is the entropy of blocks of $n+1$ consecutive random variables (i.e. we are not required to consider the {\it first} $n+1$ random variables differently to any other set of $n+1$ consecutive random variables).
Of course, the entropy rate $H$ of the process $\mathcal{T}$ is then: 
\begin{equation}
 H \left( \mathcal{T} \right) = \lim_{n \to \infty} \frac{1}{n} 	H_n \left(\mathcal{T} \right) 
\end{equation}
Analogously, we can define the $n_{th}$ order conditional entropy $h_n$ as: 
\begin{eqnarray}
\begin{aligned}
	h_n \left(\mathcal{T}\right) =& H_n \left(\mathcal{T}\right) - H_{n-1} \left(\mathcal{T} \right)\\
	=& - \sum_{x_{0,n}} {\mathbb P} \left(x_{0,n} \right) \log {\mathbb P} \left(  x_n | x_{0,n-1} \right)\\
	=& - \sum_{x_{0,n}} {\mathbb P} \left(  x_n | x_{0,n-1} \right) {\mathbb P} \left(x_{0,n-1} \right) \log {\mathbb P} \left(  x_n | x_{0,n-1} \right).
\end{aligned}
\end{eqnarray}
so that: 
\begin{equation}
 H' \left( \mathcal{T} \right) = \lim_{n \to \infty} h_n \left(\mathcal{T} \right) 
\end{equation}

We are now ready to define the memory of the time series ${\mathcal
  T}$. Informally, the memory of ${\mathcal T}$ can be thought of as
the number of times steps into the past which have an influence on the
next observed value.  More formally, we can define the {\em memory length},
{\em memory order}, or simply {\em memory} $\Omega({\cal T})$ of stochastic
process $\mathcal{T} = \{ X_t \}$ as the order $p$ of the lowest-order
Markov chain that is able to reproduce the process, i.e. such that the
conditional probability mass functions satisfy:
\begin{equation}
  	\mathbb{P} ( x_t | x_0,x_1,\dots,x_{t-1}  ) = \mathbb{P}( x_t  | x_{t-p},\dots,x_{t-1})
	\label{gen_mem_def}
\end{equation}
for each $x_0,x_1,\dots,x_{t} \in {\cal S}^{t+1}$, or in compact form
$\mathbb{P} ( x_t | x_{0,{t-1}} ) = \mathbb{P}( x_t | x_{t-p,t-1})$
for each $x_{0,t} \in {\cal S}^{t+1}$.
This is equivalent to say that ${\mathcal T}$ can be identified as a $p$-th order
Markov chain, and we write:  $\Omega \left( \mathcal{T} \right) = p$: 
\begin{equation}
  \Omega ( {\mathcal T}  ) = \Omega ( \{ X_t  \}  ) 
  :=  \min_{p} \left [ p : \mathbb{P} ( x_t | x_{0,t-1} )
  = \mathbb{P} ( x_t | x_{t-p,t-1} ) \right] 
\end{equation}

\bigskip
It is now possible to relate $\Omega \left( \mathcal{T} \right)$ to 
the entropies we have introduced above. Observe first that $h_n$ is a 
monotonically non decreasing function of $n$, i.e. $h_n\geq h_{n-1} \forall n$.
Second, $h_n$ increases with $n$ until when $n$ 
is precisely equal to $p$, and remains constant thereafter,
i.e. $h_{p+d} ( \mathcal{T} ) = h_{p} (\mathcal{T} )$ for any positive
integer $d$. This can be easily demostrated, since  
for a $p_{th}$ order Markov chain $\mathcal{T}$, and for
$n = p + d$ with $d$ some positive integer, we can write:
\begin{eqnarray}
\begin{aligned}
	h_{p+d} \left( \mathcal{T} \right) =&  - \sum_{x_{0,p+d}} \mathbb{P} \left(  x_0 | x_{1,p+d} \right) \mathbb{P} \left(x_{1,p+d} \right) \log \mathbb{P} \left(  x_0 | x_{1,p+d} \right) \\
	=&  - \sum_{x_{0,p+d}} \mathbb{P} \left(  x_0 | x_{1,p} \right) \mathbb{P} \left(x_{1,p+d} \right) \log \mathbb{P} \left(  x_0 | x_{1,p} \right) \\
	=& -\sum_{x_{0,p}} \mathbb{P} \left(  x_0 | x_{1,p} \right)  \log \mathbb{P} \left(  x_0 | x_{1,p} \right) \sum_{x_{p+1,p+d}} \mathbb{P} \left( x_{1,p+d} \right) \\
	=& - \sum_{x_{0,p}} \mathbb{P} \left(  x_0 | x_{1,p} \right) \mathbb{P} \left(x_{1,p} \right) \log \mathbb{P} \left(  x_0 | x_{1,p} \right)\\
	=&  h_{p} \left( \mathcal{T} \right)
	\label{eqn:markov_ent}
\end{aligned}
\end{eqnarray}
These two conditions together mean that the $n_{th}$ order conditional entropy will be
at a maximum when we take $n=\Omega({\mathcal T})$, i.e. 
$\Omega({\mathcal T}) $ coincides with the minimum value of $n$ which maximises $h_n(\mathcal{T})$.

\noindent How should we compute the conditional entropy, and
  hence the memory, of a time series in practice? Ideally we would
need an infinitely long realisation of $\mathcal T$ to be able to
accurately estimate these {values}. For finite size time series, simply
maximising $h_n$ will not give a true estimate of the memory. Various
methods have been developed to overcome such a limitation, and thus
present consistent estimators for the order of a process when
observing a finite time series, which we will briefly discuss in
further sections.

\subsection{Estimating the memory}
We will now detail three examples of how to estimate the memory of a stochastic process from the information stored in finite time series.
These estimators all make use of the information theoretic framework that we have 
here detailed. The aim here is to overcome the problems associated with
only having a finite amount of data from which to estimate the memory via the
introduction of a ``penalty term".
Given an observed time series (i.e. a realisation of $\mathcal{T}$) with values
$(x_0,x_1,...,x_{T})$, where $x_k \in \mathcal{S}$ and the set has
$\left| \mathcal{S} \right| = m$ symbols, we start by to counting how many blocks of symbols of a given size are found in the time series. We label $n_{i_1,...i_s}$ as the total number of times a block of $s$ consecutive symbols with a specific arrangement for each of the $s$ entries appears in the time series, where $i_j \in {\mathcal S}, \ \forall j=1,\dots, s$. Specifically we have
\begin{equation}
	n_{i_1,...i_s} = \sum_{k=0}^{T-s+1} I(x_k = i_1 ,..., x_{k+s-1} = i_s), 
\end{equation}
where $I$ is the indicator function, so that $I(x_k = i_1 ,..., x_{k+s-1} = i_s) = 1$ 
when $ x_k = i_1 ,..., x_{k+s-1} = i_s$ is true, and zero otherwise.
We then define the following log likelihood function
\begin{equation}
	\log L_t = \sum_{i_1, ... ,i_{t+1}} n_{i_1, ... ,i_{t+1}} \log \frac{n_{i_1, ... ,i_{t+1}}}{n_{i_1, ... ,i_{t}}}.
\end{equation}
With this we can define the Akaike information criterion (AIC) \cite{tong1975determination}, Bayesian 
information criterion (BIC) \cite{schwarz1978estimating}, and 
the optimal form of the efficient determination criterion (EDC) \cite{dorea2014simulation,zhao2001determination}, 
as follows:
\begin{eqnarray}
	&&\text{AIC}(k) = -2 \log L_k + 2 m^k (m-1), \\
	&&\text{BIC}(k) = -2 \log L_k + m^k (m-1) \log T, \\
	&&\text{EDC}(k) = -2 \log L_k + 2 m^{k+1} \log \log T.
	\label{estimator_functions}
\end{eqnarray} 
Given some upper bound $K$ on the order of the time series
we define the corresponding estimators as
\begin{eqnarray}
	p_{\text{AIC}} = \argmin_{0 \leq k \leq K} \text{AIC}(k), \\
	p_{\text{BIC}} = \argmin_{0 \leq k \leq K} \text{BIC}(k), \\
	p_{\text{EDC}} = \argmin_{0 \leq k \leq K} \text{EDC}(k).
	\label{esitmator_values}
\end{eqnarray}
This gives us three alternatives for the estimation of the
true order of the observed process.
While the AIC and, to a lesser extent the BIC, are 
commonly used to estimate the order of a Markov chain
given observed data, the EDC has been shown to be, in some sense,
optimal. 
By optimal we mean that it is the strongly consistent estimator with the
fastest convergence to the true order.
It should be noted that the AIC, while the most popular,
is not a consistent estimator, but is included here for completeness.
In light of this, when estimating the order of a given stochastic process from the observation of a time series realisation, in this work
we will always use the EDC {(for completeness, we remind that there are also other possible alternatives, as the memory can be estimated in practice via different approaches, each with their own advantages}
\cite{tong1975determination,van1998testing,katz1981some,zhao2001determination,papapetrou2016markov,schwarz1978estimating}, but in this work we stick with the EDC for what said above).\\

\noindent So far all that is written above is well-known; in what follows we provide details of our proposed framework to extend the concept of memory to temporal networks.

\section{Defining and quantifying the memory of a temporal network}

\subsection{The scalar memory $\Omega(\mathcal{G})$ }

A temporal network $\mathcal{G}$ is formally defined by the
stochastic processes that generate the time evolution of its nodes and
links. For simplicity, we assume here that the set of nodes is fixed,
so that only the links of the network can change over time. We
indicate the number of nodes with $N$, and we label with the index
$\alpha \in \{1,2,\ldots L\}$ each of the $L$ node pairs that
can be connected over time.
In the most general case $L=N(N-1)/2$, however smaller
values of $L$ are enough for the adequate description of the system
if further topological restrictions are imposed
to the backbone of the temporal network. 
A temporal network $\mathcal{G}$ over the $N$ nodes can then be
  written as a set of $L$ {discrete-time} stochastic processes
  $\mathcal{G} = \{ {\mathcal{E} }^{\alpha} \}^{\alpha=1,2,\ldots,L} $. 
  For each link $\alpha$, with $\alpha=1,2,\ldots,L$,
${\mathcal{E} }^{\alpha}= \{ E^{\alpha}_t \}_{t=1,2,\ldots} $
  is the stochastic process which describes its dynamics. From now on, we
  will therefore indicate the network as {either
  $\mathcal{G} = \{ {\mathcal{E} }^{\alpha} \}^{\alpha=1,2,\ldots,L} $, }
  $\mathcal{G} = 
  \{ E_t^{\alpha} \}_{t=1,2,\ldots}^{\alpha=1,2,\ldots,L} $ or simply as
$\mathcal{G} = \{ {E}_t^{\alpha} \}$.\\  

\noindent We indicate as $e^{\alpha}_t$ the
value taken by the stochastic variable $E^{\alpha}_t$. 
For each $\alpha$ and each
$t$, $e^{\alpha}_t$ can only assume the value $1$, if link $\alpha$ is
present at time $t$, or $0$ otherwise. With  
${\bf e}_t  = e^1_t, e^2_t,\ldots, e^L_t $ we indicate the set of values taken by 
the $L$ stochastic variables at time $t$. In this way ${\bf e}_t$ completely
characterizes the state of the graph at time $t$. 

\noindent We can introduce a first definition of the memory of a temporal
network
$\mathcal{G} = \{ {E}_t^{\alpha} \}$
by direct
analogy to the case of a scalar time series discussed in Section \ref{SI_ts}. The
{\em scalar memory order}, or {\em scalar memory length}, or simply {\em scalar memory}
$\Omega(\mathcal{G})$ of the temporal network $\mathcal{G}$ can be defined
as the order $p$ of the lowest-order Markov chain able to reproduce
the process, i.e. the minimum value of $p$ such that: 
\begin{equation}
  \mathbb{P} ( {\bf e}_t | {\bf e}_0,{\bf e}_1,\dots,{\bf e}_{t-1}  ) = \mathbb{P}( {\bf e}_t  | {\bf e}_{t-p},\dots,{\bf e}_{t-1})
	\label{temp_net_mem_def}
\end{equation}
for each ${\bf e}_0,{\bf e}_1,\dots,{\bf e}_{t}$.
In compact form, this can be written as  
$\mathbb{P} ( {\bf e}_t | {\bf e}_{0,{t-1}} ) = \mathbb{P}( {\bf e}_t | {\bf e}_{t-p,t-1})$
for each ${\bf e}_{0,t}$, where by ${\bf e}_{0,t}$ we indicate
${\bf e}_0,{\bf e}_1,\dots,{\bf e}_{t}$. We can finally write:  
\begin{equation}
  \Omega ( \mathcal{G} )  =
   \Omega ( \{ {E}_t^{\alpha} \})
  :=  \min_{p} \left[ p :
\mathbb{P} ( {\bf e}_t | {\bf e}_{0,{t-1}} ) = \mathbb{P}( {\bf e}_t | {\bf e}_{t-p,t-1})
        \right]
\end{equation}
Since the space of every possible graph with $N$ nodes is finite, it
is in principle possible to enumerate all these graphs, build an alphabet 
accordingly, and transform a realization $\mathcal{G}$ into a time series of symbols
from this alphabet, from which the same methods to extract the memory 
order of a scalar time series can be used.

\subsection{The shape of the memory: the co-memory matrix $\mathbb{M}$
  and the effective memory $\Omega_{\text{\text{eff}}}(\mathcal{G})$}

However, since a temporal network is the set of stochastic processes
describing the dynamics of its $L$ links, together with the memory
order $\Omega(\mathcal{G})$ of the network as a whole, we may want
also to characterize the memory of each link separately, or even the
memory in the influence between two links.  Such a profile of the
memory at the microscopic levels of the links of the network is what
we will name the {\em shape of the memory} of the temporal network.
It is therefore convenient to introduce a novel concept to capture
the length of the memory of the mutual influence of two different
links. In order to do this, let us consider two stationary time series 
${\cal X} = \{X_t \}$ and ${\cal Y} = \{Y_t \}$. We can define 
the {\em memory co-order}, or simply {\em co-memory}
$\Omega( {\cal X}  || {\cal Y} )=\Omega(\{ X_t \} ||\{ Y_t \})$
of the random
process $\{ X_t \}$ with respect to $\{ Y_t \}$, as the
furthest point in the past history of $\{ Y_t \}$ which has influence
on the value taken by $X_t$ at time $t$, i.e. as the lowest value of $p$
such that $\mathbb{P} ( x_t | y_{t-1,t-p} ) = \mathbb{P} ( x_t | y_{t-1,-\infty} ) $
where $y_{0,t}$ is shorthand for the sequence $y_{0},y_{1},...,y_{t}$. We can
thus write:
\begin{equation}
  \Omega( {\cal X}  || {\cal Y} ) = \Omega (\{ X_t \}  ||\{ Y_t \}) :=  \min_{p} \left[ p : \mathbb{P} ( x_t | y_{0,t-1} )
  = \mathbb{P} ( x_t | y_{t-p,t-1} ) \right], 
\end{equation}
\noindent
for each $y_{0,t-1}$ and each $x_t$.  
Since we consider processes to be stationary, this value is invariant of $t$.
Note that if the two processes ${\cal X}$ and ${\cal Y}$ are the same, then
$\Omega ( {\cal X} || {\cal Y} ) = \Omega ( {\cal X} || {\cal X} )
= \Omega( {\cal X})$.

\bigskip
We can now define the {\em memory co-order} or simply
{\em co-memory} $\Omega ( {\mathcal E}^\alpha ||  {\mathcal E}^\beta)$ 
of the random process 
$  {\mathcal E}^\alpha =  \{ E^{\alpha}_t \}_{t=1,2,\ldots}$,
representing edge $\alpha$, with respect to a second
random process $  {\mathcal E}^\beta =  \{ E^{\beta}_t \}_{t=1,2,\ldots}$,
representing edge $\beta$, as the furthest point in
the past of $E_t^{\beta}$ which has influence on the future evolution of $E_t^{\alpha}$:
\begin{equation}
	\Omega ( {\mathcal E}^\alpha ||  {\mathcal E}^\beta)=
        \Omega ( \{ E^{\alpha}_t \}_{t=1,2,\ldots} ||  \{ E^{\beta}_t \}_{t=1,2,\ldots} )
    :=  \min_{p} \left[ p : \mathbb{P} ( e_t^{\alpha} | e_{t-1,t-p}^{\beta} ) = \mathbb{P}( e_t^{\alpha} | e_{t-1,-\infty}^{\beta} ) \right], 
\end{equation}
where the subscript $e_{t-1,t-n}^{\beta}$ is shorthand for the sequence
$e_{t-1}^{\beta},e_{t-2}^{\beta},...,e_{t-n}^{\beta}$.\\
The memory co-order can be evaluated for each couple of links $\alpha$ and $\beta$,
so that the whole memory shape of the temporal network can be fully
characterised by the {\em co-memory matrix} $\mathbb{M}$, a $L \times L$ 
matrix with entries  
$\mathbb{M}_{\alpha \beta} = \Omega (  {\mathcal E}^\alpha ||  {\mathcal E}^\beta) $,
which accounts for the set of all possible co-orders.
Notice that, by construction, the memory co-order of a link $\alpha$
with itself is precisely the memory order of the link, that is
{ $ \Omega (  {\mathcal E}^\alpha ||  {\mathcal E}^\alpha) 
  = \Omega(\{ E_t^{\alpha} \}_t )$}, so that matrix  $\mathbb{M}$ contains in the
diagonal entries information on the memory present in the evolution
of each link considered as an independent dynamical process from the rest
of the network.\\ 

\noindent Based on matrix $\mathbb{M}$ it is now possible to introduce another 
scalar projection of the network's memory, which we label {\em effective memory} 
$\Omega_{\text{eff}}(\mathcal{G})$, as the
maximum value of the co-orders over the network link pairs:
\begin{equation}
  {	\Omega_{\text{eff}}(\mathcal{G}) := \max_{\alpha,\beta} \left[
\Omega(  {\mathcal E}^{\alpha} ||  {\mathcal E}^{\beta}) \right]
}
\end{equation}

It is easy to prove the following theorems, which relate the effective
memory to the memory of the network as a whole. 

\begin{theorem}

Given a temporal network $\mathcal{G}$ with $L$ edges and edge
processes $\{ \mathcal{E}^{\alpha} \}$ for $\alpha = 0,...,L$, we have: 
\begin{equation}
	\Omega(\mathcal{G}) \leq \max_{i,j} (\Omega( \mathcal{E}^{\alpha} || \mathcal{E}^{\beta})) =: \Omega_{\text{\text{eff}}}(\mathcal{G}),
	\label{SI:mem_bound_eqn}
\end{equation}
where $\Omega(\mathcal{G})$ is the memory of the temporal network, 
$\Omega( \mathcal{E}^{\alpha} || \mathcal{E}^{\beta})$ are the memory co-orders and
$\Omega_{\text{eff}}(\mathcal{G})$ is the effective memory.
\end{theorem}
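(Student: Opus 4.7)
The proof plan starts by observing that, since $\Omega(\mathcal{G})$ is by definition the smallest integer $p$ such that the joint Markov identity
\begin{equation}
\mathbb{P}(\mathbf{e}_t\,|\,\mathbf{e}_{0,t-1}) = \mathbb{P}(\mathbf{e}_t\,|\,\mathbf{e}_{t-p,t-1})
\end{equation}
holds for every admissible history, it suffices to exhibit this identity at $p = p^{\star} := \Omega_{\text{eff}}(\mathcal{G})$. The advantage of this reformulation is that it turns the inequality into a verification task on the joint conditional distribution. First I would unpack what the maximum defining $p^{\star}$ guarantees for each pair of links $(\alpha,\beta)$: namely that $\mathbb{P}(e^{\alpha}_t \,|\, e^{\beta}_{0,t-1}) = \mathbb{P}(e^{\alpha}_t \,|\, e^{\beta}_{t-p^{\star},t-1})$, a purely marginal equality obtained after integrating out the other $L-2$ link histories.

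The main step is to lift this family of pairwise marginal equalities to the joint equality. I would decompose the joint conditional via the chain rule across links,
\begin{equation}
\mathbb{P}(\mathbf{e}_t \,|\, \mathbf{e}_{0,t-1}) = \prod_{\alpha=1}^{L} \mathbb{P}( e^{\alpha}_t \,|\, \mathbf{e}_{0,t-1},\, e^{1}_t,\ldots, e^{\alpha-1}_t ),
\end{equation}
and show factor by factor that the conditioning history collapses from $\mathbf{e}_{0,t-1}$ down to $\mathbf{e}_{t-p^{\star},t-1}$. To handle each factor I would work with the entropic characterisation developed in Section~\ref{SI_ts}, rewriting $\Omega(\mathcal{E}^{\alpha}||\mathcal{E}^{\beta}) \le p^{\star}$ as a saturation statement on the cross-conditional block entropies $h_n$. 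The chain rule for conditional entropies, together with monotonicity under additional conditioning, then allows the pairwise saturation statements to be combined into a joint saturation of $h_n$ at $n = p^{\star}$, which by the results in Section~\ref{SI_ts} is equivalent to the required Markov identity.

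The hard part will be precisely this bridge from marginal co-memory to joint conditional. The pairwise bound $\Omega(\mathcal{E}^{\alpha}||\mathcal{E}^{\beta}) \le p^{\star}$ is defined after marginalising over the remaining $L-2$ processes, whereas the chain-rule decomposition requires control when all those processes are held fixed simultaneously. I expect the information-theoretic formulation to absorb this gap, since entropic inequalities behave well under both marginalisation and adjoining of further conditioning variables; nevertheless, care will be required to verify that the maximum over pairs, rather than some sum, is the correct aggregator. This is also the step where the virtual loops discussed in the main text will play a subtle role, since they are precisely the mechanism by which the bound can be saturated rather than strict, so the argument must be tight enough to recover equality in the cases where $\Omega(\mathcal{G})$ coincides with $\Omega_{\text{eff}}(\mathcal{G})$.
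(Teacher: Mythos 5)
Your plan follows the same route as the paper's proof: characterise the memory via the conditional block entropies $h_n(\mathcal{G})$, factorise the joint conditional over links by the chain rule, and show that $h_n(\mathcal{G})=h_{p^{\star}}(\mathcal{G})$ for all $n\ge p^{\star}:=\Omega_{\text{eff}}(\mathcal{G})$. You have also correctly located the crux: the co-order $\Omega(\mathcal{E}^{\alpha}\,\Vert\,\mathcal{E}^{\beta})$ is defined through the \emph{marginal} conditional $\mathbb{P}(e^{\alpha}_t\mid e^{\beta}_{t-1,t-p})$, obtained after integrating out the other $L-2$ links, whereas the chain-rule factors $\mathbb{P}\bigl(e^{\alpha}_t\mid \mathbf{e}_{0,t-1},e^{1}_t,\dots,e^{\alpha-1}_t\bigr)$ condition on all histories simultaneously.

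However, the proposal does not close this gap, and the heuristic you offer for closing it points in the wrong direction. Adjoining further conditioning variables can \emph{create} dependence that is invisible in every pairwise marginal: if, say, $E^{3}_t=E^{1}_{t-p}\oplus E^{2}_{t-p}$ with $E^{1},E^{2}$ i.i.d.\ fair Bernoulli sequences, then every pairwise co-order (including each diagonal one) vanishes, while the joint conditional for the network still depends on lag $p$. So the maximum over pairs cannot be promoted to a joint truncation by entropy monotonicity or marginalisation arguments alone; some structural assumption on how the links are coupled (e.g.\ the factorised generative form of the DARN-type models, where each link's conditional given the full past is an explicit function of at most $p^{\star}$ lags) is needed. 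For comparison, the paper's own proof crosses this bridge silently: in the displayed computation it rewrites each chain-rule factor with the conditioning already truncated to the last $p^{\star}$ steps, i.e.\ it asserts the joint truncation rather than deriving it from the pairwise definition. To make your argument complete you must either supply that derivation under an explicit hypothesis on the link coupling, or work with a notion of co-order that conditions on the full joint history rather than on one link's history at a time.
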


\proof{It is useful to introduce the following compact notation:
  $E^{1,L}_{0,n} \equiv E^1_{0,n},..., E^L_{0,n} $ 
  and $e^{1,L}_{0,n} = e^1_{0,n},...,e^L_{0,n} $.
  The $n_{th}$ order conditional entropy $h_n (\mathcal{G} ) = h_n ( \{ {E}_t^{\alpha} \})$ 
can be written as: 
\begin{equation}
	h_n (\mathcal{G}) = - \sum_{e_{0,n}^{1,L}} \mathbb{P}(e_{0,n}^{1,L}) \log \frac{\mathbb{P}(e_{0,n}^{1,L})}{\mathbb{P}(e_{1,n}^{1,L})}.
\end{equation}
Expanding the joint probabilities in terms of conditional
probabilities over the different links: 
\begin{equation}
	\mathbb{P}(e_{0,n}^{1,L}) = \prod_{\alpha=1}^{L} \mathbb{P}(e_{0,n}^{\alpha} | e_{0,n}^{1,\alpha-1} )
\end{equation}
we can now write an expression for $h_n (\mathcal{G} )$ in
terms of the contributions from the different links:
\begin{eqnarray}
\begin{aligned}
	h_n (\mathcal{G}) =&  - \sum_{e_{0,n}^{1,L}} \prod_{\gamma=1}^{L} \mathbb{P}(e_{0,n}^{\gamma} | e_{0,n}^{1,\gamma-1} ) \sum_{\alpha=1}^{L} \log \frac{\mathbb{P}(e_{0,n}^{\alpha} | e_{0,n}^{1,\alpha-1} )}{\mathbb{P}(e_{1,n}^{\alpha} | e_{1,n}^{1,\alpha-1} )}\\
	=& - \sum_{\alpha=1}^{L} \sum_{e_{0,n}^{1,L}} \mathbb{P}(e_{0,n}^{\alpha} | e_{0,n}^{1,\alpha-1} ) \log \frac{\mathbb{P}(e_{0,n}^{\alpha} | e_{0,n}^{1,\alpha-1} )}{\mathbb{P}(e_{1,n}^{\alpha} | e_{1,n}^{1,\alpha-1} )} \prod_{\gamma \neq \alpha} \mathbb{P}(e_{0,n}^{\gamma} | e_{0,n}^{1,\gamma-1} ).
\label{eqn:gen_ent}
\end{aligned}
\end{eqnarray}
We now note that the 
memory $p$ must be consistent with equations \ref{gen_mem_def} and \ref{temp_net_mem_def}
and hence must be the minimum value of $n$ which maximises $h_n (\mathcal{G})$.
Hence, let us define our prospective effective memory $p$, which we will then show to be an upper bound, as: 
\begin{equation}
	p = \max_{\alpha,\beta} \left[ \Omega(\mathcal{E}^{\alpha} || \mathcal{E}^{\beta} )  \right].
	\label{eqn:co_ord_def}
\end{equation}
To test that this value of $p$ is indeed a point for which
we obtain the maximum value for our conditional entropy
 we take a value of $ n \geq p$ so that $n = p + d$. 
 The entropy is then given by
\begin{eqnarray}
\begin{aligned}
	h_n (\mathcal{G}) =&  - \sum_{\alpha=1}^{L} \sum_{e_{0,n}^{1,L}} \mathbb{P}(e_{0}^{\alpha} | e_{1,p}^{\alpha}, e_{0,p}^{1,\alpha-1} ) \mathbb{P}( e_{1,n}^{\alpha} ) \log \frac{ \mathbb{P}(e_{0}^{1,\alpha} | e_{1,p}^{1,\alpha})}{ \mathbb{P}(e_{0}^{1,\alpha-1} | e_{1,p}^{1,\alpha-1})} \prod_{\gamma \neq \alpha} \mathbb{P}(e_{0}^{\gamma} | e_{1,p}^{\gamma} , e_{0,p}^{1,\gamma-1} ) \mathbb{P}(e_{1,n}^{\gamma}), \\
	=& h_p (\mathcal{G}) \sum_{e_{p+1,p+d}^{1,L}} \mathbb{P}(e_{p+1,p+d}^{\alpha}) \prod_{\gamma \neq \alpha} \mathbb{P}(e_{p+1,p+d}^{\gamma}), \\
	=& h_p (\mathcal{G}) \prod_{\alpha} \sum_{e_{p+1,p+d}^{\alpha}} \mathbb{P}(e_{p+1,p+d}^{\alpha}), \\
	=& h_p (\mathcal{G}).
\end{aligned}
\end{eqnarray}
This shows us that 
$h_n (\mathcal{G}) = h_p (\mathcal{G})$, and so $\Omega(\mathcal{G}) \leq p$.
Since we defined $p$ to be the maximum value of the possible co-orders of the 
network, and hence the effective memory, we have now proved that 
$\Omega(\mathcal{G}) \leq \max_{\alpha,\beta} \left[ \Omega( \mathcal{E}^{\alpha} || \mathcal{E}^{\beta}) \right] =: \Omega_{\text{eff}}(\mathcal{G})$.
That is to say, the memory of the temporal network 
is bounded above by the furthest time into the past of any link
that has influence on the evolution of any other link.\qed}\\

\begin{corollary}
  If for every pair of links $\alpha$ and $\beta$, with $\alpha \neq \beta$,
  the two stochastic processes $ \mathcal{E}^{\alpha}$ and $\mathcal{E}^{\beta} $ are
  independent, 
  i.e. {$\forall \tau \, \mathbb{P}(e_{1,\tau}^{\alpha} , e_{1,\tau}^{\beta}) = \mathbb{P}(e_{1,\tau}^{\alpha})
  \mathbb{P}(e_{1,\tau}^{\beta})$,} 
  then
\begin{equation}
	\Omega(\mathcal{G}) = \Omega_{\text{eff}}(\mathcal{G}).
\end{equation}
\label{corr:independent_mem}
\end{corollary}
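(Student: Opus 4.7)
The plan is to establish the two inequalities separately and combine them. Theorem 1 already gives $\Omega(\mathcal{G}) \le \Omega_{\text{eff}}(\mathcal{G})$, so I only need to prove the reverse inequality $\Omega(\mathcal{G}) \ge \Omega_{\text{eff}}(\mathcal{G})$ under the independence hypothesis.

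First, I would unpack what independence does to the off-diagonal entries of $\mathbb{M}$. If $\mathcal{E}^{\alpha}$ and $\mathcal{E}^{\beta}$ are independent for $\alpha\neq\beta$, then $\mathbb{P}(e_t^{\alpha}\,|\,e_{t-1,-\infty}^{\beta}) = \mathbb{P}(e_t^{\alpha})$, so the past of $\beta$ carries no information about $e_t^{\alpha}$ and therefore $\Omega(\mathcal{E}^{\alpha}\,||\,\mathcal{E}^{\beta})=0$. Consequently the maximum in the definition of $\Omega_{\text{eff}}(\mathcal{G})$ is attained on the diagonal, giving
\begin{equation}
\Omega_{\text{eff}}(\mathcal{G}) \;=\; \max_{\alpha}\Omega(\mathcal{E}^{\alpha}\,||\,\mathcal{E}^{\alpha}) \;=\; \max_{\alpha}\Omega(\mathcal{E}^{\alpha}).
\end{equation}
Let $p^{\star}:=\max_{\alpha}\Omega(\mathcal{E}^{\alpha})$ and fix an index $\alpha^{\star}$ achieving this maximum.

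Next I would show that the joint memory is bounded below by $p^{\star}$. Because the $L$ processes are mutually independent and stationary, the joint conditional distribution factorises:
\begin{equation}
\mathbb{P}(\mathbf{e}_t\,|\,\mathbf{e}_{0,t-1}) \;=\; \prod_{\alpha=1}^{L}\mathbb{P}(e_t^{\alpha}\,|\,e_{0,t-1}^{\alpha}),
\end{equation}
and similarly, for any candidate order $q$, $\mathbb{P}(\mathbf{e}_t\,|\,\mathbf{e}_{t-q,t-1})=\prod_{\alpha}\mathbb{P}(e_t^{\alpha}\,|\,e_{t-q,t-1}^{\alpha})$. If these two products agreed for every configuration, then by marginalising out all coordinates except $\alpha^{\star}$ one would obtain $\mathbb{P}(e_t^{\alpha^{\star}}\,|\,e_{0,t-1}^{\alpha^{\star}})=\mathbb{P}(e_t^{\alpha^{\star}}\,|\,e_{t-q,t-1}^{\alpha^{\star}})$, which by definition of $\Omega(\mathcal{E}^{\alpha^{\star}})=p^{\star}$ forces $q\ge p^{\star}$. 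Taking $q=\Omega(\mathcal{G})$ then yields $\Omega(\mathcal{G})\ge p^{\star}=\Omega_{\text{eff}}(\mathcal{G})$.

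Combining this with Theorem 1 gives equality. The only step that needs a little care is the marginalisation argument: one has to argue that if the two factorised joint distributions coincide on every $\mathbf{e}_{0,t-1}$, then every single factor must coincide on every $e_{0,t-1}^{\alpha}$, not merely on average. This is the main obstacle, but it follows cleanly from independence by choosing conditioning histories of the other links on which the corresponding marginal factors are strictly positive and then cancelling them, yielding the equality of the $\alpha^{\star}$-th conditional distributions. With that observation in place, the proof closes in a few lines.
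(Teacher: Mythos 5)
Your proposal is correct, but it follows a genuinely different route from the paper's. The paper proves the corollary at the level of entropies: it observes that independence turns Eq.~\eqref{eqn:gen_ent} into the additive decomposition $h_n(\mathcal{G})=\sum_{\alpha}h_n(\mathcal{E}^{\alpha})$, and then invokes the characterisation of the memory order as the minimal $n$ at which the conditional entropy saturates (Eq.~\eqref{eqn:markov_ent}), so that the sum saturates exactly at $\max_{\alpha}\Omega(\mathcal{E}^{\alpha})$; finally it notes, as you do, that independence kills the off-diagonal co-orders so this maximum is $\Omega_{\text{eff}}(\mathcal{G})$. You instead work directly with the conditional distributions: you factorise $\mathbb{P}(\mathbf{e}_t\,|\,\mathbf{e}_{0,t-1})=\prod_{\alpha}\mathbb{P}(e_t^{\alpha}\,|\,e_{0,t-1}^{\alpha})$, and recover the single-link conditional for $\alpha^{\star}$ by marginalising out the time-$t$ states of the other links (summing each remaining factor to one is in fact cleaner than the cancellation you sketch, since the factors on the two sides need not agree a priori and positivity is then irrelevant). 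Your argument is more elementary and self-contained: it does not rely on the entropy machinery, which the paper establishes only semi-formally, and it isolates the lower bound $\Omega(\mathcal{G})\ge\max_{\alpha}\Omega(\mathcal{E}^{\alpha})$ explicitly rather than obtaining equality in one stroke. The paper's version is shorter given the apparatus already built for Theorem~1. One shared caveat: both proofs use the full factorisation of the joint law over all $L$ links, which requires mutual independence, whereas the hypothesis as stated is only pairwise; this is a feature of the paper's formulation rather than a defect of your argument.
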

\proof{The proof can be obtained by looking at Eq.~\ref{eqn:gen_ent}
  directly. When links are independent, the conditional probabilities
  {become} $ \mathbb{P}(e_{0,n}^{\alpha} |
  e_{0,n}^{1,\alpha-1} ) = \mathbb{P}(e_{0,n}^{\alpha})$. This allows
  us to write: 
\begin{equation}
	h_n (\mathcal{G}) = \sum_{\alpha=1}^{L} h_n( \mathcal{E}^{\alpha}).
\end{equation} 
Clearly from this, combined with Eq.~\ref{eqn:markov_ent}, we must have
$\Omega(\mathcal{G}) = \max_{\alpha} \left[ \Omega(\mathcal{E}^{\alpha})\right]$.
But since links are independent,  
$\Omega(\mathcal{E}^{\alpha} || \mathcal{E}^{\beta}) = \Omega(\mathcal{E}^{\alpha} ||
\mathcal{E}^{\beta} ) \delta_{\alpha, \beta}$, and so:
\begin{equation}
 \Omega(\mathcal{G})= \max_{\alpha} \left[ \mathcal{E}^{\alpha} \right] = \max_{\alpha,\beta} \left[ \Omega(\mathcal{E}^{\alpha} || \mathcal{E}^{\beta}) \right] =: \Omega_{\text{eff}}(\mathcal{G}),
\end{equation}
{exactly as we aimed} to prove.\qed}\\

\begin{corollary}
  In the most general case of a temporal network $\mathcal{G}$, 
  the memories of its links as given by the values $\Omega( \mathcal{E}^{\alpha})$,
  with $\alpha=1,\ldots,L$, do not provide an upper or lower bound for
  $\Omega(\mathcal{G})$. That is to say, in general: 
\begin{eqnarray}
	\Omega(\mathcal{G})  \nleq \max_{\alpha} \left[ \Omega(\mathcal{E}^{\alpha}) \right], \\
	\Omega(\mathcal{G})  \ngeq \max_{\alpha} \left[ \Omega(\mathcal{E}^{\alpha}) \right],
\end{eqnarray}
except in special cases, such as the one considered in the previous corollary.
\end{corollary}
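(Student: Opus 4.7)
The approach is to exhibit two explicit counterexamples, one refuting each putative bound, after which the corollary follows immediately. First, to show that $\Omega(\mathcal{G})$ can strictly exceed $\max_\alpha \Omega(\mathcal{E}^\alpha)$, I would construct a two-link network whose marginal link processes are memoryless but whose joint dynamics is not. A clean choice is the XOR construction: let $\{E^1_t\}$ be i.i.d.\ Bernoulli$(1/2)$ and set $E^2_t := E^1_{t-1} \oplus E^1_{t-2}$. Each marginal process is i.i.d.\ Bernoulli$(1/2)$, so $\Omega(\mathcal{E}^1) = \Omega(\mathcal{E}^2) = 0$. But in the joint process, conditioning on $(e^1_{t-1}, e^2_{t-1})$ and $(e^1_{t-2}, e^2_{t-2})$ pins $e^2_t = e^1_{t-1} \oplus e^1_{t-2}$ deterministically while one lag alone does not, yielding $\Omega(\mathcal{G}) = 2 > 0$.

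Conversely, to show that $\Omega(\mathcal{G})$ can be strictly smaller than $\max_\alpha \Omega(\mathcal{E}^\alpha)$, I would reuse the two-link toy model of Fig.~\ref{fig:toy_si_spreading}, in which link 1 may copy $E^2_{t-p_1}$ and link 2 may copy $E^1_{t-p_2}$. The analysis there already gives $\Omega(\mathcal{G}) = \max\{p_1, p_2\}$, while the mutual-copying causal loop forces each diagonal entry of $\mathbb{M}$, which by construction is precisely $\Omega(\mathcal{E}^\alpha)$, to equal $p_1 + p_2$. For any $p_1, p_2 \ge 1$ this gives $\max_\alpha \Omega(\mathcal{E}^\alpha) = p_1 + p_2 > \max\{p_1, p_2\} = \Omega(\mathcal{G})$, completing this direction.

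The only delicate step is the rigorous verification that $\Omega(\mathcal{E}^2) = 0$ in the XOR construction, i.e.\ that uniformity of $E^2_t$ survives conditioning on \emph{arbitrarily long} histories of past $E^2$ values and not merely on a single lag. The clean way is to observe that $E^1_{t-1}$ does not appear in any $E^2_s$ with $s < t$ (those involve only $E^1$ at times $\le t-2$), so $E^1_{t-1}$ is independent of the $\sigma$-algebra generated by the past of $E^2$; since $E^2_t = E^1_{t-1} \oplus E^1_{t-2}$ is the XOR of this independent fair coin with something, the conditional distribution remains Bernoulli$(1/2)$ for every prefix. Once both counterexamples are in hand, the two non-inequalities of the corollary follow at once.
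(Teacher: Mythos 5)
Your proposal is correct and follows the same overall strategy as the paper: exhibit one two-link counterexample for each direction. Your second counterexample is essentially identical to the paper's own second example --- the mutual-copying toy model in which $\Omega(\mathcal{G})=\max\{p_1,p_2\}$ while the substitution $E^1_t = Q^1_t\bigl(Q^2_t E^1_{t-(p_1+p_2)} + (1-Q^2_t)Y^2_t\bigr) + (1-Q^1_t)Y^1_t$ shows that each link's own marginal order is $p_1+p_2$ (the paper takes $p_1=p_2=p$ and the uniform-lag variant, but the argument is the same). Your first counterexample, however, is a genuinely different construction: the paper makes link $2$ i.i.d.\ Bernoulli and lets link $1$ copy a uniformly random past state of link $2$, so that both marginals are memoryless for the ``soft'' probabilistic reason that copying from an i.i.d.\ source leaves the copier's own past uninformative; your XOR process $E^2_t = E^1_{t-1}\oplus E^1_{t-2}$ achieves the same conclusion by a deterministic, combinatorial route, and your independence argument (that $E^1_{t-1}$ is a fair coin independent of the $\sigma$-algebra generated by $E^2_{s}$, $s<t$, so the XOR remains uniform under any conditioning on link $2$'s own history) is airtight and arguably cleaner than the paper's, which asserts $\Omega(\{E^1_t\})=0$ without spelling out the analogous verification. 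The trade-off is that the paper's example stays within the DAR-type family used throughout the rest of the work, whereas yours requires stepping outside it; both are perfectly valid, and either suffices to establish the two non-inequalities.
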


\proof{It is sufficient to provide two examples:  
one in which the maximum memory of the links is less than the memory of the network,
and one in which it is greater.

\noindent First we look at a case in which the maximum link memory is
less than the network memory. Consider $\mathcal{G}$ formed by two
links ruled by the stochastic processes $\{ E^1_t \}$ and $\{ E^2_t
\}$.  Now assume $E^2_t \sim Bernoulli(y)$ and $E^1_t$ is drawn from a
modified DAR($p$) process so that $E^1_t = Q_t E^2_{t-Z_t} +
(1-Q_t)Y_t$, where $Q_t \sim Bernoulli(q)$, $Y_t \sim Bernoulli(y)$
(with the same value of $y$ as $E_t^2$) and $Z_t \sim
Uniform(1,p)$, for some values of $q$ and $p$. Then $\Omega(\{ E_t^1 \}) =
0$ and $\Omega(\{ E_t^2 \}) = 0$, even though $\Omega(\mathcal{G}) = p$.
 
\noindent Second, we consider a case in which the maximum link memory is greater
than the network memory. Consider again $\mathcal{G}$ with 
two links ruled by $\{E^1_t\}$ and $\{E^2_t\}$. Assume now  
$E^1_t = Q_t E^2_{t-Z_t} + (1-Q_t)Y_t$ and $E^2_t = Q_t E^1_{t-Z_t} + (1-Q_t)Y_t$,
where $Q_t \sim Bernoulli(q)$, $Y_t \sim Bernoulli(y)$. 
As in the previous case the memory of the network is $p$. however, if we
substitute our expression for $E_t^2$ into our formula for $E^1_t$, then 
we obtain
$E_t^1 = Q_t^1 (Q_t^2 E_{t-(Z^1_t + Z^2_t)}^1 + (1-Q_t^2) Y_t^2) + (1-Q_t^1) Y_t^1$. 
Clearly then $\Omega(\{ E_t^1 \}) = 2p$, which is greater than
$\Omega(\mathcal{G})$. We will further explore the reasoning
for this in later sections.
\qed}\\

\noindent {Conceptually speaking, observe that the effective memory is similar to Granger causality
in that it considers the influence of past states of a process on the
present evolution of another process \cite{granger1969investigating}.
However, they are not the same. Indeed, in the toy models that we
introduce later on to demonstrate the cases where the scalar memory $\Omega\left(
\mathcal{G} \right)$ does not capture the influence of memory on
spreading processes, it can also be seen that Granger causality
suffers from the same issue, as here the extent of any causality
between the two links is precisely the scalar memory.}

\subsection{Estimating the co-memory matrix $\mathbb{M}$}
We have shown that the memory of a temporal network can be understood
in terms of the co-orders of its links, which represent the memory that one link has
of another. As in the main text, we define the co-order { $\Omega( \mathcal{X} || \mathcal{Y} )$ of a link
process $\mathcal{X}$ composed of random variables $X_t$ with realisations $x_t$, 
with respect to link process $\mathcal{Y}$ composed of random variables $Y_t$ with realisations $y_t$}, as 
\begin{equation}
\Omega( \mathcal{X} || \mathcal{Y} ) =  \min_{p} \left[ p : \mathbb{P} ( x_t | y_{t-1,t-p} ) = \mathbb{P} ( x_t | y_{t-1,-\infty} ) \right].
\end{equation}
What remains to be found however is a way of estimating this value.
We will here adapt the form of the efficient determination criterion (EDC) given in 
Eq.~\ref{estimator_functions} and \ref{esitmator_values}.
Firstly, it is clear that both the number of states $m$ and the number of observations $T$
are consistent with their applications to sequences in general. Specifically $m$ will be 
2 (since links are either present or not) and $T$ is defined by the data being used.
This means that we now need only focus our attentions on the log-likelihood function
$\log L_k$.
Again, given a pair of sequences $\{X_t\}$ and $\{Y_t\}$ with realisations $x_t$ and $y_t$ respectively,
and where $t = 0,...,T$, the likelihood  
of observing the full sequence $X_t$ given the last $k$ values of $Y_t$, denoted by
$L(X_t | Y_{t-1,t-k})$ is given by
\begin{eqnarray}
\begin{aligned}
	L(X_t | Y_{t-1,t-k}) =& \prod_{i=0}^{T} \mathbb{P}(x_i | y_{i-1,i-k}),\\
	=& \prod_{x_i} \prod_{y_{i-1,i-k}} \mathbb{P}(x_i | y_{i-1,i-k})^{n(x_i,y_{i-1,i-k})}.
\end{aligned}
\end{eqnarray}
Where, similarly to before, $y_{i-1,i-k}$ is the joint $y_{i-1},y_{i-2},...y_{i-k}$, and
the counting function $n(x_i,y_{i-1,i-k})$ is defined by
\begin{equation}
	n(x_i,y_{i-1,i-k}) = \sum_{i=k+1}^{T} I(X_i = x_i,Y_{i-1}=y_{i-1} ,..., Y_{i-k} = y_{i-k}).
\end{equation}
Taking the empirical estimate for the conditional
\begin{equation}
	\mathbb{P}(x_i | y_{i-1,i-k}) \approx \frac{n(x_i,y_{i-1,i-k})}{n(y_{i-1,i-k})},
\end{equation}
we can then write the log-likelihood as
\begin{equation}
	\log L_k =   \sum_{x_i} \sum_{y_{i-1,i-k}} n(x_i,y_{i-1,i-k}) \log  \frac{n(x_i,y_{i-1,i-k})}{n(y_{i-1,i-k})}.
\end{equation}
Now, precisely as before, we obtain the estimator
\begin{equation}
	\text{EDC}(k) = -2 \log L_k + 2 m^{k+1} \log \log T,
\end{equation}
giving co-order estimate
\begin{equation}
	p_{\text{EDC}} = \argmin_{0 \leq k \leq K} \text{EDC}(k).
\end{equation}
An example where we show explicitly the value taken by the estimator 
$\text{EDC}(k)$ for a concrete co-order {$\Omega(\mathcal{E}^1||\mathcal{E}^1)$} is 
depicted in Fig.\ref{fig:ACF_EDC}. In that figure we also plot the 
autocorrelation function of the signal $\{E^1_t\}$, defined in the usual way 
ACF($\tau$)$\propto \langle E^1_t \cdot E^1_{t+\tau}\rangle_t$.

\subsection{Pair Memory $\Omega_{\text{pair}}(\mathcal{G})$}
Here we introduce an alternative approach to estimating the scalar memory of a temporal network.
We do this by taking pairs of links
and analysing the memory of the pair as if it were its own temporal network.
By this we mean that, if we had two links $\mathcal{E}^{\alpha}$ and $\mathcal{E}^{\beta}$ ($\alpha \ne \beta$) then rather than 
looking at the set $\{ \Omega(\mathcal{E}^{\alpha} ), \Omega(\mathcal{E}^{\beta} ), \Omega \left( \mathcal{E}^{\alpha} || \mathcal{E}^{\beta} \right) ,  \Omega \left( \mathcal{E}^{\beta} || \mathcal{E}^{\alpha} \right) \}$,
we look at the order of the random vector directly: $\Omega \left( (\mathcal{E}^{\alpha} , \mathcal{E}^{\alpha}) \right)$.\\

\noindent An advantage to this alternative approach is that, while it is possible to measure the co-orders of pairs of links directly, this does not immediately allow us to make use of the great body of 
work that has been done on the estimation of the memory of a general symbolic sequence.
We also show that in general this is a better estimate of the scalar memory than the effective memory. \\

\noindent From \ref{SI:mem_bound_eqn} we know that the memory of a temporal network 
$\mathcal{G}$
with generating edge processes $\mathcal{E}^{\alpha}$ is bounded above by 
\begin{equation}
	\Omega ( \mathcal{G} ) \leq \max_{\alpha,\beta} \left( \Omega(\mathcal{E}^{\alpha} || \mathcal{E}^{\beta} ) \right)=\max_{\alpha,\beta} \left[ \mathbb{M}_{\alpha \beta} \right]=:\Omega_{\text{\text{eff}}}(\mathcal{G}) .
\end{equation}
Now, consider two links from this network: $\mathcal{E}^1$ and $\mathcal{E}^{2}$.
In isolation they form their own temporal network $\mathcal{G}^{1,2}$
whose scakar memory is bounded above by $\max \{ \Omega(\mathcal{E}^{1} ), \Omega(\mathcal{E}^{2} ), \Omega \left( \mathcal{E}^{1} || \mathcal{E}^{2} \right) ,  \Omega \left( \mathcal{E}^{2} || \mathcal{E}^{1} \right)  \}$.
Since this new temporal network contains only two links, 
it can only exist in 4 possible states
$(e_t^1,e_t^2) \in \{ (0,0), (0,1), (1,0), (1,1) \}$.
This is not an unreasonably large state space, and so we can estimate the memory 
directly. All that remains to do is index this state space.
To do this let us first detail a more general concept:
given two time series $\{X_t\} , \{Y_t\} \in \{0,1\}$ with realisations $x_t$ and $y_t$ respectively, define the product 
 time series $\{Z_t\} \in \{0,3\}$ with realisations $z_t$, as $z_t = f(x_t,y_t)$
 where $f: \{0,1\} \times \{0,1\} \to \left[0,3\right]$ is any bijection
(for example $z_t = x_t + 2 y_t$ ). Let us denote $Z_t = f(X_t,Y_t) \, \forall t$.
Let us now take $Z_t = f(E_t^1,E_t^2)$. We know that, since $f$ is a bijection,
 the possible states of $Z_t$ are simply labels for the possible states of 
 $(e_t^1,e_t^2)$. As such $\Omega(Z_t) = \Omega(\mathcal{G}^{1,2})$, and
 hence we can get the memory of the two link sub-network directly.
 What remains to be seen is how this translates to the memory of the
 temporal network as a whole.\\
 
 \noindent If we now consider each possible pair of links $E_t^i , E_t^j$, and 
 their product sequence $Z_t^{ij} = f(E_t^i , E_t^j )$, then we can obtain the following result:
 \begin{theorem}
 	For a temporal network $\mathcal{G}$ with link processes $\mathcal{E}^{\alpha}$ and where the product of pairs 
	$Z_t^{\alpha \beta} = f(E_t^{\alpha} , E_t^{\beta})$ are given by some bijection $f: \{0,1\} \times \{0,1\} \to \left[0,3\right]$,
	where the effective memory is given by $\Omega_{\text{\text{eff}}} ( \mathcal{G} ) =  \max_{\alpha,\beta} \left[ \Omega(\mathcal{E}^{\alpha} || \mathcal{E}^{\beta} )  \right]$,
	and the pair memory is given by $\Omega_{\text{pair}} (\mathcal{G}) = \max_{\alpha, \beta} \left[ \Omega(\{Z_t^{\alpha \beta}\}) \right]$,
	the following inequality holds:
	\begin{equation}
		\Omega( \mathcal{G} ) \leq \Omega_{pair} (\mathcal{G} ) \leq \Omega_{\text{\text{eff}}} ( \mathcal{G}).
	\end{equation}
 \end{theorem}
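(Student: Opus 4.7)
The plan is to split the double inequality into two halves and handle each with a different tool. The right-hand inequality $\Omega_{\text{pair}}(\mathcal{G}) \leq \Omega_{\text{eff}}(\mathcal{G})$ should follow as a direct corollary of Theorem 1 applied to each two-link sub-network. The left-hand inequality $\Omega(\mathcal{G}) \leq \Omega_{\text{pair}}(\mathcal{G})$ requires a conditional-entropy argument in the same spirit as the proof of Theorem 1 and is the more delicate of the two.

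For the right-hand bound I would fix an ordered pair $(\alpha,\beta)$ and view the sub-process $\mathcal{G}^{\alpha\beta}=(\mathcal{E}^{\alpha},\mathcal{E}^{\beta})$ as a temporal network in its own right. Because $f$ is a bijection, the symbolic sequence $\{Z_t^{\alpha\beta}\}$ carries exactly the same information as the joint sequence $\{(e_t^{\alpha},e_t^{\beta})\}$, so $\Omega(\{Z_t^{\alpha\beta}\})=\Omega(\mathcal{G}^{\alpha\beta})$. Theorem 1, applied inside $\mathcal{G}^{\alpha\beta}$, then gives
\[
\Omega(\mathcal{G}^{\alpha\beta}) \leq \max_{\gamma,\delta \in \{\alpha,\beta\}} \Omega(\mathcal{E}^{\gamma} || \mathcal{E}^{\delta}).
\]
The right-hand side is a maximum taken over a subset of the link pairs of $\mathcal{G}$, and is therefore bounded above by $\Omega_{\text{eff}}(\mathcal{G})$. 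Taking the maximum over $(\alpha,\beta)$ on the left then yields $\Omega_{\text{pair}}(\mathcal{G}) \leq \Omega_{\text{eff}}(\mathcal{G})$, with no additional machinery required beyond Theorem 1 itself.

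For the left-hand bound I would set $q=\Omega_{\text{pair}}(\mathcal{G})$ and aim to prove that the $n$-th order conditional entropy $h_n(\mathcal{G})$ of the full joint process stabilises at $n=q$, i.e.\ $h_{q+d}(\mathcal{G})=h_q(\mathcal{G})$ for every $d\geq 0$. By the characterisation in Eq.~\ref{eqn:markov_ent}, this is equivalent to $\Omega(\mathcal{G}) \leq q$. Starting from the chain-rule expansion of $h_n(\mathcal{G})$ in Eq.~\ref{eqn:gen_ent}, my plan is to regroup the $L$-variate conditional factors into bivariate conditionals of the form $\mathbb{P}(e_0^{\alpha},e_0^{\beta} \mid \text{pair past})$, each of which by the definition of $\Omega_{\text{pair}}(\mathcal{G})$ becomes independent of entries older than $q$ steps. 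Collapsing the resulting excess-time sums, exactly as in the proof of Theorem 1, should then yield the desired stabilisation of $h_n(\mathcal{G})$ at $n=q$.

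The hard part will be precisely this regrouping step. Projections of an $L$-variate Markov process can carry strictly longer memory than the joint process itself --- this is the virtual-loop phenomenon that the paper makes central --- and the converse direction, of recovering rigidity of the full joint distribution from the rigidity of its bivariate marginals, is what has to be justified here. The leverage I would rely on is that $\Omega_{\text{pair}}(\mathcal{G})$ is a simultaneous maximum over every pair, so that all bivariate marginals are $q$-Markov at once; propagating this coherent $q$-Markov property through the chain-rule decomposition, while carefully handling the fact that successive factors also condition on the already-realised current-time link values $e_0^{1},\dots,e_0^{\alpha-1}$, is where the technical weight of the proof will sit.
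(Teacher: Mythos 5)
Your argument for the right-hand inequality $\Omega_{\text{pair}}(\mathcal{G}) \leq \Omega_{\text{eff}}(\mathcal{G})$ is essentially the paper's: apply Theorem 1 to each two-link sub-network (the bijection $f$ making $\{Z_t^{\alpha\beta}\}$ informationally identical to the joint pair process) and observe that the resulting bound is a maximum over a subset of the pairs entering $\Omega_{\text{eff}}(\mathcal{G})$. That half is fine.

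The left-hand inequality is where your plan breaks down, and the step you flag as ``the hard part'' is not merely technically heavy --- it is the entire content of the claim, and your proposed mechanism cannot deliver it. The chain-rule factors in Eq.~\ref{eqn:gen_ent} are of the form $\mathbb{P}(e_{0,n}^{\alpha} \mid e_{0,n}^{1,\alpha-1})$, i.e.\ each link is conditioned on the full history of \emph{all} preceding links; there is no identity that regroups an $L$-fold chain rule into a product of purely bivariate conditionals, short of an independence assumption you do not have. More fundamentally, the implication you would need --- that simultaneous $q$-Markovianity of every bivariate marginal process forces $q$-Markovianity of the joint --- cannot be extracted from entropy bookkeeping alone, because a lag-$(q+1)$ dependency of one link on a \emph{joint function of several other links} (a three-or-more-body interaction) can in principle leave every bivariate marginal unchanged. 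Your plan as stated has no device for excluding such higher-order terms, so the ``collapse of the excess-time sums'' you invoke by analogy with Theorem 1 never gets off the ground.

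The paper argues in the opposite logical direction and thereby avoids having to prove anything about all pairs at once: assuming $\Omega(\mathcal{G}) = p$, it exhibits a \emph{witnessing pair}. There must be some link $\mathcal{E}^{m}$ whose conditional law genuinely depends on the network state $p$ steps back, and hence some link $\mathcal{E}^{\ell}$ whose lag-$p$ value that conditional is a nonconstant function of; the pair process $\{Z_t^{m\ell}\}$ then retains a lag-$p$ dependency in its own conditional, giving $\Omega(\{Z_t^{m\ell}\}) \geq p$ and hence $\Omega_{\text{pair}}(\mathcal{G}) \geq \Omega(\mathcal{G})$. If you want to salvage your write-up, replace the entropy-stabilisation plan for the left-hand bound with this existence argument (or, equivalently, prove the contrapositive: if every pair process is $q$-Markov with $q < \Omega(\mathcal{G})$, derive a contradiction by locating the link pair that witnesses the lag-$\Omega(\mathcal{G})$ dependency).
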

 
\proof{Let us prove the second of these inequalities first: $\Omega_{\text{pair}} (\mathcal{G}) \leq \Omega_{\text{\text{eff}}} ( \mathcal{G})$.
We know that for any pair of links $(\alpha,\beta)$, $\Omega(\{Z_t^{\alpha \beta} \} ) \leq \max \{ \Omega(\mathcal{E}^{\alpha} ), \Omega(\mathcal{E}^{\beta} ), \Omega ( \mathcal{E}^{\alpha} || \mathcal{E}^{\beta} ) ,  \Omega ( \mathcal{E}^{\beta} || \mathcal{E}^{\alpha} )  \}$. Let us assume that $\Omega (\{Z_t^{\alpha \beta}\}) $ is maximal for the links 
$(\alpha, \beta) = (\alpha' , \beta')$, then 
\begin{eqnarray}
\begin{aligned}
	\max_{\alpha, \beta} \left( \Omega(\{Z_t^{\alpha \beta}\}) \right) =& \, \Omega(\{Z_t^{\alpha' \beta'}\}) \\
	\leq&  \max \{ \Omega(\mathcal{E}^{\alpha'} ), \Omega(\mathcal{E}^{\beta'} ), \Omega \left( \mathcal{E}^{\alpha'} || \mathcal{E}^{\beta'} \right) ,  \Omega \left( \mathcal{E}^{\beta'} || \mathcal{E}^{\alpha'} \right)  \} \\
	\leq&  \max_{\alpha,\beta} \left[ \Omega(\mathcal{E}^{\alpha} || \mathcal{E}^{\beta} )  \right].
\end{aligned}
\end{eqnarray}
As required. 

\noindent Now the first of the inequalities. Assume without loss of generality that $\Omega(\mathcal{G}) = p$.
Then there must exist at least one link $\mathcal{E}^m$ with 
$p =  \min_{n} ( n : \mathbb{P} ( e_t^m | e_{t-1,t-n}^{1,L} ) = \mathbb{P} ( e_t^m | e_{t-1,-\infty}^{1,L} ) )$, i.e. there
must be at least one link $\ell$ which remembers some part of the network $p$ time steps ago.
We then see that there is at least one link $\mathcal{E}^{\ell}$ which is remembered at least $p$ time steps ago,
i.e. $ \mathbb{P} ( e_t^{m} | e_{t-1,t-p}^{1,L})$ is a function of $\mathcal{E}^{\ell}$ (and other links and time indices). 
Taking 
the pair process $\{Z_t^{m \ell}\}$ we must hence have that $\Omega(\{Z_t^{m \ell}\}) \geq p$,
since its conditional must be a function containing terms at least $p$ steps into the past. Hence there exists
some $(\alpha,\beta)$ such that $\Omega( \{Z^{\alpha \beta}_t \}) \geq \Omega(\mathcal{G})$, and hence we must have that $\Omega( \mathcal{G} ) \leq \Omega_{pair} (\mathcal{G} )$.
This concludes our proof.\qed}\\
 
\noindent Notably, we must also have that $\Omega(\{Z_t^{\alpha \beta}\}) = \Omega(\{Z_t^{\beta\alpha}\})$, and so for a network with $L$ links
 only $L ( L-1)/2$ pairs of values $(\alpha,\beta)$ must be checked to find the maximum, and hence the
 estimated pair memory of the network, meaning that
 this approach may be faster to implement than finding the co-orders directly 
 in some cases (though, possibly because of the choice of estimator used in this work,
 this is not the case here).

 \section{Why the co-memory matrix $\mathbb{M}$ and the effective memory $\Omega_{\text{\text{eff}}}(\mathcal{G})$ are better defined and more useful concepts than the scalar memory
   $\Omega(\mathcal{G})$}

\begin{figure}[htb]
  \includegraphics[width = 0.85\textwidth]{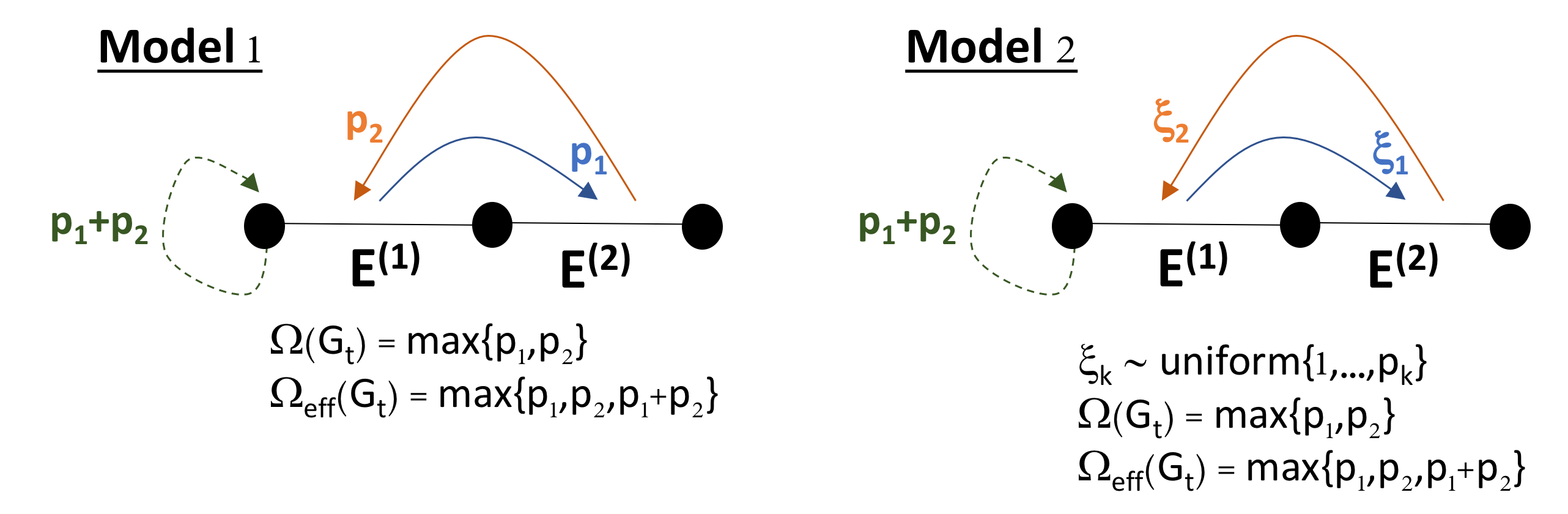}
  \caption{Models of temporal networks with two links and tunable memories. In
    both models 
  the temporal activity of each link depends on the past of the other
  link. However the memory kernel is different for the two models. In
  {\it model 1}, link 1 can copy the $t-p_1$ state of link 2, while 
  link 2 can copy the $t-p_2$ state of link 1, thereby inducing a
  virtual loop of memory $p_1+p_2$ in the dynamics of each link.
  In {\it model 2}, {the first link can copy a state of the second link,
  choosing uniformly at random from the previous $t-1,...,t-p_1$ available states}.
  Analogouly the second link can copy one of the
  $t-1,...,t-p_2$ states of the first link. This model also induces
  virtual loops of
    memory $p_1 + p_2$, although, in practice the estimation method
may not necessarily be able to identify these virtual loops due to virtual
loop decoherence (see text for details).} 
\label{fig:toymodel}
\end{figure}
\subsection{Two toy models and their virtual loops}

The main reason why it is important to properly define and
extract the memory of a temporal network is to unveil its influence
on the dynamics of processes occurring over the network. 
We will show here that the co-orders  $\mathbb{M}_{\alpha  \beta}=
\Omega (  {\mathcal E}^\alpha ||  {\mathcal E}^\beta) 
$ 
and the associated effective memory
$\Omega_{\text{\text{eff}}}(\mathcal{G})$ of a temporal network
$\mathcal{G}$ are better quantities to characterize 
spreading processes occurring over $\mathcal{G}$,
than a traditional definition such as $\Omega(\mathcal{G})$, which is not only computationally difficult
to estimate, but also suffers from fundamental drawbacks. We will illustrate
this by means of the following two temporal network models.
\\

\noindent {\bf Model 1 -- } The first temporal network we consider is
a chain graph with three nodes and two links with binary states 
${0,1}$. The temporal activity of the two links is ruled by the two coupled 
stochastic processes ${\mathcal E}^1 = \{ E_t^1 \}_{t=1,2,\ldots}$ and
$ {\mathcal E}^2 = \{ E_t^2 \}_{t=1,2,\ldots}$ defined as: 
\begin{eqnarray}
	E_t^1 = Q_t^1 E_{t-p_1}^2 + (1-Q_t^1) Y_t^1, \\
	E_t^2 = Q_t^2 E_{t-p_2}^1 + (1-Q_t^2) Y_t^2,
	\label{two_link_example_v1}
\end{eqnarray} 
with $Q_t^1,Q_t^2 \sim Bernoulli(q)$, $Y_t^1,Y_t^2 \sim Bernoulli(y)$
and $p_1$ and $p_2$ two positive integers.  This means that,
at each time step $t$, each link will either be sampled from a Bernoulli
trial, or it will copy from the past states of the other link.
In the latter case, link 1 will copy the state of link 2 at
time $t-p_1$, i.e. exactly $p_1$ steps back in the past,
while link 2 will copy the state of link 1 at
time $t-p_2$. The model is illustrated in the left
panel of Figure \ref{fig:toymodel}. 

We can now state and prove the following theorem on the memory
$\Omega(\mathcal{G})$ of model 1:
\\
 \begin{theorem}
If a temporal network $\mathcal{G}$ is determined by model 1 above,
then $\Omega(\mathcal{G}) = \max \{p_1,p_2\}$
 \end{theorem}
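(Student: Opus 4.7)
The plan is to establish the equality $\Omega(\mathcal{G}) = \max\{p_1,p_2\}$ by proving matching upper and lower bounds on the Markov order of the joint process $\{(E_t^1,E_t^2)\}$.

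First I would prove the upper bound $\Omega(\mathcal{G}) \leq \max\{p_1,p_2\}$ by a direct calculation of the conditional joint law. Given the full history $\mathbf{e}_{0,t-1}$, the model writes $E_t^1$ as a measurable function of the fresh noise $(Q_t^1,Y_t^1)$ and the single scalar $e_{t-p_1}^2$, and analogously $E_t^2$ as a function of $(Q_t^2,Y_t^2)$ and $e_{t-p_2}^1$. Since the four noise variables at time $t$ are mutually independent and independent of the past, we obtain the factorisation $\mathbb{P}(\mathbf{e}_t \mid \mathbf{e}_{0,t-1}) = \mathbb{P}(e_t^1 \mid e_{t-p_1}^2)\,\mathbb{P}(e_t^2 \mid e_{t-p_2}^1)$. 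Both relevant past indices lie within the window $[t-\max\{p_1,p_2\},\,t-1]$, so the conditional distribution is measurable with respect to that window and the Markov property holds at order $\max\{p_1,p_2\}$.

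For the lower bound $\Omega(\mathcal{G}) \geq \max\{p_1,p_2\}$, assume without loss of generality $p_1 \geq p_2$ and take any integer $p$ with $p < p_1$. From the upper-bound calculation, the full-history marginal is $\mathbb{P}(E_t^1=1 \mid \mathbf{e}_{0,t-1}) = q\,e_{t-p_1}^2 + (1-q)y$, whereas the tower property yields $\mathbb{P}(E_t^1=1 \mid \mathbf{e}_{t-p,t-1}) = q\,\mathbb{E}[E_{t-p_1}^2 \mid \mathbf{e}_{t-p,t-1}] + (1-q)y$. For the Markov property to hold at order $p$ these must agree on every admissible history, which would force $E_{t-p_1}^2$ to be a deterministic function of $\mathbf{e}_{t-p,t-1}$. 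To refute this determinism I would exhibit, by direct construction, a positive-probability window compatible with both $e_{t-p_1}^2=0$ and $e_{t-p_1}^2=1$.

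The construction consists in conditioning on the positive-probability event $A$ that $Q_s^1=Q_s^2=0$ for every $s \in [t-p,t-1]$. On $A$ each intermediate link value reduces to the exogenous draw $E_s^i = Y_s^i$, so the window $\mathbf{e}_{t-p,t-1}$ is a vector of independent $\mathrm{Bernoulli}(y)$ variables that is jointly independent of everything before time $t-p$, and in particular of $E_{t-p_1}^2$. Under the natural non-triviality assumption $q,y \in (0,1)$, the conditional law of $E_{t-p_1}^2$ given such a window is the original non-degenerate Bernoulli, so there exist two extensions of the same window $\mathbf{e}_{t-p,t-1}$ with opposite values of $e_{t-p_1}^2$ but different conditional probabilities for $E_t^1$, contradicting the Markov property at order $p$. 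The main obstacle is precisely this decoupling step: because the coupled update rules generically propagate information about the far past through the intermediate window, one cannot conclude independence a priori, and conditioning on $A$ is the key probabilistic trick that cleanly breaks the propagation and turns the information-theoretic intuition into a rigorous non-identifiability statement.
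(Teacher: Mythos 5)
Your proof is correct and follows the same overall route as the paper's: compute the factorised one-step conditional $\mathbb{P}(e_t^1=1\mid \mathbf{e}_{0,t-1}) = q\,e^2_{t-p_1}+(1-q)y$ (and its analogue for link 2), read off the upper bound $\max\{p_1,p_2\}$ from the lags appearing in it, and then argue that no shorter window suffices. The difference is in the lower bound: the paper simply asserts that $\mathbb{P}(e^{\alpha}_{t}\mid e^{1,2}_{t-1,t-p_1}) \neq \mathbb{P}(e^{\alpha}_{t}\mid e^{1,2}_{t-1,t-(p_1-\delta)})$ as ``clear,'' whereas you supply the missing justification — namely that $e^2_{t-p_1}$ is genuinely not determined by the truncated window. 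Your decoupling device (conditioning on the positive-probability event that $Q_s^1=Q_s^2=0$ throughout the window, so the window becomes i.i.d.\ Bernoulli$(y)$ noise independent of the far past) is a clean way to exhibit two positive-probability histories sharing a window but disagreeing on $e^2_{t-p_1}$, and hence on the conditional law of $E_t^1$. You also correctly flag the non-degeneracy hypotheses $q,y\in(0,1)$, which the paper's theorem statement leaves implicit but which are genuinely needed (e.g.\ for $q=0$ the process is memoryless). In short: same decomposition, but your version closes a gap the paper glosses over.
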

\proof{
 By construction we have that $\mathbb{P}(e^{1,2}_{t} | e^{1,2}_{t-1,t-p}) = \mathbb{P}(e^{1}_{t} | e^{1,2}_{t-1,t-p}) \mathbb{P}(e^{2}_{t} | e^{1,2}_{t-1,t-p})$. Note that $e^{1,2}_{t-1,t-p} = e^{1}_{t-1,t-p},e^{2}_{t-1,t-p}$.
 Then the conditional probability that link 1 is present is given by
 \begin{equation}
 	\mathbb{P}(e^{1}_{t} =1 | e^{1,2}_{t-1,t-p_1}) = q e^2_{t-p_1} + (1-q)y,
\end{equation}
 and similarly for $E^2_t$. It is then clear that for any
   $\delta > 0$
\begin{eqnarray}
\begin{aligned}
	\mathbb{P}(e^{\alpha}_{t} | e^{1,2}_{t-1,t-p_1}) = \mathbb{P}(e^{\alpha}_{t} | e^{1,2}_{t-1,t-(p_1+\delta)}),\\
	\mathbb{P}(e^{\alpha}_{t} | e^{1,2}_{t-1,t-p_1}) \neq \mathbb{P}(e^{\alpha}_{t} | e^{1,2}_{t-1,t-(p_1-\delta)}),
\end{aligned}
\end{eqnarray}	
for $\alpha=1,2$. Hence we must have that $\min_p \left[ p: \mathbb{P}(e^{1,2}_{t} | e^{1,2}_{t-1,t-p}) = \mathbb{P}(e^{1,2}_{t} | e^{1,2}_{t-1,t-\infty}) \right] = \max (p_1,p_2)$.\qed}\\

Interestingly, when we look at the different entries of the 
matrix $\mathbb{M}$, we get some {intriguing} results. In fact,
in this case the matrix is two-dimensional and,  
together with the terms
$ \Omega (  {\mathcal E}^1 ||  {\mathcal E}^2)=p_1 $ 
and
$ \Omega (  {\mathcal E}^2 ||  {\mathcal E}^1)=p_2 $ 
we must also evaluate the diagonal terms
$ \Omega (  {\mathcal E}^1 ||  {\mathcal E}^1)$  and
  $ \Omega (  {\mathcal E}^2 ||  {\mathcal E}^2)$.

\noindent Let us first consider $ \Omega (  {\mathcal E}^1 ||  {\mathcal E}^1)$.
We can treat the coupled
system in Eq.~\ref{two_link_example_v1}
by re-writing an expression for  
$E_t^1$ containing only terms related to link 1 directly.
We obtain: 
\begin{equation}
	E_t^1 = Q_t^1 (Q_t^2 E_{t-(p_1 + p_2)}^1 + (1-Q_t^2) Y_t^2) + (1-Q_t^1) Y_t^1.
\end{equation}
which clearly shows that 
$ \Omega (  {\mathcal E}^1 ||  {\mathcal E}^1) = p_1 + p_2$. Similarly we can prove that
$\Omega (  {\mathcal E}^2 ||  {\mathcal E}^2) = p_1 + p_2$. These results have been
confirmed by numerically simulating the model
and measuring the co-orders directly.
By construction, this means that the effective memory of this system
is $\Omega_{\text{\text{eff}}}(\mathcal{G}) = p_1 + p_2$, which is
possibly up to twice the value of the scalar memory
$\Omega(\mathcal{G})$.  If, without any loss of generality, we set $p_1>p_2$
and then fix $p_1$ and let $p_2$ vary, we can thereby construct a
variety of temporal networks with exactly the same scalar memory
$\Omega(\mathcal{G})=p_1$, but with different co-memory matrices and a
 tunable effective memory
$\Omega_{\text{\text{eff}}}(\mathcal{G})=p_1+p_2$. In this case, the
 difference between $\Omega_{\text{\text{eff}}}(\mathcal{G})$
 and $\Omega(\mathcal{G})$  
 is the result of an induced {\it virtual loop} (VL) in the
dynamics of the temporal network: when link 1 draws from the memory of
link 2, it may effectively be drawing from its own, more distant,
past. {Importantly, these contributions to the memory of the
  network are intrinsically indirect: the source of memory present
  microscopically at link $E_t^1$ and $E_t^2$ is induced by the
  coupling of the link activities.} 
 Of course, these effects are
obtained as we restrict the observation state space to single links. These  
mechanisms are similar to what happens when we project a low-order
Markov chain defined on a given state space onto a smaller dimensional
state space. While these effects can be seen as an artefact of such a
projection, they turn out to have important consequences on dynamical
processes taking place on temporal networks, as we will show in
the next section. But first let us introduce a slightly more realistic
model of a network with two links, in which the virtual loops have
a different structure.\\

\noindent {\bf Model 2 --}
In order to study the effects of virtual loops in a slightly more
realistic --but still controlled-- setting, we introduce a second toy
model in which, given the same values of $p_1$ and $p_2$ from our
first model, we obtain exactly the same scalar memory
$\Omega(\mathcal{G})$, but with a different memory kernel. The link
evolution of this model is illustrated in the right panel of
Figure \ref{fig:toymodel} and 
is specified by a coupled pair of slightly
modified DAR($p$) processes:
\begin{eqnarray}
	E_t^1 = Q_t^1 E_{t-Z^1_t}^2 + (1-Q_t^1) Y_t^1, \\
	E_t^2 = Q_t^2 E_{t-Z^2_t}^1 + (1-Q_t^2) Y_t^2,
	\label{two_link_example_v2}
\end{eqnarray} 
where $Q_t^1,Q_t^2 \sim Bernoulli(q)$, $Y_t^1,Y_t^2 \sim Bernoulli(y)$, and 
$Z_t^1 \sim Uniform(1,p_1)$ and $Z_t^2 \sim Uniform(1,p_2)$.
As in model 1, at each time step, each link in model 2 will either be sampled from
a Bernoulli trial, or it will copy from the past states of the other
link. However, the first link will not exactly copy the state of link 2
at time $t-p_1$, but it will copy one of the previous $t-1,...,t-p_1$ states of the
second link selected at random with uniform probability.
The scalar memory of model 2 is clearly $\Omega(\mathcal{G}) = \max
\{p_1,p_2\}$, as this is the furthest point into the past of the
network that is required to when generating the next step of its
evolution. A formal proof of this will be given in theorem 5, as this
model is a specific case of the eCDARN($p$) model described in section
IVA.  Again, as in the previous model we find interesting patterns
when we look at the different elements of the co-memory matrix
$\mathbb{M}$. As in model 1 we have
$ \Omega (  {\mathcal E}^1 ||  {\mathcal E}^2)=p_1 $ 
and
$ \Omega (  {\mathcal E}^2 ||  {\mathcal E}^1)=p_2. $
To evaluate $ \Omega (  {\mathcal E}^1 ||  {\mathcal E}^1)$  we
rewrite $E_t^1$ using only terms related to link 1, obtaining: 
\begin{equation}
	E_t^1 = Q_t^1 (Q_t^2 E_{t-(Z^1_t + Z^2_t)}^1 + (1-Q_t^2) Y_t^2) + (1-Q_t^1) Y_t^1.
\end{equation}
From this we can then see that the dynamics of the first link can  
equivalently be described by the following process:
\begin{equation}
	E_t^1 = \bar{Q}_t E_{t-(Z^1_t + Z^2_t)}^1 + \left(1- \bar{Q}_t \right) Y_t,
\end{equation}
where $\bar{Q}_t \sim Bernoulli(q^2)$ and $Y_t \sim Bernoulli(y)$.
Hence the stochastic process $\{ E_t^1 \}_{t=1,2,\ldots}$
is a form of DAR($p$) process of order $p_1 + p_2$ \cite{jacobs1978discrete}.
Implicitly, this tells us that
$ \Omega (  {\mathcal E}^1 ||  {\mathcal E}^1) =p_1 + p_2$, 
and analogously we can get that 
$ \Omega (  {\mathcal E}^2 ||  {\mathcal E}^2) =p_1 + p_2$.
Again these results are confirmed by measuring the co-orders directly
through numerical simulations of the model.  Summing up, the effective
memory of this system is, as expected
$\Omega_{\text{\text{eff}}}(\mathcal{G}) = p_1 + p_2$.
Note that if $p_1 = p_2$ then $\Omega_{\text{eff}}(\mathcal{G}) = 2 \Omega(\mathcal{G})$.\\ 

\noindent In exactly the same way as before, and without loss of generality we
can fix $p_1>p_2$ and let $p_2$ vary to  
construct a variety of temporal networks with exactly the same 
scalar memory $\Omega(\mathcal{G})=p_1$, but with different co-memory
matrices and tunable effective memories 
$\Omega_{\text{\text{eff}}}(\mathcal{G})$.\\

\begin{figure}
\begin{center}
\begin{tikzpicture}

\Text[x=-3.7, y=2.7]{Order 1}
\Vertex[x=-4.5, y=2.0, size=.6, color=blue!20, opacity=0.9, label=E$^1$]{N1two}
\Vertex[x=-3.0, y=2.0, size=.6, color=blue!20, opacity=0.9, label=E$^2$]{N2two}
\Edge[Direct, lw=.3, color=black, bend=25](N1two)(N2two)
\Edge[Direct, lw=.3, color=black, bend=25](N2two)(N1two)
\Edge[Direct, lw=.3, color=black, bend=25, style={dashed}, loopposition=-180, label=VL, position= left](N1two)(N1two)
\Edge[Direct, lw=.3, color=black, bend=25, style={dashed},  label=VL, position= right](N2two)(N2two)

\Text[x=5.0, y=2.2]{Order 2}
\Vertex[x=4.5, y=1.5, size=.6, color=blue!20, opacity=0.9,label=E$^1$]{N1three}
\Vertex[x=3.0, y=1.5, size=.6, color=blue!20, opacity=0.9,label=E$^2$]{N2three}
\Vertex[x=3.75, y=2.5, size=.6, color=blue!20, opacity=0.9,label=E$^3$]{N3three}
\Edge[Direct,lw=.3, color=black, bend=20](N1three)(N2three)
\Edge[Direct,lw=.3, color=black, bend=20](N2three)(N3three)
\Edge[Direct,lw=.3, color=black, bend=20](N3three)(N1three)
\Edge[Direct, lw=.3, color=black, bend=25, style={dashed}, loopposition=-45, label=VL, position= right](N1three)(N1three)
\Edge[Direct, lw=.3, color=black, bend=25, style={dashed}, loopposition=-135, label=VL, position= left](N2three)(N2three)
\Edge[Direct, lw=.3, color=black, bend=25, style={dashed}, loopposition=90, label=VL, position= above](N3three)(N3three)

\Text[x=-3.7, y=0.5]{Order 3}
\Vertex[x=-4.5, y=-1.5, size=.6, color=blue!20, opacity=0.9,label=E$^2$]{N1four}
\Vertex[x=-3.0, y=-1.5, size=.6, color=blue!20, opacity=0.9,label=E$^3$]{N2four}
\Vertex[x=-4.5, y=-0.5, size=.6, color=blue!20, opacity=0.9,label=E$^1$]{N3four}
\Vertex[x=-3.0, y=-0.5, size=.6, color=blue!20, opacity=0.9,label=E$^4$]{N4four}
\Edge[Direct,lw=.3, color=black, bend=-20](N1four)(N2four)
\Edge[Direct,lw=.3, color=black, bend=-20](N2four)(N4four)
\Edge[Direct,lw=.3, color=black, bend=-20](N4four)(N3four)
\Edge[Direct,lw=.3, color=black, bend=-20](N3four)(N1four)
\Edge[Direct, lw=.3, color=black, bend=25, style={dashed}, loopposition=-135, position= below](N1four)(N1four)
\Edge[Direct, lw=.3, color=black, bend=25, style={dashed}, loopposition=-45, position= below](N2four)(N2four)
\Edge[Direct, lw=.3, color=black, bend=25, style={dashed}, loopposition=135, position= above](N3four)(N3four)
\Edge[Direct, lw=.3, color=black, bend=25, style={dashed}, loopposition=45, position= above](N4four)(N4four)

\Text[x=5.0, y=-0.25]{Order 4}
\Vertex[x=4.75, y=-2.5, size=.6, color=blue!20, opacity=0.9,label=E$^1$]{N1five}
\Vertex[x=2.75, y=-2.5, size=.6, color=blue!20, opacity=0.9,label=E$^2$]{N2five}
\Vertex[x=3.75, y=-0.5, size=.6, color=blue!20, opacity=0.9,label=E$^4$]{N4five}
\Vertex[x=5.2, y=-1.2, size=.6, color=blue!20, opacity=0.9,label=E$^5$]{N5five}
\Vertex[x=2.3, y=-1.2, size=.6, color=blue!20, opacity=0.9,label=E$^3$]{N3five}
\Edge[Direct,lw=.3, color=black, bend=20](N1five)(N2five)
\Edge[Direct,lw=.3, color=black, bend=20](N2five)(N3five)
\Edge[Direct,lw=.3, color=black, bend=20](N3five)(N4five)
\Edge[Direct,lw=.3, color=black, bend=20](N4five)(N5five)
\Edge[Direct,lw=.3, color=black, bend=20](N5five)(N1five)
\Edge[Direct, lw=.3, color=black, bend=25, style={dashed}, loopposition=-45, position= below](N1five)(N1five)
\Edge[Direct, lw=.3, color=black, bend=25, style={dashed}, loopposition=-135, position= below](N2five)(N2five)
\Edge[Direct, lw=.3, color=black, bend=25, style={dashed}, loopposition=135, position= above](N3five)(N3five)
\Edge[Direct, lw=.3, color=black, bend=25, style={dashed}, loopposition=90, position= above](N4five)(N4five)
\Edge[Direct, lw=.3, color=black, bend=25, style={dashed}, loopposition=45, position= above](N5five)(N5five)

\end{tikzpicture}
\end{center}
\caption{Examples of cyclic Bayesian networks describing the link
  dependencies from which virtual loops emerge. Virtual loops produce
  {long} memory 
  contributions in the
  diagonal entries of the co-memory
  matrix. These loops are induced by appropriately coupling the
  dynamics of the links. Shown are 
  possible configurations where VLs of order 1, 2, 3 and 4
  can emerge. The nodes in each diagram denote the stochastic process 
  ${\mathcal E}^{\alpha}$ associated to link $\alpha=1,2, \ldots,L$
  of a temporal network. Solid arrows denote temporal dependencies
  between links,  whereas dashed arrows indicated induced VLs.}
\label{HO_VL}
\end{figure}
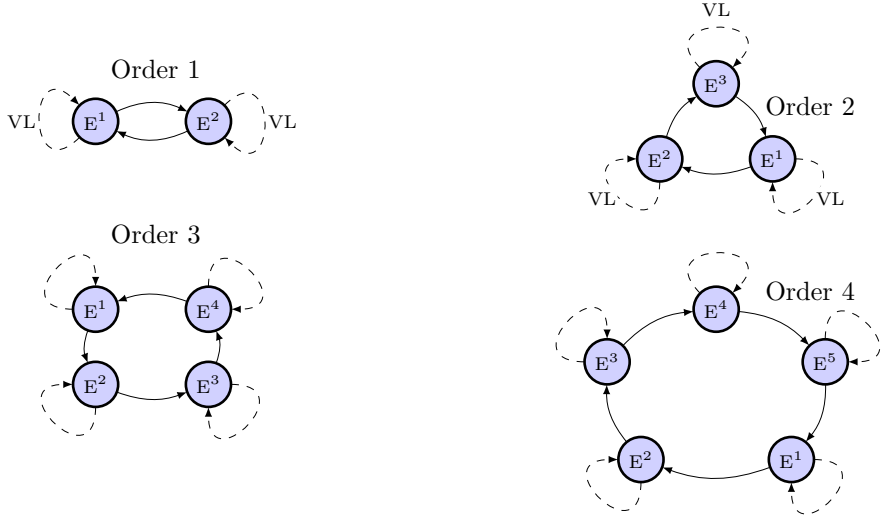

\subsection{Virtual loops of arbitrary order and conditions for VLs to emerge}

The temporal network toy models discussed above consist of a linear chain of three nodes
and two links. As illustrated in figure \ref{HO_VL}, 
a convenient way of representing these temporal networks is by 
a graphical model 
with two nodes, describing the two links (stochastic processes) 
  of the temporal networks, connected by
directed arrows that express the temporal dependence
structure between different network links.
This representation is known in the
literature as a Bayesian network (BN), and when links indeed describe causal relationships, it is often termed as a Bayesian or causal network. 
While BNs are usually directed
acyclic graphs (DAGs) by definition, it is easy to see that the
BN associated to our toy temporal network models 
is indeed cyclic (CBN), {and this is indeed a sufficient condition for virtual loop effects to emerge in memory.}
\\

\noindent Models 1 and 2 are among the simplest temporal networks to provide a nontrivial
virtual loop (VL) structure. Because VLs in this case are induced via
a causal path involving only a pair of links ($E^1\to E^2 \to E^1
\Rightarrow E^1\lcirclearrowleft$), we denote this as a VL of order
1. Now, it is easy to construct `higher order'
virtual loops e.g. simply by building {temporal networks with an underlying Bayesian ring topology} where link 1
dynamically depends on link 2, link 2 dynamically depends on link 3,
etc., and link $n$ dynamically depends on link 1. 
CBNs with VLs of orders 2, 3 and 4 are shown in figure
\ref{HO_VL} together with one of order 1. 
In theory, VLs of higher order induce contributions to
the co-memory matrix with longer 
memory, yielding a longer effective
memory $\Omega_{\text{\text{eff}}}(\mathcal{G})$. In practice, virtual
loops with high memory are difficult to observe due to extremely long
time series being required to capture the effect. More importantly,
these VLs are stable as long as the temporal dependencies between links 
are fine tuned, and quickly dissipate otherwise, meaning that the
practical relevance of VLs in the context of real-world temporal
networks is less clear.  In the next subsection, by comparing models 1
and 2 in more detail, we will give a closer look to this mechanism of
``virtual loop decoherence''.\\

\noindent Before that, we should also highlight that the underlying BN needs not to be cyclic for virtual loops to emerge: this is a sufficient, not necessary condition. Indeed, when links are also auto-correlated (meaning that on top of the link temporal dependencies, we prescribe that the link has also an internal dynamics which is auto-correlated), then virtual loops can also emerge even in the case that the underlying BN is a priori acyclic. The reason is because in that case, the interplay between the auto and cross-correlated dynamics induce virtual Bayesian links. This can be better explained with an example. Suppose that link $\alpha$ has an internal auto-correlated activity, and on top of that, also depends on the past of link $\beta$. The interplay between the two dynamics is captured in the auto-correlated nature of $\alpha$, which displays signs of memory which are a mix of the ones obtained from its own internal dynamics and from the memory displayed by $\beta$, hence the actual memory of $\alpha$ (as measured by the co-order $\Omega({\mathcal E}^\alpha||{\mathcal E}^\alpha)$) is in general larger than the memory of the internal dynamics of $\alpha$. This effect is indistinguishable from a virtual loop, hence we can confidently label it in a similar way.\\
Accordingly, the following table summarises when virtual loops emerge in the system.
\begin{table}[htp]
\begin{center}
\begin{tabular}{|c|c|c|}
\hline
{\bf Autocorrelated (internal dynamics)} & {\bf Cross-correlated (link dependency)} & {\bf Virtual loops}\\
\hline
NO & NO & NO\\
NO & YES, ACYCLIC BN&NO\\
NO& YES, CYCLIC BN& YES\\
YES&NO&NO\\
YES&YES, ACYCLIC BN&YES\\
YES&YES, CYCLIC BN&YES\\
\hline
\end{tabular}
\end{center}
\caption{Summary of the situations where virtual loops are expected to emerge.}
\label{table:summ}
\end{table}%

\subsection{Virtual loops in other areas of physics and beyond}
{The virtual loops discussed above are indeed particular cases of ``causal loops" and therefore share some similarities with important concepts arising at the heart of
a number of key challenges in several areas of modern science.
For instance, when finding the marginal distributions of a collection of
random variables it is common to use the message passing, or belief
propagation, algorithm
\cite{pearl1982reverend,yedidia2003understanding}.  Such methods are
important in the study of Gaussian graphical models in machine
learning \cite{weiss2000correctness}, signal processing
\cite{baron2009bayesian}, and a plethora of other such inference
problems
\cite{ihler2009particle,felzenszwalb2006efficient,lokhov2014inferring}.
These message passing algorithms also have uses in the context of
  statistical physics, where they can be related to the Bethe-Peierls
  approximation (or the replica symmetric cavity method in the context
  of spin glasses), and in turn the Thouless, Anderson, Palmer
  equations for local magnetisations
  \cite{yedidia2005constructing,opper2001naive,kabashima2003cdma,neirotti2005improved}.
However, this approach becomes inexact precisely when the Bayesian
graphs that underly these problems have loops. Because of this the
study of how to best overcome this problem has been seen as deeply
important
\cite{murphy1999loopy,yedidia2001generalized,ihler2005loopy}, and
  indeed has been a subject of recent attention
  \cite{cantwell2019message}.  In a different vein, causal loops have
been a subject of interest in the study of Feynman
diagrams. ``One-loop" diagrams, in which there is a single causal
loop, have historically presented challenges to study
\cite{passarino1979one,t1974one}.  However when these challenges have
been overcome they have helped to explain such phenomena as the
Casimir effect \cite{mostepanenko1997casimir,jaffe2005casimir},
Hawking radiation \cite{russo1992end}, and the Lamb shift
\cite{czarnecki2005calculation}.}

\subsection{The phenomenon of virtual loop decoherence}

The difference between the two toy models relies in the way in which
the state of the system depends from the past states.  Even though in
theory the scalar memory $\Omega(\mathcal{G})$, the effective memory
$\Omega_{\text{\text{eff}}}(\mathcal{G})$ and the co-memory matrices
are identical in the two models, in practice the estimation of these
quantities is different.  In
our first model, the {extent of the} memory is localised: each link will,
when referencing from the past, always look at the {state of the other link} a fixed
number of time steps {into the past} ($p_1$ or $p_2$).  In our second model each
link will, when referencing from the past, uniformly pick from 
{among the past $p_1$ (or $p_2$) states of the other link}.
 In other words, {links in the second model} will
seldom {copy the past state} exactly $p_1$ or $p_2$ {time steps ago}.
While theoretically the local
co-memory of each link is still $p_1+p_2$ due to the {presence of} virtual loops,
{whether we can accurately estimate} this quantity is less
obvious. In theory we would require a very large observed time
series to {estimate the theoretical effective} memory $p_1+p_2$ {consistently}.
While this is a finite size effect,
{it is however important with regards to} processes running on top
of these networks, and therefore will affect e.g. the behaviour of
{spreading} processes we run on them, as we will show in the next
section.  It is therefore in this second model that we would expect to
observe what we term ``virtual loop decoherence": each link will
not always utilise the full extent of its potential memory, {in that} the virtual
loops will not always reference a point at time $t-(p_1 + p_2)$ in
their past history. Because of this we expect the influence of the
virtual loops to be limited in comparison to our first toy model, and
hence the {theoretical} co-order $\Omega( \mathcal{E}^1_t || \mathcal{E}^1_t) = p_1 + p_2$ {will} be
difficult to detect. {The effect of this is that} the estimated
values of the diagonal terms in the co-memory matrix {will in some
  cases be} smaller than the {effective} memory of the network,
  or in more extreme cases, less than $\Omega(\mathcal{G})$. { In either of
  these cases virtual loops will not contribute} to
$\Omega_{\text{\text{eff}}}(\mathcal{G})$ (virtual loop
decoherence). As a byproduct, decoherence would {cause} the
estimation of $\Omega_{\text{\text{eff}}}(\mathcal{G})$ to
{approach} the scalar memory $\Omega(\mathcal{G})$, meaning that in
those practical scenarios where VL decoherence {does} emerge, then
the scalar memory $\Omega(\mathcal{G})$ might after all be a good
approximation to the effective memory operating underneath. In other words, conceptually the correct scalar quantity under study is  $\Omega_{\text{\text{eff}}}(\mathcal{G})$ and not $\Omega(\mathcal{G})$, but when the system is free from VLs or these decohere, then both quantities tend to be close.\\

\begin{figure}[htb]
  \includegraphics[width = 0.33\textwidth]{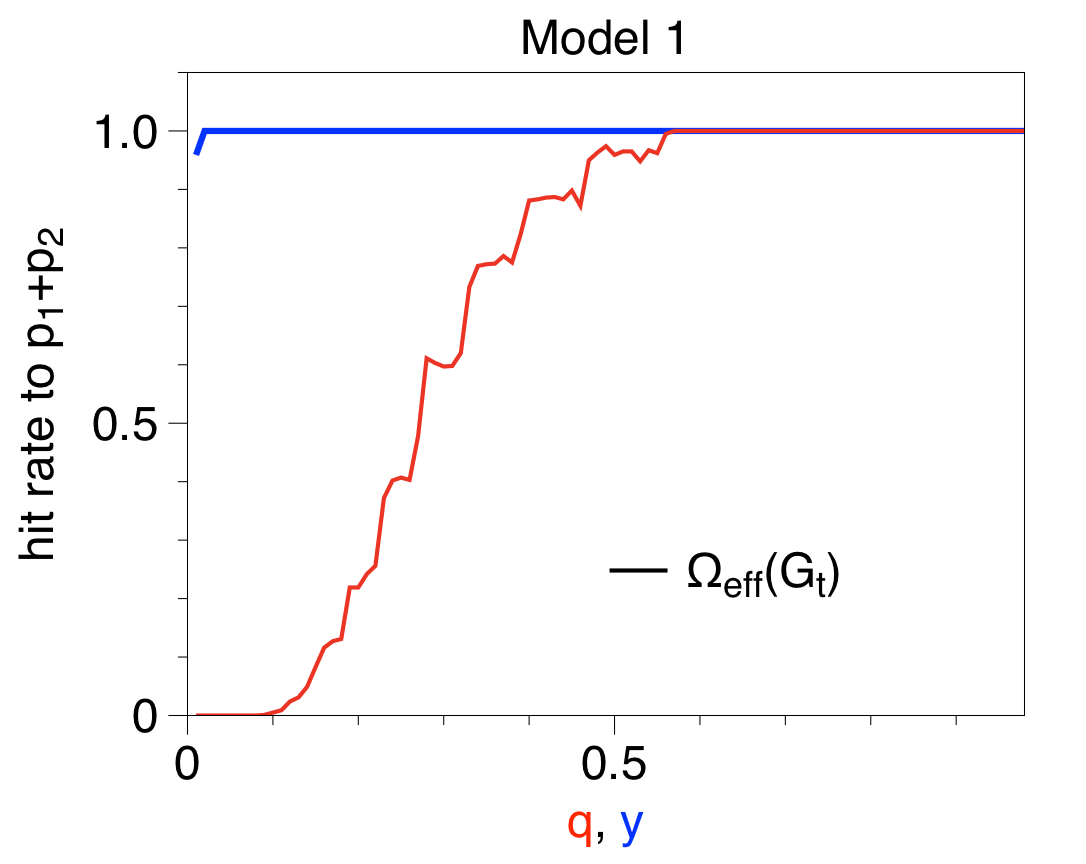}%
  \includegraphics[width = 0.33\textwidth]{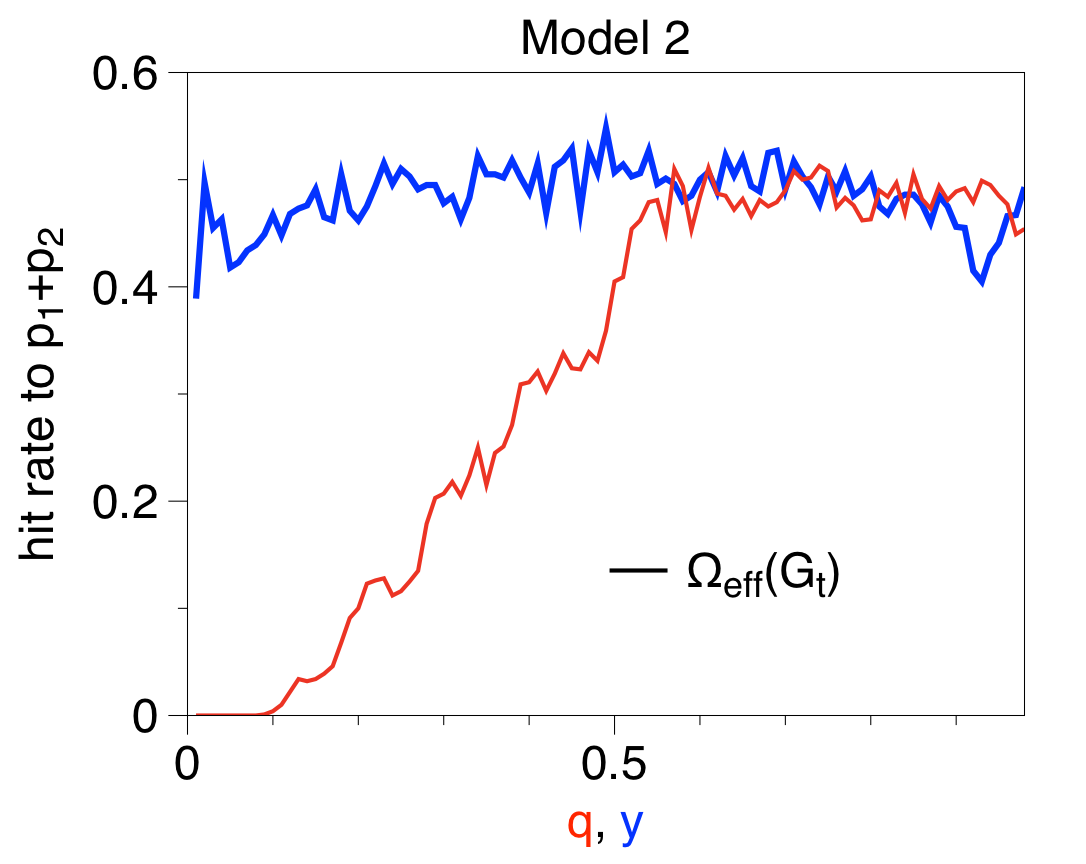}%
    \includegraphics[width = 0.33\textwidth]{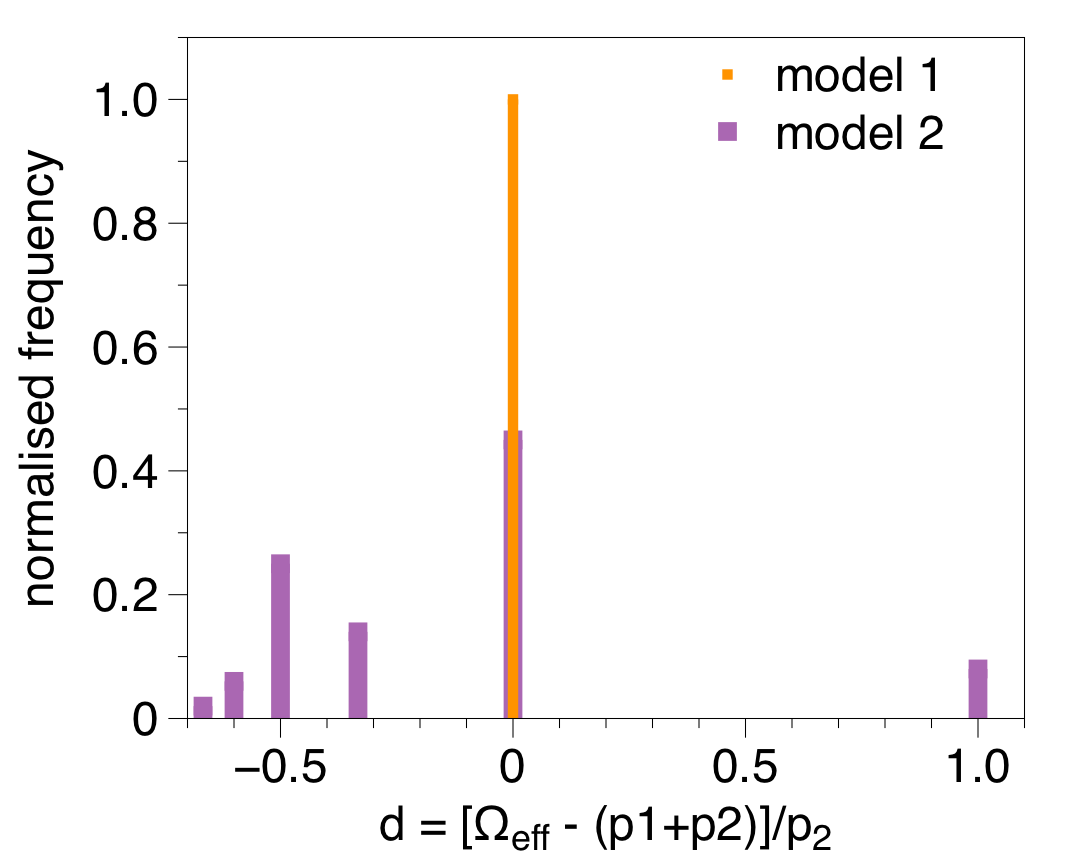}%
\caption{{\bf Hit rates and hisrograms for both toy models.} (Left and middle panels) Hit rate measuring the fraction of sampled toy model networks for which the estimation of $\Omega_{\text{eff}}(\mathcal{G})$ coincides with the local co-memory induced by the virtual loops, {as given by} 
$\Omega(E^1_t || E^1_t) = p_1 + p_2$, for model 1 (left panel) and model 2 (middle panel). For each realisation of the temporal network, we sample $p_1,p_2$ uniformly from a range $1,2,\dots,10$. Both models depend on parameters $q$ (memory length) and $y$ (link probability, {see the text for an explanation of these parameters}). {The red} curve displays the dependence of the hit rate on $q$ for a fixed $y=0.25$, whereas blue curve displays the dependence on $y$ for a fixed $q=0.9$. In every case we let {the network} evolve for $T=10^5$ time steps, and every point is the average of $10^4$ realisations (each realisation having a different $p_1,p_2$). In model 1 the virtual loops govern $\Omega_{\text{eff}}(\mathcal{G})$, whereas in model 2 virtual loop decoherence sets in and the estimated $\Omega_{\text{eff}}(\mathcal{G})$ is {no longer} systematically governed by the virtual loops. (Right panel) Dispersion histogram where we count the frequency of each estimated $\Omega_{\text{eff}}(\mathcal{G})$ (averaged over $10^4$ realisations {of the temporal network}). We compare this quantity to $p_1+p_2$ by measuring the normalised dispersion $d=[\Omega_{\text{eff}}(\mathcal{G}) - (p_1+p_2)]/2$ for models 1 and 2, with fixed parameters $q=0.9$ and $y=0.25$. In model 1 the dispersion is systematically zero ({a result of a} 100\% hit rate), i.e. virtual loops systematically govern the estimation of the effective memory. In model 2 virtual loops are decoherent and the {estimated value} of the effective memory tends to be smaller {than $p_1 + p_2$} accordingly. {We do not however reach} the asymptotic case $d=-1$ that {that we associate with} the effective memory matching the scalar memory, and {hence} virtual loops not contributing.}
\label{fig:toy_hit_rates}
\end{figure}

\noindent To analyse this we generate $10^4$ instances of both models with randomly selected values of 
$p_1$ and $p_2$ and a range of values for the memory strength $q$ and the link probability $y$, along with 
the number of time steps each network is generated for. For each instance we estimate the effective memory
$\Omega_{\text{eff}}(\mathcal{G})$, and record a ``hit" if this estimate is precisely $p_1 + p_2$. 
We plot in Fig.~\ref{fig:toy_hit_rates}  this hit rate, as a function of the parameters $q$ and $y$.
As expected, we see that in all but the few cases where the memory in each model is removed 
($q\approx0$), that the hit rate for our first model is markedly higher than for the second. In particular, the hit rate of model 1 is consistently 100\% for a large range of the model parameters. On the other hand, the hit rate for model 2 is typically smaller, hovering around 50\% for the same parameter range, suggesting that the virtual loops which are present only cause the estimated co-memory to be $p_1+p_2$ in at most half of the sampled cases, while for the rest these loops are decoherent. This effectively causes the estimation of $\Omega_{\text{eff}}(\mathcal{G})$ to approach the scalar memory $\Omega(\mathcal{G})=p_1$.\\

\noindent To further complement this analysis, we have computed, for each temporal network realisation of models 1 and 2, the normalised frequency histogram {of the} dispersion $d=[\Omega_{\text{eff}}(\mathcal{G}) - (p_1+p_2)]/p_2$. For a {given set of realisations} of each temporal network model, $d$ accounts for how well the estimated effective memory {approximates} the theoretical one ($p_1+p_2$, induced by the virtual loop), normalized over $p_2$. Finding $d=0$ means that the estimation matches the theory and virtual loops govern the effective memory. For $d\ne0$, virtual loop decoherence sets in. Typically, we expect that in this {(and indeed most)} scenarios {we will observe} $d<0$, meaning that the estimated memory contribution of any virtual loops is smaller than $p_1+p_2$. When this contribution gets smaller, $d$ approaches its minimum value $d=-1$, {the case associated with} virtual loops {being} completely decoherent and the effective memory {coinciding} with the scalar memory $\Omega(\mathcal{G})=p_1$. In the right panel of Figure~\ref{fig:toy_hit_rates} we depict the histograms of $d$ for an ensemble of $10^4$ realisations of models 1 and 2, with parameters $q=0.9$, $y=0.25$ and with $p_1$ and $p_2$ samples uniformly randomly from the range $1,..,5$ (inclusive). We systematically find $d=0$ for model 1, as expected given that the hit rate is 100\% for this model. In the case of model 2, we find that the histogram is more scattered, favouring situations with $d<0$. This means that the memory contribution to the effective memory of the virtual loops is decreased, and accordingly $\Omega_{\text{eff}}(\mathcal{G})$ gets closer to $\Omega(\mathcal{G})$, although this trend is never reached as VLs never completely decohere (see however theorem \ref{theo:CDARN_VL} for a rigorous proof that full virtual loop decoherence can take place in some systems when the size of the temporal network is large enough). Interestingly, we also find that in a small percentage of the ensemble we find $d=1$. This apparent paradox can be explained {by exploring} the EDC curves for these cases (see Fig.\ref{fig:ACF_EDC}). In every case where we find $d=1$, the assignments happen to be $p_1=2, p_2=1$, and while the estimator selects $\Omega(E^1_t||E^1_t)=4$ instead of 3, notice that the EDC curve is essentially flat at that neighbourhood.\\

\begin{figure}[htb]
  \includegraphics[width = 0.3\textwidth]{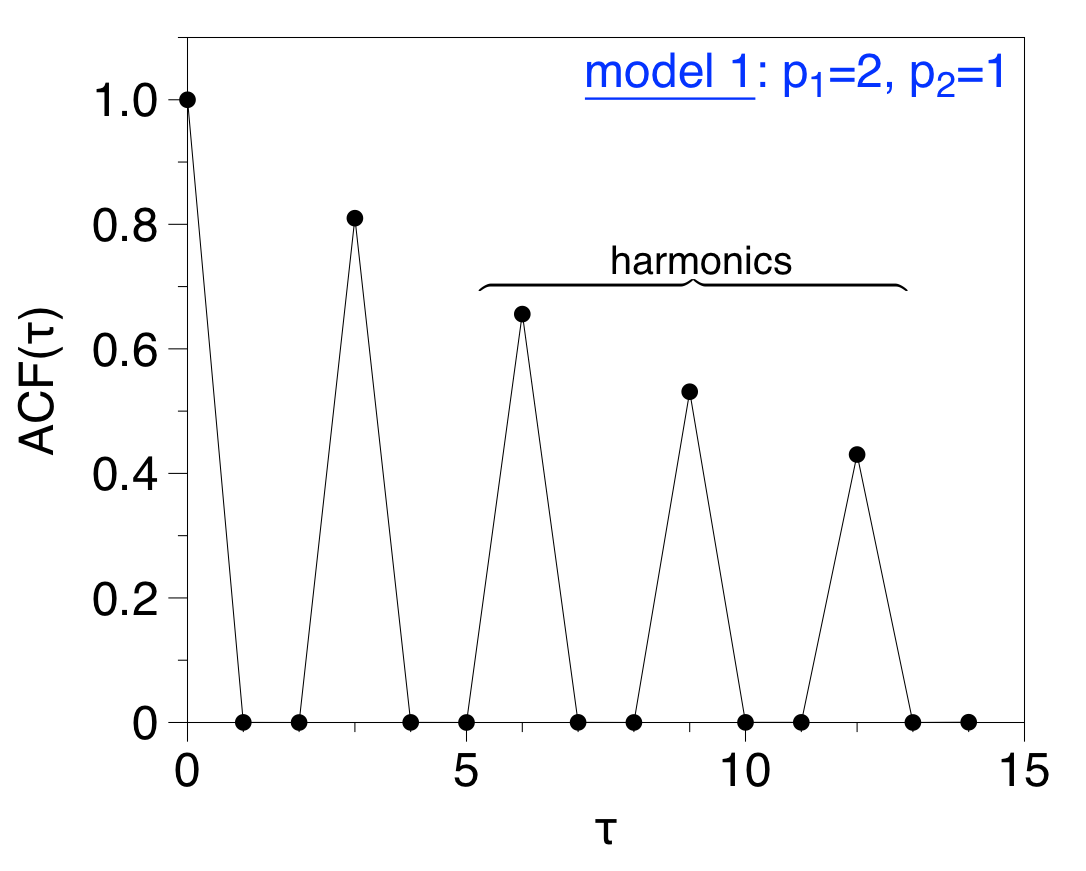}
       \includegraphics[width = 0.3\textwidth]{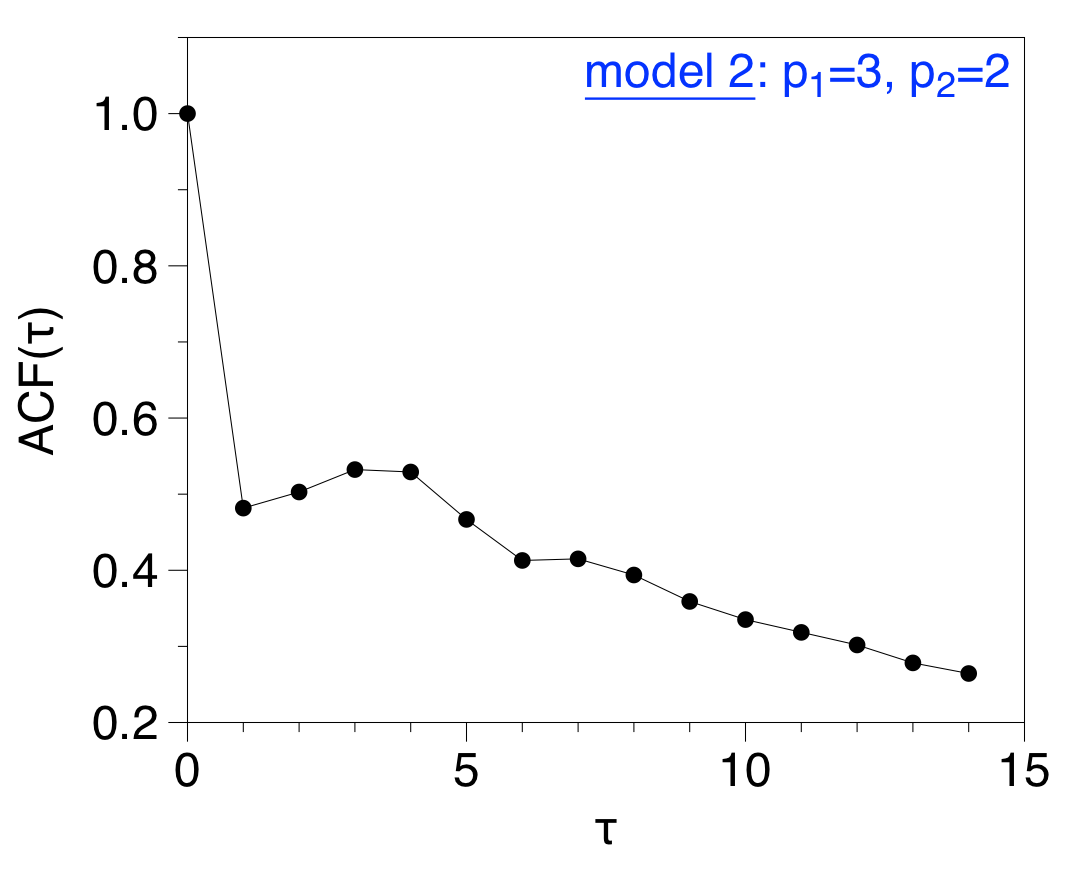}
               \includegraphics[width = 0.3\textwidth]{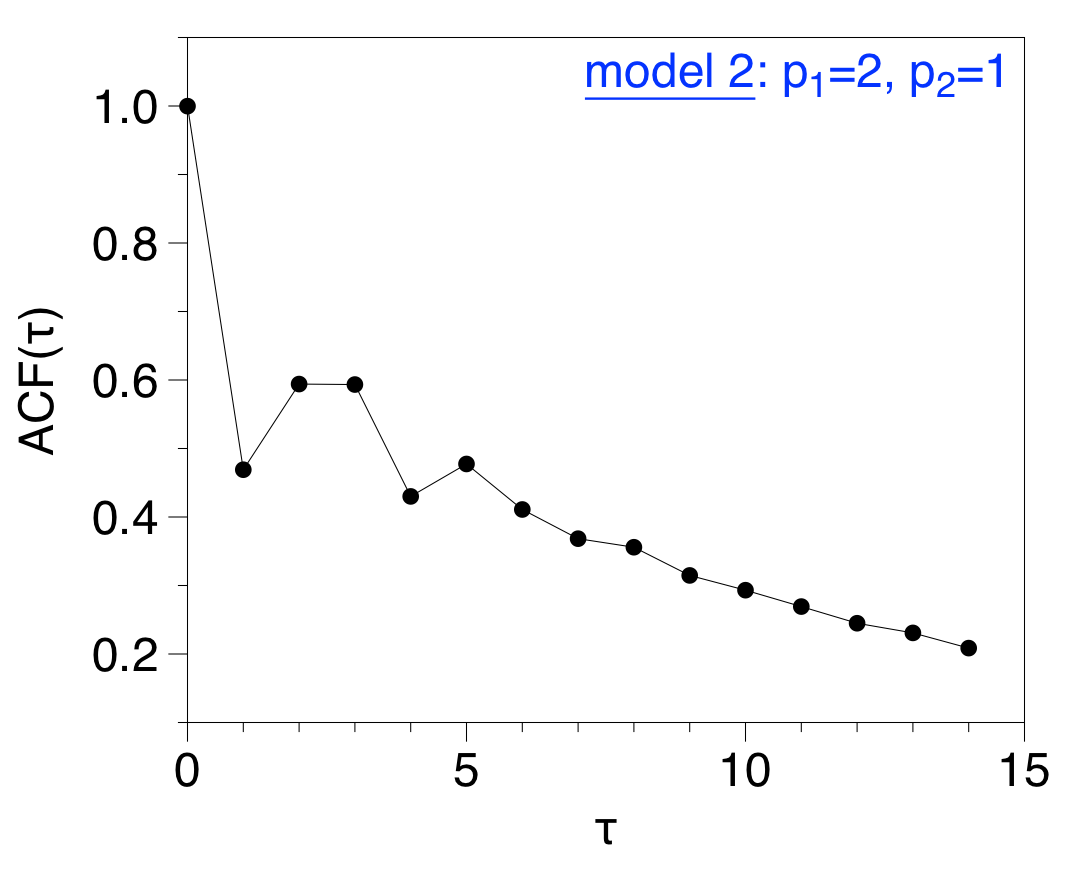}
   \includegraphics[width = 0.3\textwidth]{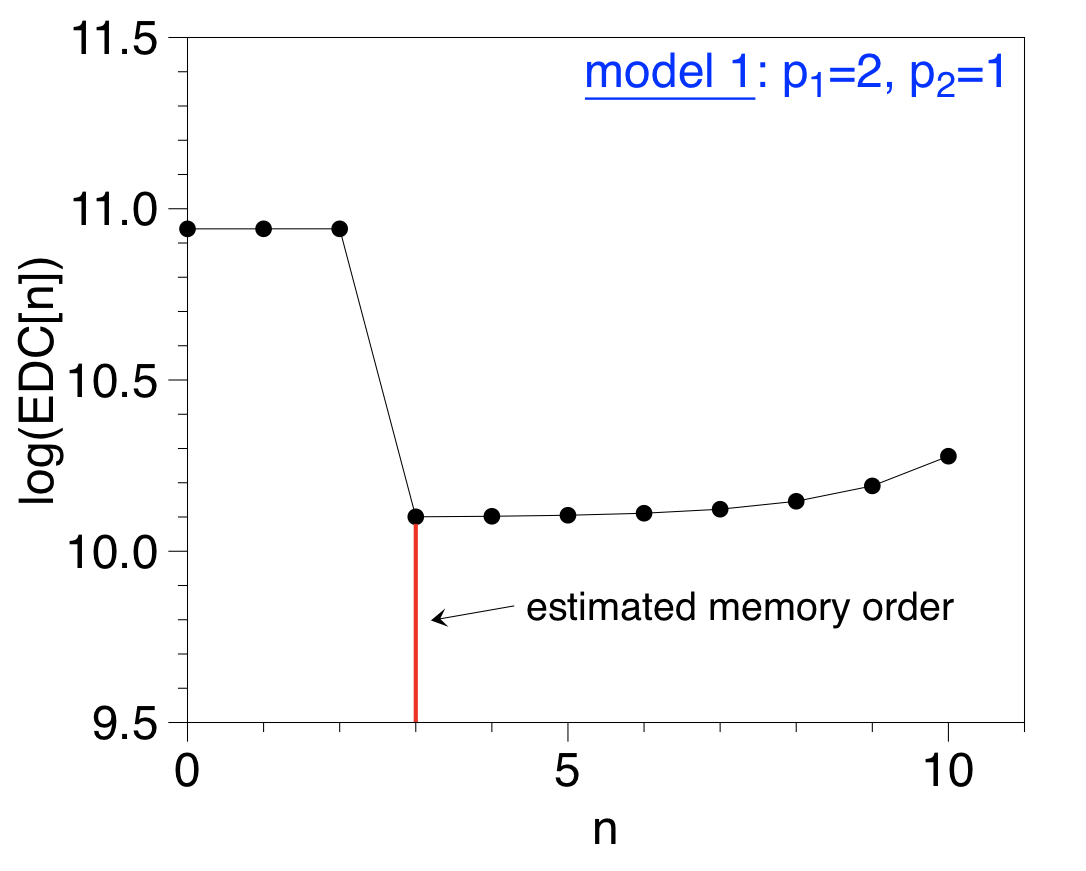}
   \includegraphics[width = 0.3\textwidth]{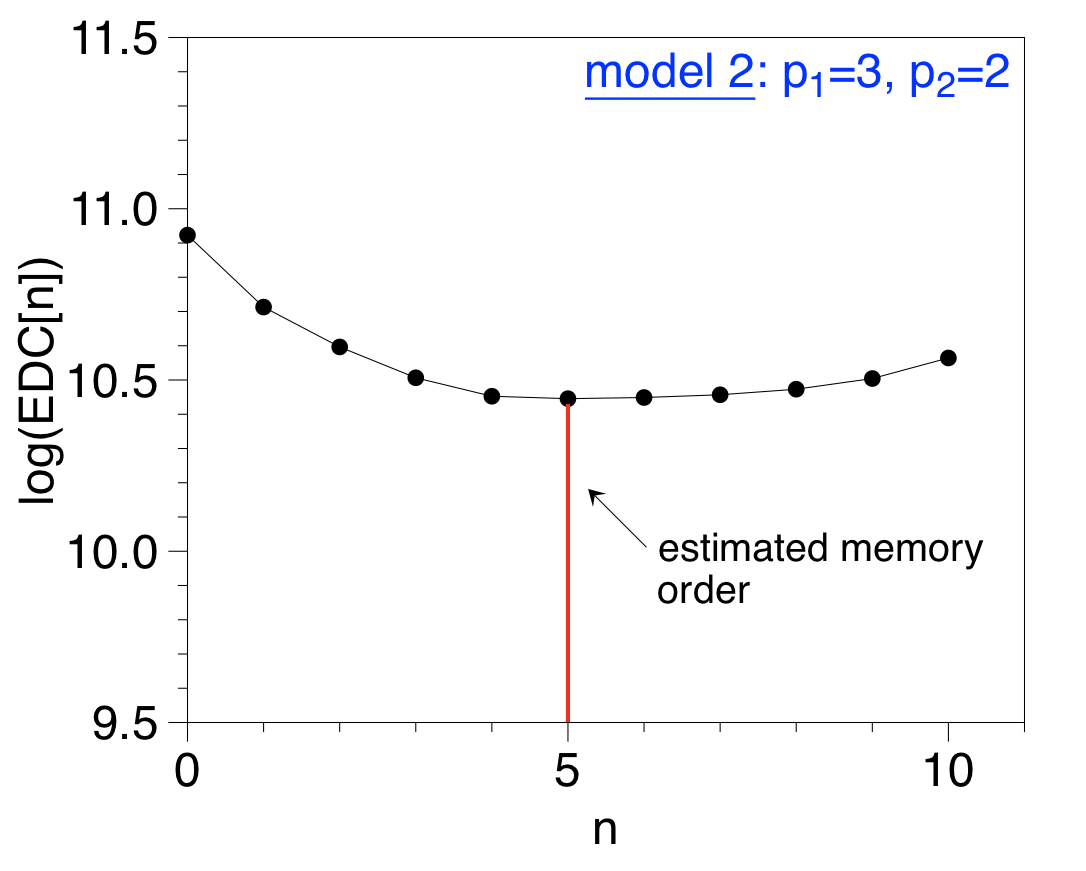}
   \includegraphics[width = 0.3\textwidth]{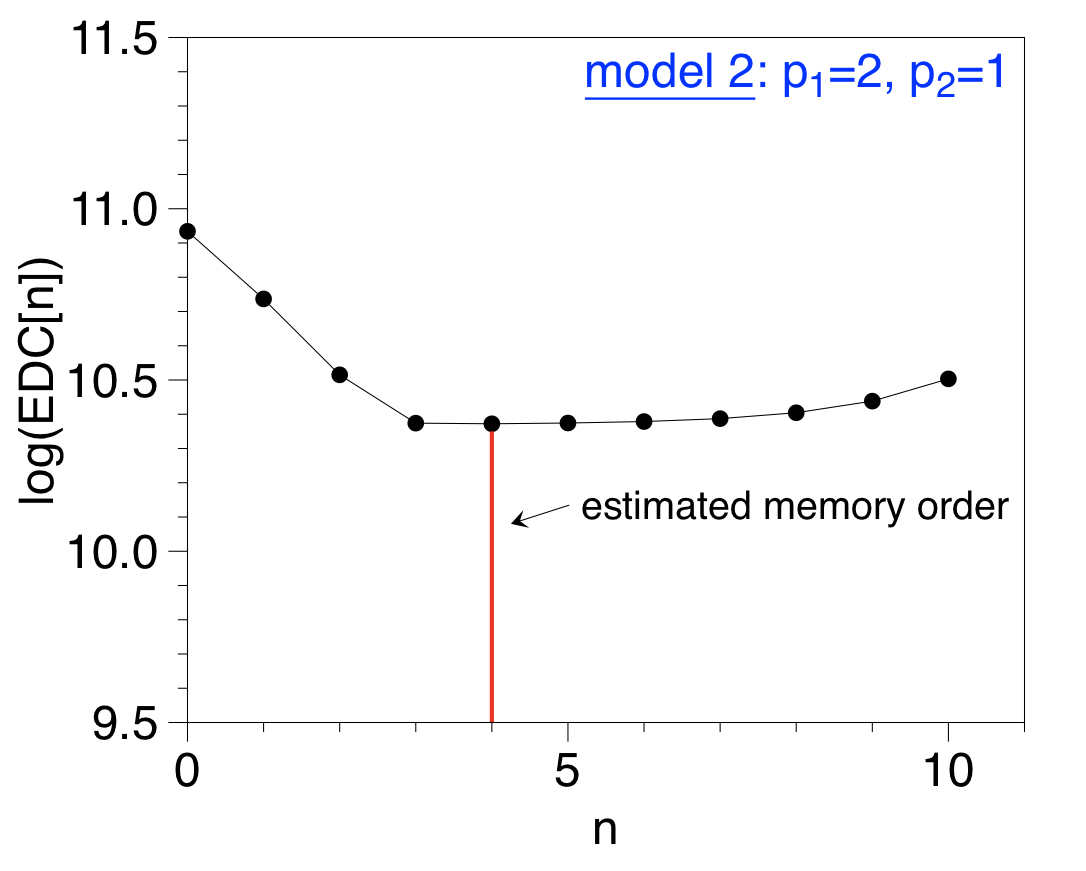}
\caption{{\bf Autocorrelation function and Efficient Determination Criterion curves for models 1 and 2.} Each column depicts the autocorrelation function ACF($\tau$) (top) and the efficient determination criterion curve EDC($n$) (bottom) of $\Omega(E^1_t || E^1_t)$, for one realisation of model 1 (left column) and two realisations of model 2 (middle and right columns). In every case the temporal networks have $10^5$ time steps. EDC estimates that the memory corresponds to the minimum of $\log(\text{EDC}(n))$. For model 1, the ACF clearly shows a peak at the correct memory $p_1+p_2$ and a succession of additional harmonics with smaller amplitude, and the EDC curve clearly captures the correct memory order. For model 2 the situation is less obvious: the ACF cannot be so easily interpreted. The two examples depict one where the theoretical memory $p_1+p_2=5$ is identified, and another where the estimation is different (note in this case that the EDC curve is relatively flat around $p_1+p_2$). }
\label{fig:ACF_EDC}
\end{figure}

\subsection{Epidemic Spreading defined on top of models 1 and 2}

Models 1 and 2 above provide examples where the scalar memory
$\Omega(\mathcal{G})$ of a temporal network is different from the
effective memory $\Omega_{\text{\text{eff}}}(\mathcal{G})$, due to the
emergence of virtual loops that affect the local memory structure of
the network. Now the question is, to what extent do the virtual loops
have a truly measurable effect and are therefore relevant in practice?
Here, we consider a spreading process over a temporal network.
  {We show that the dynamics of this spreading process are indeed highly sensitive
  to the shape of memory, and in the event a representative scalar quantity had to be used, we demonstrate accordingly that
  $\Omega_{\text{\text{eff}}}(\mathcal{G})$ is better suited
  to quantifying the real effects of memory than
  $\Omega(\mathcal{G})$, this last quantity being blind to any virtual
  loop contribution.}
  \\
\begin{figure}[htb]
  \includegraphics[width = 0.45\textwidth]{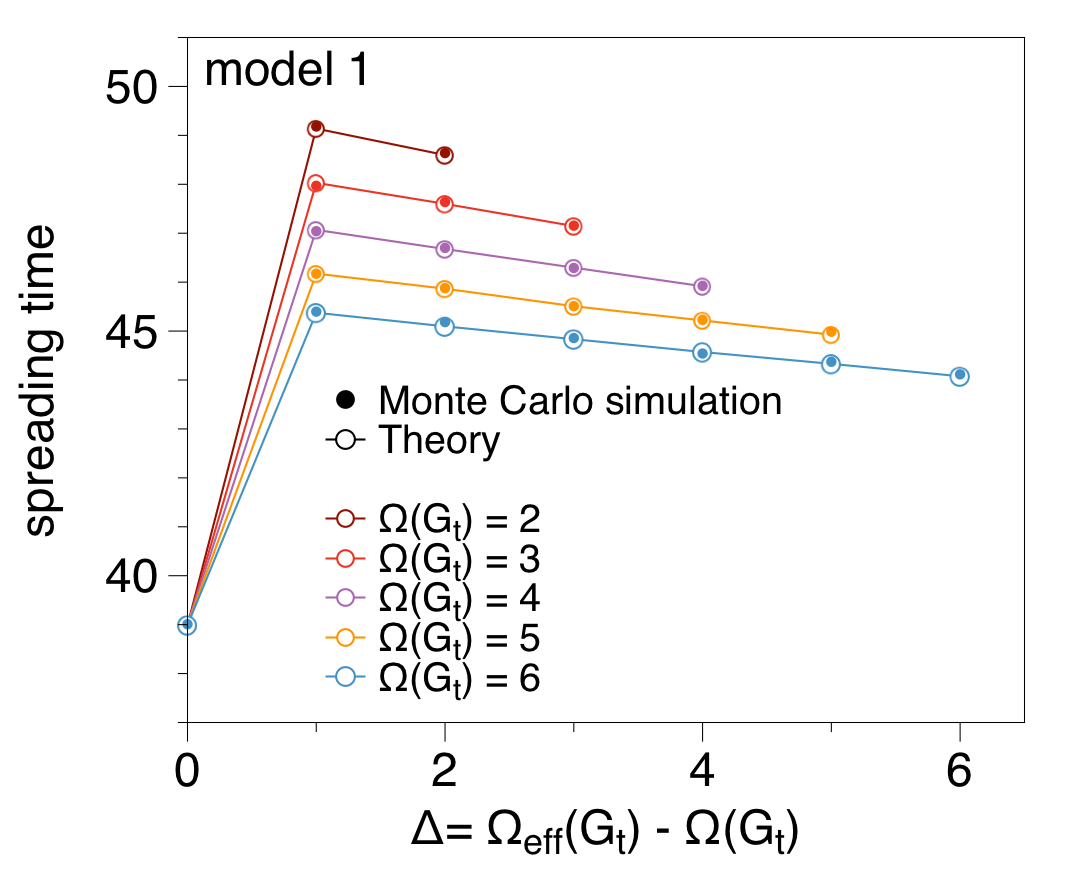}
    \includegraphics[width = 0.45\textwidth]{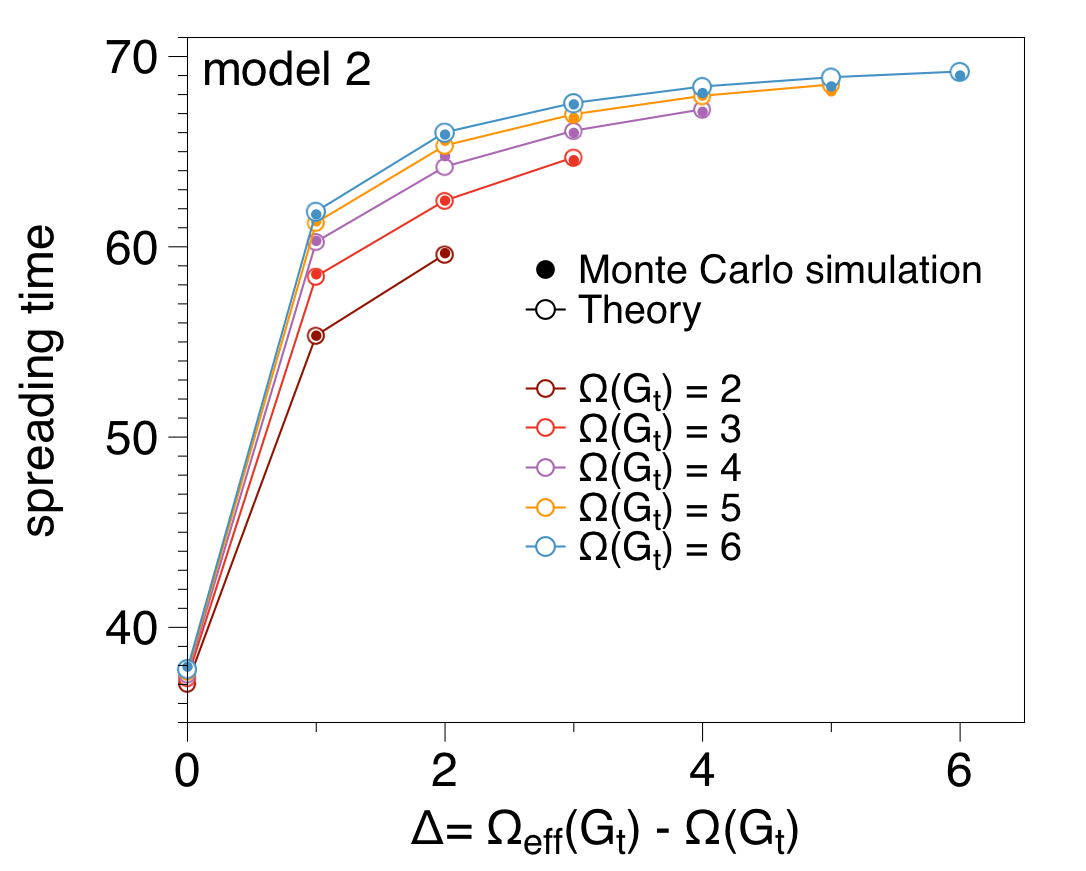}
\caption{Spreading times {for SI dynamics} on temporal
  networks generated by {\it
    model 1} (left) and {\it model 2} (right). Each curve represents
    networks with the same value 
    of the scalar memory $\Omega(\mathcal{G})$. For each of these
  networks the effective memory
  $\Omega_{\text{\text{eff}}}(\mathcal{G})$ derived from the co-memory
  matrix is different due to the emergence of virtual loops.
   Spreading times are shown to vary and this
   variation is well captured by changes in 
   $\Delta = \Omega_{\text{\text{eff}}}(\mathcal{G}) - \Omega(\mathcal{G})$,
   i.e. in the co-memory matrix. 
   Solid symbols are the results of Monte Carlo
  simulations, whereas hollow symbols are the theoretical predictions
  obtained by solving Eq. \ref{guacamoly}.}
\label{fig:toy}
\end{figure}

\noindent {We have implemented a Susceptible-Infected (SI) model for spreading dynamics on our
  two network toy models. In the SI model} a node can be in one of
two states: Infected (I) and Susceptible (S). At each
time step an infected node has a probability $\lambda$ of passing an
infection to any other node that it is connected to via a link.
Once a node is infected, it cannot become susceptible again,
the change is permanent. In our set-up we start the infection at node 1. 
The infection will then be transmitted over link 1 at time $t$ with a probability
$\lambda$ if $E_t^1 = 1$, while it will not be transmitted 
if $E_t^1 = 0$. Hence node 1 can infect node 2, then
from node 2 the infection can cross link 2 and finally infect the third node. 
To quantify the speed of the {spreading process} we will measure the
expected time taken to infect node 3, starting from node 1, and call
this the {\it spreading time}. This quantity can be evaluated either
via Monte Carlo simulations (averaging over several realisations of the
process), and also analytically.
As we will discuss below, both numerical simulations and analytical
  results reveal that the expected spreading time does indeed depend
on the virtual loops, i.e. on the precise structure of the co-memory
matrix. Furthermore, we will show that these effects {are} well
  accounted for by the effective memory
$\Omega_{\text{\text{eff}}}(\mathcal{G})$, while conversely they are not 
captured by the network memory $\Omega(\mathcal{G})$.

\subsection{Analytical {solutions of the SI dynamics  
 on }network models 1 and 2}

We here present the analytical derivation of the 
expected spreading time for an SI infection in our toy models.  We
will explicitly consider {\it model 2}. However, the same approach,
with small differences, which will be noted below, also works for model 1.

\noindent The stochastic processes ${\mathcal E}^1 = \{ E_t^1 \}_{t=1,2,\ldots}$ and
  $ {\mathcal E}^2 = \{ E_t^2 \}_{t=1,2,\ldots}$ are higher-order Markov chains 
  in the state-space $\{0,1\}$. 
They can then be transformed into 
{first order} Markov chains in an expanded state space.
The temporal network
 { ${\mathcal G} =  \{ {\mathcal E}^1, {\mathcal E}^2 \} = 
  \{ E_t^{1}, E_t^{2}\}_{t=1,2,\ldots}   $ } 
has realisation $(e^1_t, e^2_t)$
at time $t$. Since each link contains memory of the other, let us build 
this into a pair of ``state variables" $\alpha^1$ and $\alpha^2$, such that at a time $t$
$\alpha^1 = \{ e_t^1, e_{t-1}^1, ..., e_{t-(p_2-1)}^1 \}$ and similarly for $\alpha^2$. 
Any pair $(\alpha^1, \alpha^2)$ then captures all of the useful past states of the network.
Let the set of all such realisations
be denoted by $\mathcal{S}$, and the sets of possible values for $\alpha^1$ and $\alpha^2$
be $\mathcal{S}^1$ and $\mathcal{S}^2$ respectively. Link 1 has memory of the last $p_1$ 
steps of link 2, and link 2 has memory of the last $p_2$ steps of link 1, hence, since the link has 
two possible states at any one time, $|\mathcal{S}^1| = 2^{p_2}$ and   $|\mathcal{S}^2| = 2^{p_1}$.
For this to be useful we must additionally introduce some concept of ordering to the 
values of $\alpha^1$ and $\alpha^2$ by means of a labelling function. The simplest form
of this function, which we will here use, is 
\begin{equation}	
	l(\alpha^1) = \sum_{k=0}^{p_2} 2^k  \alpha^1_k,
\end{equation}
and similarly for $l(\alpha^2)$. 
 This is essentially taking the set of 0's and 1's
that represent the link histories contained in $\alpha$ and converting them to a
decimal number as if they were in binary.
We will implicitly assume that wherever we use $\alpha$,
or any state in $\mathcal{S}^1$ or $\mathcal{S}^2$, we are referring to the label $l(\alpha)$.\\

\noindent We can use this to describe the probabilistic evolution of each link over time. 
For initial states $(\alpha^1 , \alpha^2 )\in \mathcal{S} $ and target state $\beta^1$, the probability $\mathbb{P}(\alpha^1 \to \beta^1 | \alpha^1, \alpha^2)$ 
that link 1 goes from state $\alpha^1$ to state $\beta^1$
given $\alpha^1$ and $\beta^1$ defines a ``transition tensor" 
(as opposed to the traditional transition matrix) in the following way:
\begin{eqnarray}
\begin{aligned}
	T_{\alpha^1 \alpha^2}^{\beta^1} =& \left( q \frac{h(\alpha^2)}{p_1} + (1-q)y \right) \delta \left(\beta^1, 2^{p_1 - 1} + \floor{\frac{\alpha^1}{2}}\right) 
	+  \left(1- q \frac{h(\alpha^2)}{p_1} - (1-q)y \right) \delta \left(\beta^1, \floor{\frac{\alpha^1}{2}}\right),
\end{aligned}
\end{eqnarray}
where $h$ is the Hamming weight function, which counts the number of 1's in the
binary representation of its argument.
\\
Note that in the case of {\it model 1} this would be instead
\begin{eqnarray}
\begin{aligned}
	T_{\alpha^1 \alpha^2}^{\beta^1} =& \left( q s_{p_1}(\alpha^2) + (1-q)y \right) \delta \left(\beta^1, 2^{p_1 - 1} + \floor{\frac{\alpha^1}{2}}\right) 
	+  \left(1- q s_{p_2}(\alpha^2) - (1-q)y \right) \delta \left(\beta^1, \floor{\frac{\alpha^1}{2}}\right),
\end{aligned}
\end{eqnarray}
where $s_{p}(\alpha)$ is the value of the $p_{th}$ most significant entry in the binary representation
of $\alpha$, which is the state of the link at $p$ time steps in the past.
It is now a simple task to incorporate the spreading an of infection across a link:
we simply associate to each link another state $\iota$ which governs the infection.
$\iota^1 = 1$ if link 1 has passed an infection, and 0 if it has not. In this 
way the two pairs $(\alpha^1,\iota^1),(\alpha^2,\iota^2)$ completely describe the 
state of both the links and the infection passage over the system. 
What remains is to find the probabilities of an infection passing over each link given
the states of the links. That is $P(\iota^1 \to \bar{\iota}^1 | \beta^1, \iota^1) = {^1}\Lambda^{\beta^1,\bar{\iota}^1}_{\iota^1} $
and $\mathbb{P}(\iota^2 \to \bar{\iota}^2 | \beta^2, \iota^1) = {^2}\Lambda^{\beta^2,\bar{\iota}^2}_{\iota^1}$. 
Denoting by $H$ the set of link states where a link is present, i.e. $H := \{\alpha^i : l(\alpha^i) \geq 2^{p_j - 1} | j \neq i \}$,
and $L_{\bar{\iota}^i} = \lambda \delta(\bar{\iota}^i,1) + (1-\lambda)\delta(\bar{\iota}^i , 0)$, we can then write:
\begin{eqnarray}
\begin{aligned}
	{^1}\Lambda^{\beta^1,\bar{\iota}^1}_{\iota^1} = & \,
	\delta(\iota^1,1) \delta(\bar{\iota}^1,1) + \delta(\iota^1,0) \left( \chi_{H} (\beta^1) L_{\bar{\iota}^1} 
	+ (1 - \chi_{H}(\beta^1)) \delta(\bar{\iota}^1, 0) \right), \\	
	  {^2}\Lambda^{\beta^2,\bar{\iota}^2}_{\iota^1} =& \,
	  \chi_H(\beta^2) \left(\delta(\iota^1,1) L_{\bar{\iota^2}} + \delta(\iota^1,0) \delta(\bar{\iota}^2,0) \right)
	  + (1- \chi_H(\beta^2)) \delta(\bar{\iota}^2,0),
\end{aligned}
\end{eqnarray}
where $\chi_H(\alpha)$ is the indicator function for $\alpha$ in set $H$.
Using the transition tensor method outlined in \cite{Williams_2019} we can then find the expected spreading 
times $\tau_{\alpha^1,\alpha^2,\iota^1,\iota^2}$ for an infection 
given a starting state $\alpha^1,\alpha^2,\iota^1,\iota^2$ as the minimal solution to
the following set of linear equations:
\begin{equation}
	\tau_{\alpha^1,\alpha^2,\iota^1,\iota^2} = 1 + \sum_{\beta^1} \sum_{\beta^2} \sum_{\bar{\iota}^1} \sum_{\bar{\iota}^2} 
	T_{\alpha^1 \alpha^2}^{\beta^1} T_{\alpha^1 \alpha^2}^{\beta^2}  {^1}\Lambda^{\beta^1,\bar{\iota}^1}_{\iota^1}  {^2}\Lambda^{\beta^2,\bar{\iota}^2}_{\iota^1} \tau_{\beta^1,\beta^2,\bar{\iota}^1,\bar{\iota}^2}.
	\label{guacamoly}
\end{equation}
{This can then be averaged over some set of initial conditions
to give the expected spreading time.
In our case we will take the initial conditions to be the steady state of the network.
In practice this can be viewed as the state of the network after a large number of steps.
For numerical simulations we always allow the network to evolve to equilibrium before any 
spreading process is started, whereas for analytical calculations we take the steady state to be
the left eigenvalue of the transition matrix for the system corresponding to eigenvalue 1.} 
\\

\noindent To summarise, we fix the scalar memory {$\Omega(\mathcal{G})$ of the temporal network
in our toy models to be $p$ by fixing $p_1 = p$ (implicitly then $p_1 \geq p_2$).}
We then allow the {shape of the memory in the network} to
vary by changing the value of $p_2$, and measure the expected time
taken for an infection to spread over the three nodes in the
network as a function of $p_2$.
This has been done for a number of values of $p$, for both model 1 and model 2. 
The results are reported in Fig.~\ref{fig:toy}
where we plot the spreading times of the SI epidemics as a function of 
  $\Delta = \Omega_{\text{\text{eff}}}(\mathcal{G}) - \Omega(\mathcal{G})$.
  {This value $\Delta$ is used so that curves are aligned on the x-axis for any value of $p_1$.}
  Analytical results are in excellent agreement with Monte Carlo
simulations
and show that the quantity $\Omega_{\text{\text{eff}}}(\mathcal{G})$
  is able to describe well
differences in the relevant quantities that describe  
the dynamics of the SI process.
Conversely
$\Omega(\mathcal{G})$ is not able to account for the different values of
spreading times obtained in networks with the same $p_1$ and different $p_2$. 
Indeed, {in the event $\Omega(\mathcal{G})$ was well suited}, then spreading rates should remain constant, as $\Omega(\mathcal{G})$ is actually constant for all temporal networks corresponding to
each curve.  
Results indicate that the spreading time is actually not
constant, and this variation correlates with the effective memory
$\Omega_{\text{\text{eff}}}(\mathcal{G})$. This demonstrates that the
{shape of the memory, as defined by the}
co-order matrix $\mathbb{M}$ --and hence the effective memory-- are
far better at characterising the spreading rate than the scalar memory
of the temporal network.

\subsection{Inter-event time statistics}

With our toy models we have demonstrated that the scalar memory of a
temporal network $\Omega(\mathcal{G})$ is not necessarily the right
quantity to characterise the way that memory influences the spread of
an infection over the network.  In order to investigate further this
situation we can take a different, complementary approach. It is well
known that a memoryless (Poisson) stochastic process has an
exponential inter-event time distribution. Processes with
memory must have inter-event times which deviate from an exponential
distribution. A so-called burstiness parameter has been proposed in
  \cite{goh2008burstiness} to quantify such deviations. 
The burstiness parameter $B$ of a the time series is defined as:  
\begin{equation}
	B = \frac{\sigma - \left<\tau \right>}{\sigma + \left<\tau \right>},
\end{equation}
where $\left< \tau \right>$ and $\sigma$ are respectively mean and
standard deviation of the inter-event times.  The expression above is
equal to zero when the time series corresponds to a (memoryless)
Poisson process, since for an exponential distribution the mean and
standard deviation coincide.  When the time series is regular,
$\sigma=0$ and thus $B=-1$, so values of $B$ in the range $(0,-1)$
denote more regular behavior than Poisson.
On the other end, for $B>0$ the fluctuations in the inter-event times 
are larger than Poisson, denoting an increase of burstiness.
Finally, in the limit of large $\sigma$, e.g. when the interevent time series
is power law distributed, the value of $B$ tends to 1.\\

\noindent Given our previous example {of a spreading process},
it is not clear that burstiness of the link evolution processes should a-priori be 
associated with the scalar memory $\Omega(\mathcal{G})$ of the temporal network. 
Indeed we show now that this is not the case.
{If the behaviour of a link in a network is bursty, then that link displays memory, and
if it is not bursty then it does not have memory. Hence, we argue that
for a measure of the memory of a network to accurately capture the behaviour of inter event 
time statistics it must reflect this: if links are not bursty then the network should not have memory,
and if they are then the network should have memory.
We explore this in the context of our two toy models.}\\

\noindent We consider model 1 {with two parameter settings, namely:} 
(i) $p_1=1$ and $p_2=0$ and (ii) $p_1=p_2=1$. 
Notice that for both choices of parameters, models 1 and 2 are equivalent.\\
In case (i) it is easy to see that {the scalar}
memory and effective memory coincide,
$\Omega(\mathcal{G})=\Omega_{\text{eff}}(\mathcal{G})=1$, {and so 
we expect to see little or no burstiness,} whereas in 
case (ii) the temporal network is still first-order Markov, i.e. 
$\Omega(\mathcal{G})=1$, but the effective memory is larger,  
$\Omega_{\text{eff}}(\mathcal{G})=2$. In both cases, we have measured
the burstiness parameter $B$ as a function of the memory strength $q$
of a link. The results reported in
Fig.~\ref{fig:toy_burst}.
\begin{figure}[htb]
  \includegraphics[width = 0.5\textwidth]{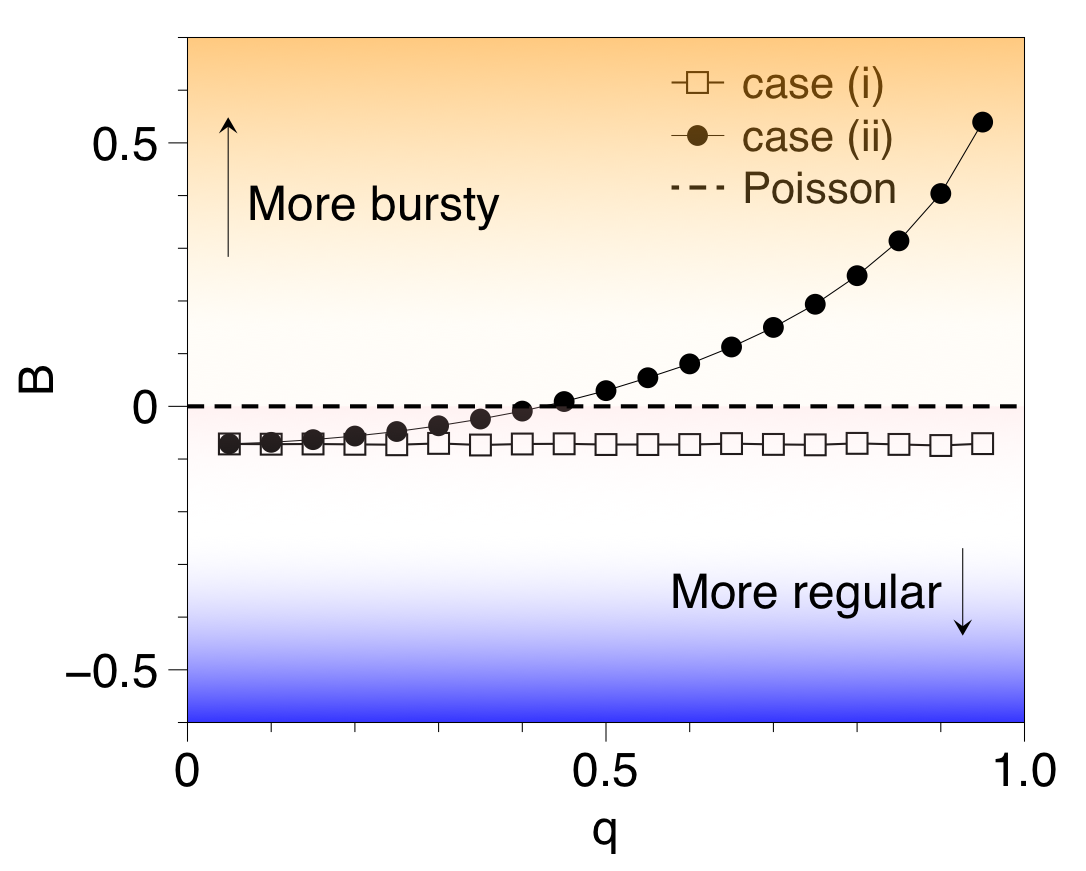}
  \caption{ Burstiness of link dynamics in model 1 as a function of the
    memory strength $q$. Two different cases have been considered 
    (i) $p_1=1, p_2=0$, for
    which $\Omega(\mathcal{G})=\Omega_{\text{eff}}(\mathcal{G})=1$),
    where $B\approx 0$ for the whole range, and (ii) $p_1=p_2=1)$,
    for which we still have $\Omega(\mathcal{G})=1$ but the effective
    memory is larger $\Omega_{\text{eff}}(\mathcal{G})=2$, where for a
    large range of values of $q$ we find $B>0$ which denotes
    burstiness in the link activity, hence highlighting the presence
    of memory in the internal activity of the links, a signature which
    is captured by $\Omega_{\text{eff}}(\mathcal{G})$ but not by
    $\Omega(\mathcal{G})$, {as the effective memory places the networks 
    in a regime where we can observe burstiness}.
    Each case is simulated for $10^8$ time
    steps.}
\label{fig:toy_burst}
\end{figure}
show that case (i) is non-bursty as expected, while case (ii) displays varying degrees
of burstiness, and therefore
  denoting the presence of a non-trivial memory shape,
even if the scalar memory is
still Markov. 
This is another indication that the memory
$\Omega(\mathcal{G})$ of the network does not 
provide a good description of dynamical dependencies and memories
 at the level of links and link pairs.
Which links are bursty, this being an indication of the presence of memory,
is far better captured by the effective memory.

\section{Validating the framework on synthetic temporal networks}

{In this section we provide more details on the models for generating
temporal networks with specific memory characteristics that we have
discussed in the main text. In particular, we have considered four
different types of generators of synthetic temporal networks: the 
so-called ``Discrete Auto-Regressive Network of order $p$" (DARN($p$)), extended DARN($p$),
``Correlated Discrete Auto-Regressive Network of order $p$" (CDARN($p$)) and extended
CDARN($p$) models. We will first introduce the models, find their  scalar memories, and then 
present a full account of their memory estimation. We will then discuss the presence of virtual loops,
and show that in some cases these loops become completely decoherent for large network sizes.}

\subsection{Model definitions and ground truth proofs for $\Omega(\mathcal{G})$ } 

  {
  All of the models here are derived from the so called ``Discrete Auto-Regressive process of order $p$", 
  or DAR($p$) process, introduced by Jacobs and Lewis \cite{jacobs1978discrete}. This time series incorporates a
  dependence on past states with random generation of new states in a simple way. In words, 
  at each step of the process one first decides if the new state will be random or drawn from memory,
  if it is random then we decide what it will be, if not then we copy a single value chosen among the past $p$
  states. Formally
  \begin{equation}
  	X_{t+1} = Q_t X_{t-Z_t} + (1-Q_t)Y_t. 
  \end{equation}
  Here $Q_t \in \{0,1\}$ and $Z_t \in \{1,...,p\}$.
  For our purposes we only consider $X_t \in \{0,1\}$ and so we must fix $Y_t \in \{0,1\}$.
  This gives us a random process with memory $p$ that can be extended to generate temporal networks.}

\begin{itemize}

\item DARN($p$):
{This model, introduced in \cite{Williams_2019}, represents the simplest possible extension of the 
DAR($p$) process to a temporal network: for a network with $N$ nodes 
we assign to each of the $L = N(N-1)/2$ possible links
an independent DAR($p$) process.
}
  For each link $\alpha=1,\dots,L$ with
  probability $q$ the state of the link is copied from its past,
  sampling uniformly at random from its {own history} up to $p$
  steps in the past.  With probability $1-q$ the link state will be
  drawn at random following a Bernoulli process with probability $y$.
  Summing up, the model depends on three parameters $\{p,q,y\}$.
  {The first parameter $y$ controls the density of the network. The second,
  $q$, tunes the strength of the memory term in the process with
  respect to the memoryless term. The final parameter, $p$, controls the
  length of the memory, which can be thought of as the number of time
  steps before the autocorrelation function decays exponentially \cite{Williams_2019}.} 
  Since each link in this model is
  independent we can make use of corollary \ref{corr:independent_mem},
  and see that we must have $\Omega(\mathcal{G}) = p$.

\item eDARN($p$): 
The extension to the DARN($p$) model, {as invented for this work}, is then the case where 
each link {is allowed to have a different memory length}, rather than there being a single fixed 
value of $p$. {Hence we specify that each link is governed by an independent DAR($p$) process
but the value of $p$ is allowed to vary for each link.} 
This gives us a way of generating temporal networks with independent links,
but with varying link memories. This model therefore depends on parameters $\{\rho(p),q,y\}$, where $\rho(p)$ is the distribution of memory lengths from which one samples the memory length of each link. If we define $\bar{p}$ as the maximum value for the memory which is drawn from the distribution $\rho(p)$, then, again, since links are independent in this model, we must have $\Omega(\mathcal{G}) = \bar{p}$ by corollary \ref{corr:independent_mem}.

\item CDARN($p$): {This model, as introduced in \cite{williams2019_diff}, represents a simple way of extending the DARN($p$) model to include correlations between the dynamics of links.} Similarly to the DARN($p$) model, however when the link is to copy its state from memory (with probability $q$), it does not necessarily copy this from its own past history: with probability $1-c$ it will copy from {its self}, and with probability $c$ it will choose one of the other $L-1$ links uniformly at random and copy that link state uniformly at random from the near past up to $p$ steps in the past. Clearly the DARN($p$) model is the special case of the CDARN($p$) model in which $c = 0$. This model depends on parameters $\{p,q,y,c\}$.\\
Formally, we can define the CDARN($p$) model in terms of random variables as follows:
the time varying adjacency matrix $A_t = \{a^{ij}_t\}$ is given by
\begin{equation}
	 a_t^{ij} = Q^{ij}_t a_{(t-Z^{ij}_t )}^{M^{ij}_t} + (1-Q^{ij}_t)Y^{ij}_t, 
\end{equation}
where $Q^{ij}_t \sim Bernoulli(q)$, $Y^{ij}_t \sim Bernoulli(y)$,
$Z^{ij}_t \sim Uniform(1,p)$ and $M^{ij}_t$ randomly picks among
the available links in the network, so that for a link process $a^{ij}_t$, 
$\mathbb{P} \left(M^{ij}_t = (i,j) \right) = 1- c$, and $\mathbb{P} \left(M^{ij}_t = (k,l) \right) = (1-c) / (L-1)$ 
for $(i,j) \neq (k,l)$ and $L = N(N-1)/2$. We now state and prove a theorem on the scalar memory of this temporal network process:
\begin{theorem}
Let $\mathcal{G}$ be a temporal network generated by CDARN($p$). Then $\Omega(\mathcal{G})=p$.
\end{theorem}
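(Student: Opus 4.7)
The plan is to establish $\Omega(\mathcal{G}) = p$ by proving the two matching inequalities $\Omega(\mathcal{G}) \leq p$ and $\Omega(\mathcal{G}) \geq p$ separately. The upper bound falls out almost for free from the generative definition of CDARN($p$), while the lower bound requires exhibiting a concrete dependence of the one-step conditional law on the network state exactly $p$ steps into the past. I would keep the proof purely at the level of one-step conditional probabilities, since the minimum Markov order is characterised in the paper by the smallest $p$ for which $\mathbb{P}(\mathbf{e}_t \mid \mathbf{e}_{0,t-1}) = \mathbb{P}(\mathbf{e}_t \mid \mathbf{e}_{t-p,t-1})$.

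For the upper bound, I would observe that $a_t^{ij}$ is a deterministic function of the auxiliary variables $Q_t^{ij}, Y_t^{ij}, Z_t^{ij}, M_t^{ij}$ (all independent across links and across time, hence independent of the past of the network) together with past link values $a_{t-s}^{kl}$ for $1 \leq s \leq Z_t^{ij}$. Because $Z_t^{ij}$ is supported on $\{1,\dots,p\}$, no information strictly older than $t-p$ ever enters the update. Multiplying over links and marginalising out the auxiliary variables gives $\mathbb{P}(\mathbf{e}_t \mid \mathbf{e}_{0,t-1}) = \mathbb{P}(\mathbf{e}_t \mid \mathbf{e}_{t-p,t-1})$, whence $\Omega(\mathcal{G}) \leq p$.

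For the lower bound, I would compute the one-step conditional directly by marginalising over the auxiliary variables, splitting according to the self-copy versus cross-copy channels, obtaining
\begin{equation}
\mathbb{P}(a_t^{ij}=1 \mid \mathbf{e}_{t-p,t-1}) = (1-q)y + \frac{q(1-c)}{p}\sum_{s=1}^{p} a_{t-s}^{ij} + \frac{qc}{p(L-1)}\sum_{(k,l)\neq(i,j)}\sum_{s=1}^{p} a_{t-s}^{kl}.
\end{equation}
The coefficient multiplying $a_{t-p}^{ij}$ is $q(1-c)/p$, which is strictly positive in the non-degenerate regime $q>0$, $c<1$. Picking two past realisations that agree everywhere except at time $t-p$ on the edge $(i,j)$ therefore produces two different values of this conditional, so $\mathbb{P}(\mathbf{e}_t \mid \mathbf{e}_{t-p+1,t-1}) \neq \mathbb{P}(\mathbf{e}_t \mid \mathbf{e}_{t-p,t-1})$ and no smaller Markov order can suffice, giving $\Omega(\mathcal{G}) \geq p$.

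The main issue is not really an obstacle but a matter of book-keeping: one has to be careful about the boundary parameter values. If $c=1$ the self-copy term vanishes and the witness instead comes from the cross-copy coefficient $qc/[p(L-1)]$, which is strictly positive whenever $L\geq 2$; if $q=0$ the process is trivially memoryless and the theorem statement is understood in the non-degenerate regime, matching the convention used for the DARN($p$) and eCDARN($p$) statements in the same subsection. Together, these cases cover every admissible parameter choice and close the proof.
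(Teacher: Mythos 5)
Your proposal is correct and follows essentially the same route as the paper's proof: both write down the explicit one-step conditional $\mathbb{P}(a^{\ell}_t=1\mid A_{t-1},\dots,A_{t-p})$, note that it is unchanged by appending older history (upper bound), and then argue that truncating to fewer than $p$ lags changes the conditional (lower bound). Your lower-bound step — exhibiting a strictly positive coefficient on $a^{ij}_{t-p}$ and two histories differing only there — is a slightly cleaner witness than the paper's marginalisation over the removed lags, and your explicit handling of the $c=1$ and $q=0$ boundary cases is a small improvement over the paper's parenthetical remark, but the substance is the same.
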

\proof{To prove that the scalar memory of a CDARN($p$) 
network is indeed $p$, let us consider the conditional probability of observing a link directly.
First, fix link $(i,j)$, and let us associate with each link a linear label $(i,j) \to \ell \in 1,...,L$, then the conditional probability of observing a link at time $t$ given the past $p$
steps is given by
\begin{equation}
	\mathbb{P}(a^{\ell}_t = 1 | A_{t-1},..., A_{t-p}) = (1-q)y + q \left( \frac{c}{(L-1)p} \sum_{\ell' \neq \ell} \sum_{k=1}^{p} a^{\ell'}_{t-k} + \frac{(1-c)}{p} \sum_{k=1}^{p} a^{\ell}_{t-k} \right).
\label{CDARNp_conditional}
\end{equation} 
Since, by construction, 
\begin{equation}
	\mathbb{P}(a^{1,L}_t = 1 | A_{t-1},..., A_{t-p}) = \prod_{\ell} \mathbb{P}(a^{\ell}_t = 1 | A_{t-1},..., A_{t-p}),
\end{equation}
and the dynamics of each link are symmetric under any
{relabelling of links in}
the temporal network, it is 
enough for us to consider a single link.\\ 
Now consider the same conditional, but with some number $\epsilon$ extra past steps: $\mathbb{P}(a^{\ell}_t = 1 | A_{t-1},..., A_{t-(p+\epsilon)})$. 
Since the memory term $t - Z_t^{\ell}$ will never take the values $t- (p+1)$ to $t-(p+\epsilon)$, the conditional in Eq.~\ref{CDARNp_conditional}
will be unchanged, hence $\Omega(\mathcal{G}) \leq p$. Now we look at what happens when we 
remove some number $d < p$ of past state from the conditional, defining $\delta = p-d$:
\begin{eqnarray}
\begin{aligned}
	\mathbb{P}(a^{\ell}_t = 1 | A_{t-1},..., A_{t-\delta}) = \sum_{a^{1,L}_{t-\delta,t-p}} \mathbb{P}(a^{\ell}_t = 1 | A_{t-1},..., A_{t-p})\mathbb{P}(a^{1,L}_{t-\delta,t-p}), \\
	= (1-q)y  \sum_{a^{1,L}_{t-\delta,t-p}} \mathbb{P}(a^{1,L}_{t-p}) +  \sum_{a^{1,L}_{t-\delta,t-p}} \sum_{k=1}^{p} \left( \frac{q(1-c)}{p} a^{\ell}_{t-k} \mathbb{P}(a^{1,L}_{t-\delta,t-p}) + \frac{qc}{(L-1)p} \sum_{\ell' \neq \ell} a^{\ell'}_{t-k} \mathbb{P}(a^{1,L}_{t-\delta,t-p}) \right).
\end{aligned}
\end{eqnarray}
From this not only do we see that $\mathbb{P}(a^{\ell}_t = 1 | A_{t-1},..., A_{t-\delta}) \neq \mathbb{P}(a^{\ell}_t = 1 | A_{t-1},..., A_{t-p})$, for any such $\delta$, assuming that $q,y\neq1$ or $0$, in which case there is no memory. Hence we must have 
$\Omega(\mathcal{G}) \geq p$. The only remaining option then is $\Omega(\mathcal{G}) = p$, concluding our proof.\qed}
\end{itemize}
\begin{itemize}
\item eCDARN($p$): The extension to the CDARN($p$) model, {as invented for this work,} is one in which the memory kernel from which we sample when the link copies from its own history can be different to the memory kernel when the link copies from another link. That is, not only are links allowed to have different memory lengths, but also the memory lengths {used} when a link refers to its self are allowed to be 
different to the memory length {used} when referring to other links. In this way the random variable $Z^{ij}_t$ now becomes dependent on the value of 
$M^{ij}_t$. If for a link $(i,j)$ the value $M^{ij}_t = (i,j)$, then $Z^{ij}_t \sim Uniform(1,p_{\text{self}}^{ij} )$,
but if $M^{ij}_t \neq (i,j)$ then $Z^{ij}_t \sim Uniform(1,p_{\text{other}}^{ij})$, for two, possibly different, values
of $p_{\text{self}}^{ij}$ and $p_{\text{other}}^{ij}$. The following statement can now be proved:
\begin{theorem}
Let $\mathcal{G}$ be generated by the eCDARN($p$) above. Then $\Omega(\mathcal{G}) = \max_{ij} \max (p_{\text{self}}^{ij}, p_{\text{other}}^{ij})$.
\end{theorem}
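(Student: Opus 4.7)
The plan is to mirror the CDARN($p$) proof given just above, adapted to the fact that each link now carries two distinct memory kernels of potentially different lengths. Set $P := \max_{ij} \max(p_{\text{self}}^{ij}, p_{\text{other}}^{ij})$. First I would write down the one-step conditional for a single link $\ell = (i,j)$ directly from the generating equation, which, since the copy-source and copy-lag are chosen independently, takes the form
\begin{equation*}
\mathbb{P}(a^{\ell}_t = 1 \mid A_{t-1},\ldots,A_{t-P}) = (1-q)y + \frac{q(1-c)}{p_{\text{self}}^{\ell}} \sum_{k=1}^{p_{\text{self}}^{\ell}} a^{\ell}_{t-k} + \frac{qc}{(L-1)\,p_{\text{other}}^{\ell}} \sum_{\ell' \neq \ell} \sum_{k=1}^{p_{\text{other}}^{\ell}} a^{\ell'}_{t-k}.
\end{equation*}
Because each link draws its randomness independently given the past, the joint network conditional factorises as $\mathbb{P}(A_t\mid\text{past}) = \prod_{\ell} \mathbb{P}(a^{\ell}_t \mid \text{past})$, exactly as in the CDARN($p$) argument.

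For the upper bound $\Omega(\mathcal{G}) \leq P$, I would note that no term in the single-link conditional references $a_{t-k}$ for $k > P$. Hence padding the conditioning set with additional past states and then marginalising them out preserves the conditional, so $\mathbb{P}(A_t \mid A_{t-1},\ldots,A_{t-P-\epsilon}) = \mathbb{P}(A_t \mid A_{t-1},\ldots,A_{t-P})$ for every $\epsilon \geq 0$.

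For the matching lower bound $\Omega(\mathcal{G}) \geq P$, I would pick a link $\ell^{*}$ achieving the outer maximum and split into two cases according to whether $P = p_{\text{self}}^{\ell^{*}}$ or $P = p_{\text{other}}^{\ell^{*}}$. In the first case, the coefficient of $a^{\ell^{*}}_{t-P}$ in $\mathbb{P}(a^{\ell^{*}}_t = 1 \mid \cdot)$ equals $q(1-c)/P$, which is nonzero in the non-degenerate regime $q \in (0,1)$, $c < 1$; in the second case, for any $\ell' \neq \ell^{*}$ the coefficient of $a^{\ell'}_{t-P}$ equals $qc/((L-1)P)$, nonzero whenever $c > 0$ and $L \geq 2$. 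Summing out $A_{t-P}$ against its (non-trivial) conditional marginal, as is done in the CDARN($p$) derivation, therefore strictly alters the $\ell^{*}$ conditional, hence also the joint conditional. Combining the two bounds yields $\Omega(\mathcal{G}) = P$.

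The main obstacle is organisational rather than mathematical: correctly tracking which of the two kernels $p_{\text{self}}^{ij}$, $p_{\text{other}}^{ij}$ realises the outer maximum for the distinguished link, and excluding the degenerate parameter settings ($q \in \{0,1\}$, or $c \in \{0,1\}$ when that channel is the one providing the maximum) in which one memory channel is effectively switched off and the statement reduces to a simpler sub-case --- precisely the same caveats tacitly assumed in the CDARN($p$) theorem.
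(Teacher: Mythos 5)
Your proposal is correct and follows essentially the same route as the paper's proof: write the single-link conditional with the two kernels, use the factorisation of the joint conditional over links, show that conditioning on more than $P$ past states changes nothing while dropping states below $P$ alters the conditional for a maximising link (the paper defers this last step to "a trivial extension" of its CDARN($p$) argument, which you simply spell out). Your explicit case split on which kernel attains the maximum and the exclusion of the degenerate parameter values $q\in\{0,1\}$, $c\in\{0,1\}$ are refinements of, not departures from, the paper's argument.
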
 
\proof{As before, 
we first write down the conditional probability:
\begin{equation}
	\mathbb{P}(a^{\ell}_t = 1 | A_{t-1},..., A_{t-p}) = (1-q)y + q \left( \frac{c}{(L-1)p_{other}} \sum_{\ell' \neq \ell} \sum_{k=1}^{p_{other}} a^{\ell'}_{t-k} + \frac{(1-c)}{p_{self}} \sum_{k=1}^{p_{self}} a^{\ell}_{t-k} \right).
\label{eCDARNp_conditional}
\end{equation} 
Again, by construction in this model we have
\begin{equation}
	\mathbb{P}(a^{1,L}_t = 1 | A_{t-1},..., A_{t-p}) = \prod_{\ell} \mathbb{P}(a^{\ell}_t = 1 | A_{t-1},..., A_{t-p}),
\end{equation}
hence $p = \min_n (n: \mathbb{P}(a^{1,L}_t  | A_{t-1},..., A_{t-n}) = \mathbb{P}(a^{1,L}_t  | A_{t-1},..., A_{t-\infty}))$ 
if and only if $ p \geq \min_n (\mathbb{P}(a^{\ell}_t  | A_{t-1},..., A_{t-n}) = \mathbb{P}(a^{\ell}_t  | A_{t-1},..., A_{t-\infty}))$ for all $\ell$
and $ p = \min_n (\mathbb{P}(a^{\ell}_t  | A_{t-1},..., A_{t-n}) = \mathbb{P}(a^{\ell}_t  | A_{t-1},..., A_{t-\infty}))$ for at least one value of $\ell$. 
Hence, if for each $\ell$ we have $ \min_n (\mathbb{P}(a^{\ell}_t  | A_{t-1},..., A_{t-n}) = \mathbb{P}(a^{\ell}_t  | A_{t-1},..., A_{t-\infty})) = \max(p_{\text{self}}, p_{\text{other}})$, then we must have $\Omega(\mathcal{G}) =  \max_{ij} \max (p_{\text{self}}^{ij}, p_{\text{other}}^{ij})$.
All that then remains to prove is that $ \min_n (\mathbb{P}(a^{\ell}_t  | A_{t-1},..., A_{t-n}) = \mathbb{P}(a^{\ell}_t  | A_{t-1},..., A_{t-\infty})) = \max(p_{\text{self}}, p_{\text{other}})$. This is a trivial extension of the proof for the CDARN($p$) model, but with the conditional probability 
now being of the form in Eq.~\ref{eCDARNp_conditional}. Hence we must have that $\Omega(\mathcal{G}) = \max(p_{\text{self}}, p_{\text{other}})$, 
as required.\qed}
\end{itemize}

Finally, each model in practice also depends on an additional variable: the length of time series $T$ for which it is sampled.
While this does not influence the dynamics of the model it will influence any estimated value for
the memory, and so we consider it here.
Altogether, these models give us a wide range of test cases with a number of features that we might expect from 
real world networks.

\begin{remark} It's important to note that, in theory, virtual loops cannot emerge in the DARN($p$) or the eDARN($p$) models, but in principle should emerge in the CDARN($p$) and eCDARN($p$) models as in these latter cases we are probabilistically coupling links, and {this} coupling can induce casual loops (possibly of different orders) among sets of links. This means that we expect $\Omega_{\text{\text{eff}}}(\mathcal{G})$ to coincide with $\Omega(\mathcal{G})$ for the DARN($p$) or the eDARN($p$) models, but we should find a difference in the CDARN($p$) {and eCDARN($p$)} models. In the next subsection we will investigate the extent of that mismatch, and the role played by virtual loop decoherence.
\end{remark}

\begin{remark}
{In general, when a given link $\alpha\in[1,2,\dots,L]$ samples its future state from the past of a different link in an eCARN($p$) model, one can specify which is the set of links from which $\alpha$ will sample from. As discussed in section IIIb, a natural way to encode this is by building up a Bayesian causal graph of $L$ nodes, where each of the nodes corresponds to a link in the original temporal network. The $L \times L$ adjacency matrix ${\bf C}=\{c_{\alpha\beta}\}$ is such that $c_{\alpha\beta}=1$ if $\beta$ is in the set from which $\alpha$ can sample its future from, and 0 if $\beta$ is not in this set.\\
Different Bayesian causal graphs can thus be specified as to describe the set of links' past from which a given link copy its future state. In Fig1 of the main manuscript we choose two different examples to showcase how different link causal structures bring about different co-memory matrices. Example (a) in Fig 1 of the main manuscript has a Bayesian causal ring (i.e. $c_{\alpha\alpha\pm1}=1$ and $c_{\alpha\beta}=0$ for $\beta\ne \alpha\pm1$, subject to periodic boundary conditions), where link $\alpha \in [1,2,\dots,L]$ updates its future state either from its past uniformly between 0 and $p$, or from the past of $\alpha \pm 1$. Example (b) in Fig1 of the main manuscript on the other hand has more elaborate Bayesian causal graph described in fig.\ref{fig:cau} below, and when link $\alpha$ updates its future, it does so following an eCDARN($p$) model by either looking at its own past and randomly sampling it between 0 and $p_\text{self}^\alpha$ states in the past, where  $p_\text{self}^\alpha \sim Uniform\{3,6\}$, or sampling the past of one of $\alpha$'s neighbourhood in the link causal graph, by copying uniformly between 0 and $p_\text{other}^\alpha$ states in the past, where $p_\text{other}^\alpha\sim Uniform\{3,6\}$.}
\end{remark}

\begin{figure}[htb]
  \includegraphics[width = 0.38\textwidth]{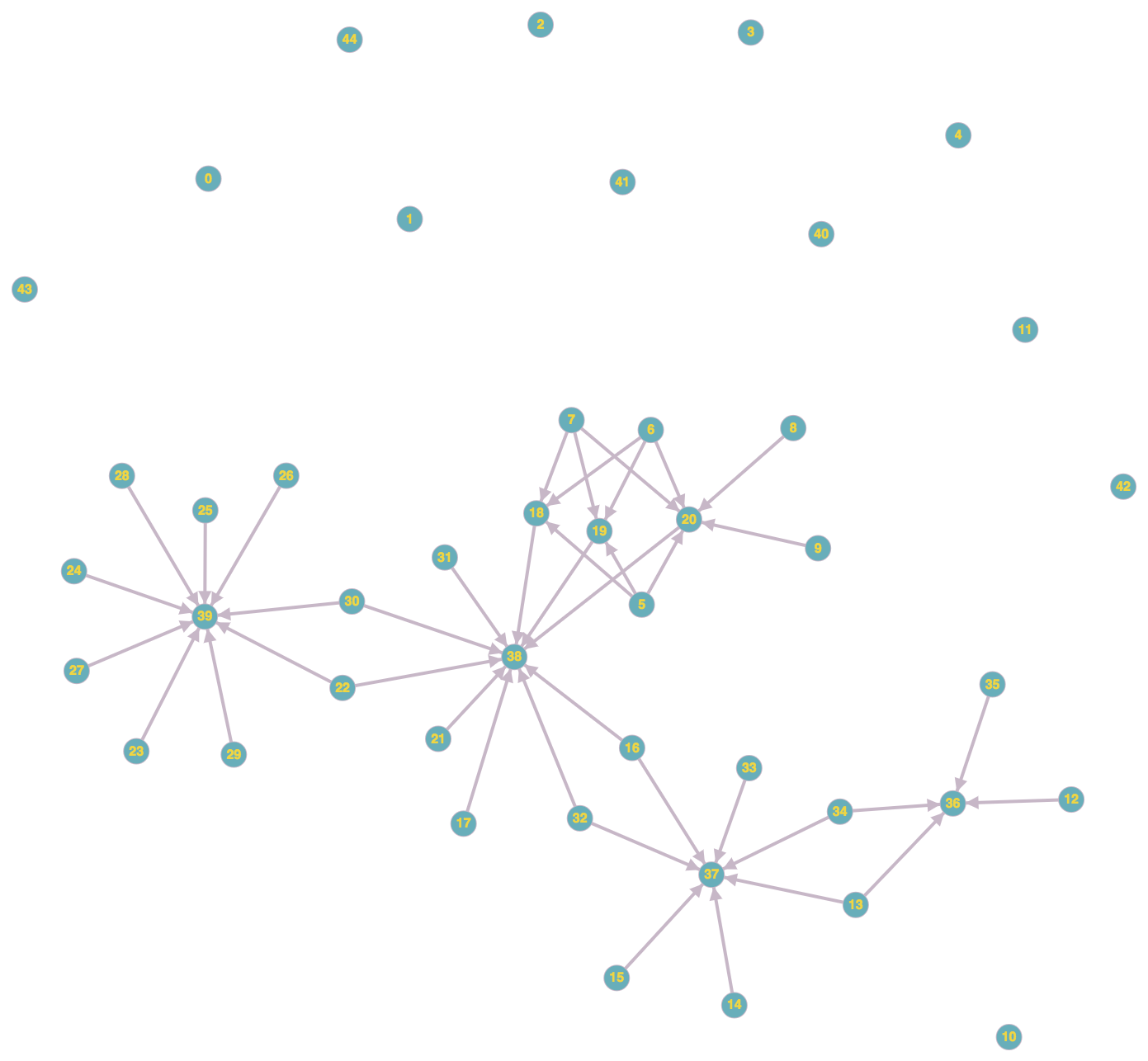}%
\caption{Bayesian causal graph of the specific eCDARN($p$) model used in the panel (b) of Fig.1 of the main manuscript. Nodes in this graph correspond to the links in the original temporal network, and two nodes are connected by a directed edge if the associated pair of links in the original network are causally connected. In this particular examples, only a subset of links in the original temporal network can actually update their state from the past of other links, where some others evolve independently.}
\label{fig:cau}
\end{figure}

\subsection{Estimator accuracy on synthetic networks}
For each of the four synthetic models we have described, 
we have studied two key quantities that help us to compare the estimated values of both 
$\Omega_{\text{\text{eff}}}$ and $\Omega_{\text{pair}}$ with the analytical value of $\Omega(\mathcal{G})$.
These are the {\it hit rate} and the {\it average distance}, as earlier presented in Fig.~\ref{fig:toy_hit_rates}.
The hit rate gives the probability that a given estimate for the value $\Omega_{\text{\text{eff}}}$ (or $\Omega_{\text{pair}}$) 
is {\it precisely} the scalar memory of the network, while the average distance gives the 
value of either $\left| \Omega_{\text{\text{eff}}}(\mathcal{G}) - \Omega(\mathcal{G}) \right|$ or $\left| \Omega_{\text{pair}}(\mathcal{G}) - \Omega(\mathcal{G}) \right|$,
averaged over several realisations of the network model.
In each case the networks generated have a fixed number of nodes $N=10$.
We then allow in turn one of the parameters $q,y,T$ (and, where applicable $c$) to vary,
while fixing the others to the following values: $q=0.9, y= 0.1, c=0.1, T=10^6$.
For each set of parameters $10^3$ realisations of the model are generated, each with 
randomly chosen values for the memory lengths 
($p$ for DARN($p$) and CDARN($p$), $p^{ij}$ for eDARN($p$) and ($p^{ij}_{\text{self}},p^{ij}_{\text{other}}$) for eCDARN($p$))
from the range $1,...,10$. We then plot the hit rates and average distances for each model 
as a function of each free parameter in Fig.~\ref{fig:synthetic_hit_rates}.\\

\begin{figure}[htb]
  \includegraphics[width = 0.4\textwidth]{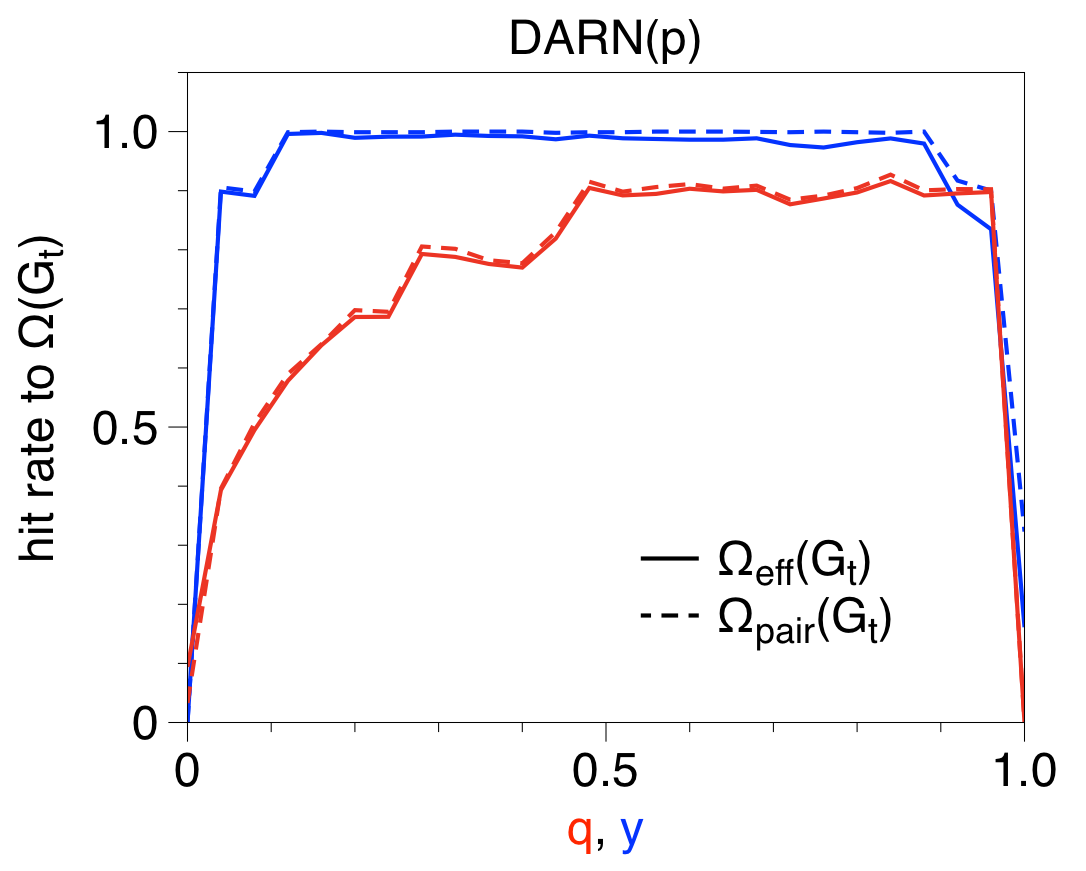}
    \includegraphics[width = 0.4\textwidth]{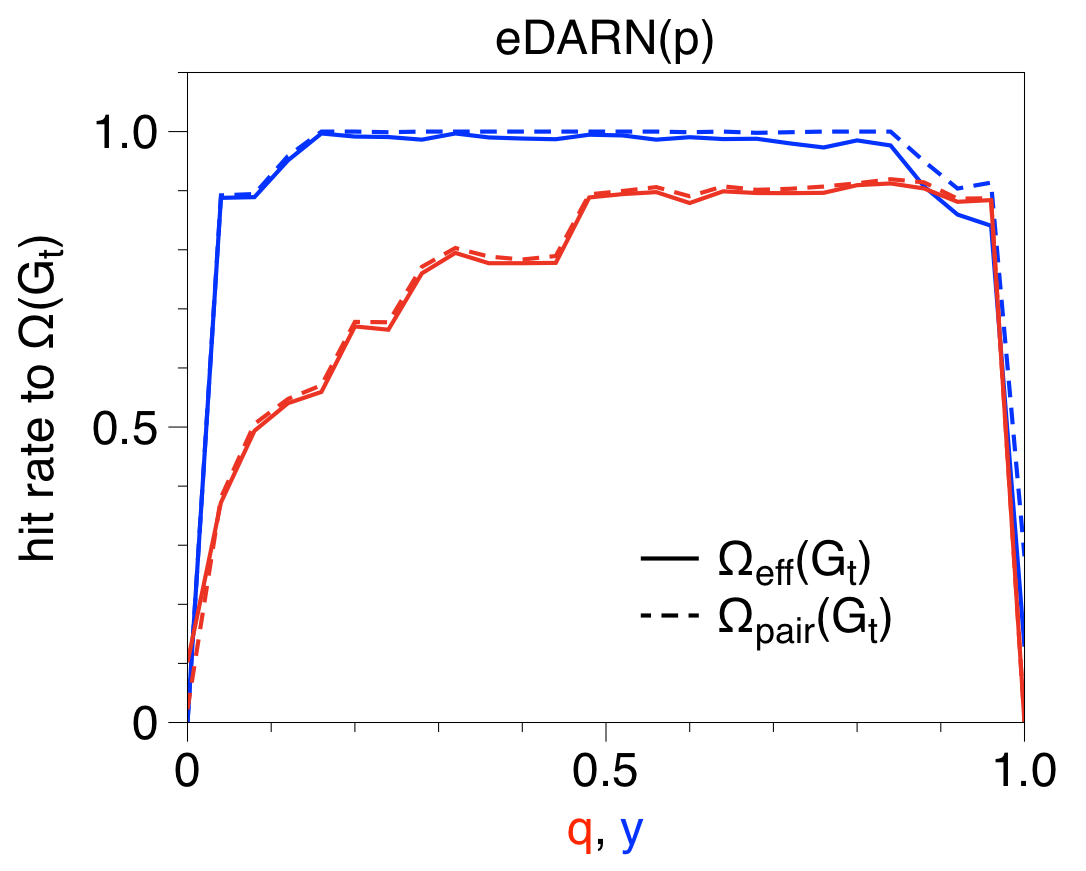}
      \includegraphics[width = 0.4\textwidth]{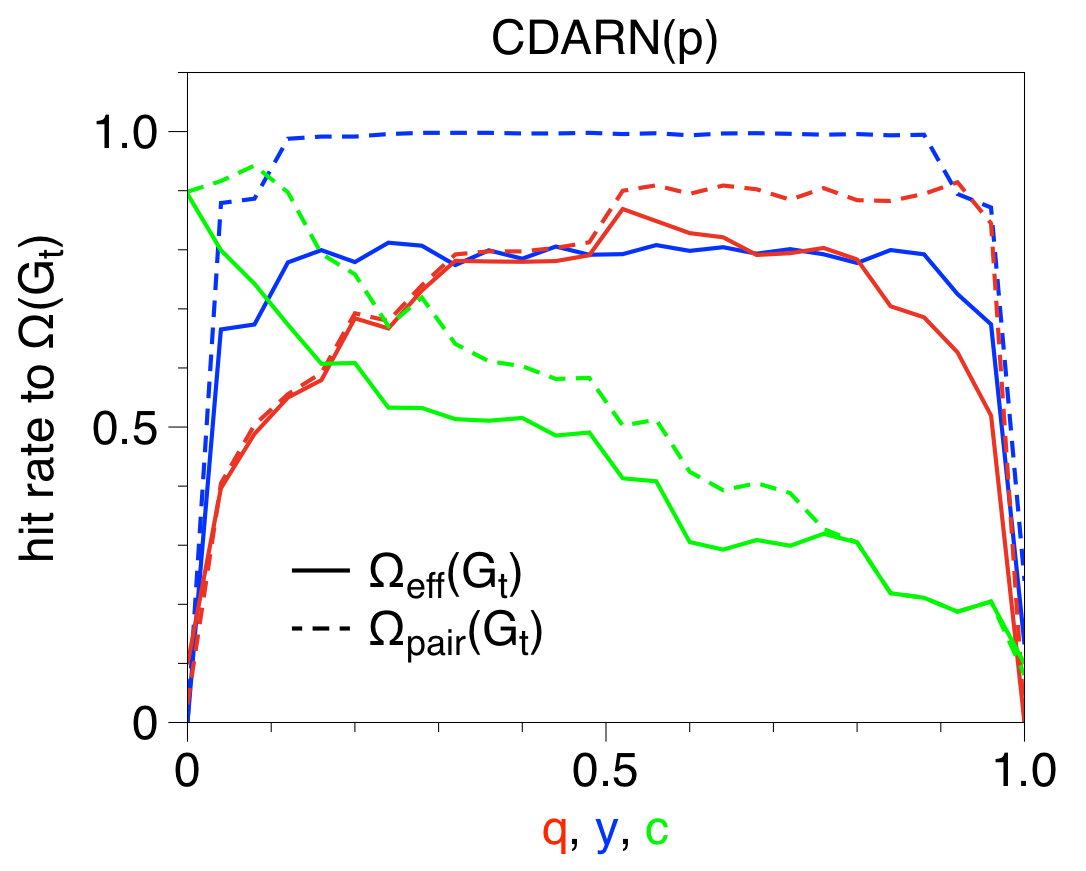}
        \includegraphics[width = 0.4\textwidth]{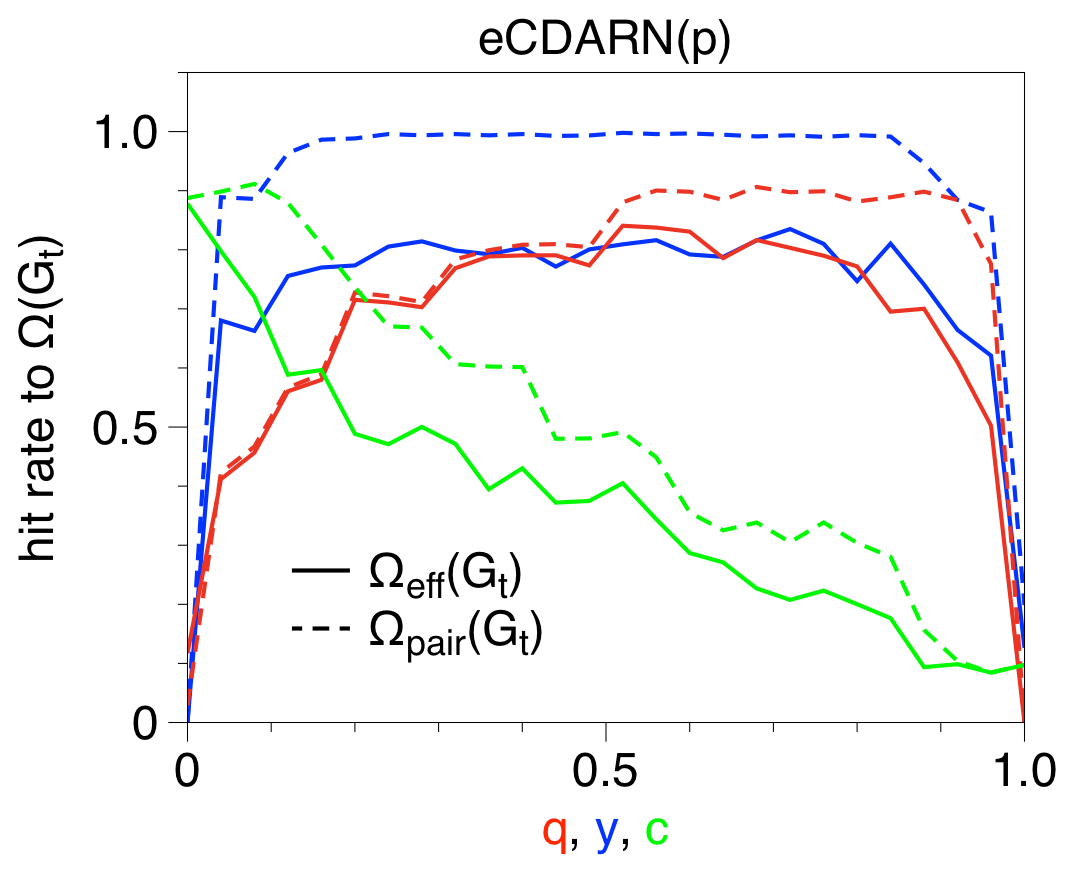}

\caption{{\bf Hit rates to scalar memory $\Omega(\mathcal{G})$ for synthetic models.} In each case we compute the percentage of the times within an ensemble of $10^3$ realisations that the estimated effective memory $\Omega_{\text{eff}}(\mathcal{G})$ and estimated pair memory $\Omega_{\text{pair}}(\mathcal{G})$ exactly match the scalar memory $\Omega(\mathcal{G})$ (the memory parameter $p$ is randomly sampled from \textsc{Uniform}\{{1,...,10}\} for each realisation). 
Models depend on parameters $q, y$ and (where applicable) $c$, so each curve scans hit rates for the whole range of a given parameter and fix the values of the other parameters to $q=0.9, y= 0.1, c=0.1$ (in every case, time series size is $T=10^6$). In DARN($p$) and eDARN($p$) models where virtual loops are by construction absent, $\Omega_{\text{eff}}(\mathcal{G})=\Omega_{\text{pair}}(\mathcal{G})$ and their estimation typically coincide with $\Omega(\mathcal{G})$ for a large range of model parameters, as expected. In CDARN($p$) and eCDARN($p$) models, (probabilistic) virtual loops are expected to kick in, inducing a mismatch between $\Omega_{\text{eff}}(\mathcal{G})$ and $\Omega(\mathcal{G})$ (the mismatch is notably smaller for $\Omega_{\text{pair}}(\mathcal{G})$ as this quantity disregards diagonal entries of the co-memory matrix).
}
\label{fig:synthetic_hit_rates}
\end{figure}
\noindent Let us discuss these results in detail. Considering the DARN($p$) and eDARN($p$) models:
\begin{itemize}
\item First, we observe that 
both $\Omega_{\text{\text{eff}}}(\mathcal{G})$ and $\Omega_{\text{pair}}(\mathcal{G})$ coincide with $\Omega(\mathcal{G})$ and provide a very good estimates for a wide 
range of parameters. Indeed they are both 100\% accurate when $q=0.9$
and $y$ is between $\sim0.2$ and $\sim0.8$. 
\item Notice that the ranges where the performance is worse are actually expected: when $y$ is either 
very small or very large, then each link series $E_t^i$ will be dominated by 
either 0 or 1, in the limit of all 1's or 0's we would observe no memory as the system is 
deterministic, and indeed close to this we would expect it to be hard to observe any memory.
This manifests as a sharp drop in both hit rate and average distance.
\item When $q$ is small we would also expect memory to be harder to detect, as 
it is used less often, and hence any correlations with the past are less significant.
However, the increase in hit rate is a more smooth function of $q$, as is the average distance.
\item The two memories ($\Omega_{\text{\text{eff}}}(\mathcal{G})$ and $\Omega_{\text{pair}}(\mathcal{G})$) also perform identically 
for both models. This is again expected: by construction there are no virtual loops of any kind in these network models,
and so $\Omega_{\text{\text{eff}}}(\mathcal{G}) = \Omega_{\text{pair}}(\mathcal{G})$.
\end{itemize}
In the case of the CDARN($p$) and eCDARN($p$) models:
\begin{itemize}
\item  We observe that $\Omega_{\text{\text{eff}}}(\mathcal{G}) \ne \Omega_{\text{pair}}(\mathcal{G})$. This is again expected: note that by construction the CDARN($p$) and eCDARN($p$) models should have a considerable number of 
virtual loops, as induced by the cross-correlations that are present between each link, and indeed we would
expect these virtual loops to be of a variety of orders. By definition, $\Omega_{\text{pair}}(\mathcal{G})$ cannot capture {the effects of virtual loops}
{of order greater than one}, hence the mismatch between {these two measures}.
\item Interestingly,  we also observe that under a range of parameters, both $\Omega_{\text{\text{eff}}}$ and $\Omega_{\text{pair}}$ remain quite close to $\Omega(\mathcal{G})$, manifesting a strong virtual loop decoherence in these cases.
\item Similarly to before, there are also ranges of both $y$ and $q$ where `performance' is worse, and this is also expected.

\item {We also see that as $c\to 1$ the hit rate significantly drops.
This can be explained by an increase in the significance of virtual loops.
When $c=0$ then links are independent, and so there are no virtual loops to 
influence $\Omega_{\text{eff}}(\mathcal{G})$. As we increase $c$ we allow for virtual loops, however 
they will be decoherent, and so the hit rate will still initially be high. As $c$ approaches 1,
we have made the influence of virtual loops as strong as possible, and so the hit rate will be 
at its lowest.}
\end{itemize}

\subsection{Virtual loops in the synthetic networks}
Let us further emphasise here the role played by virtual loops in the four synthetic models discussed above.
The hit rate and average distance between $\Omega(\mathcal{G})$ and $\Omega_{\text{\text{eff}}}(\mathcal{G})$ 
demonstrate a number of features of the effects of virtual loops on memory in temporal networks.
For both the DARN($p$) and eDARN($p$) models, where loops are not present, the values of both
$\Omega_{\text{\text{eff}}}(\mathcal{G})$ and $\Omega_{\text{pair}}(\mathcal{G})$ are almost identical, and for most parameters give a very good approximation of
$\Omega(\mathcal{G})$. For the CDARN($p$) and eCDARN($p$) models this is not the case. The two correlated models should
clearly display virtual loops, as they are inherent in the structure of the models in the same way as
the toy models described earlier. Indeed, we know that when calculating $\Omega_{\text{pair}}(\mathcal{G})$ then 
a number of the virtual loops will be accounted for, and so the significant differences between 
the hit rates and distances of
$\Omega_{\text{\text{eff}}}(\mathcal{G})$ and $\Omega_{\text{pair}}(\mathcal{G})$ for both the CDARN($p$) and eCDARN($p$) models
can confidently be ascribed to the presence of these loops.
It is unexpected however that $\Omega_{pair}(\mathcal{G})$ accounts for the full extend of these loops;
virtual loops of varying orders should be present in this system. 
Furthermore, in both cases we
observe that $\Omega_{\text{\text{eff}}}(\mathcal{G})$ and $\Omega_{\text{pair}}(\mathcal{G})$ are reasonable estimates for the scalar memory
$\Omega(\mathcal{G})$ for a range of parameter values. This can only be the case if, in the same way as for our toy models, these virtual loops
display decoherence, and are hence the resulting increase in pair and effective memory is masked.\\ 

\subsection{A theorem on full virtual loop decoherence effect for large network size} 
The CDARN($p$) model, as we have presented here, maintains a high degree of symmetry: each link has the same memory strength $q$, link density $y$ and memory length $p$. Each link also, with the same probability $c$, can look into the past $p$ states of every other link, and will do so uniformly. In summary, the network would behave in exactly the same way if the links were swapped (in general if there were any isomorphism). One might be tempted to think that this ``synchronicity" might bring out memory and virtual loops in an extreme way, since virtual loops of every order should be present in the system. Indeed we see that in our relatively small synthetic networks, when $c$ is large, and hence there is a strong reliance on virtual loops, the scalar and effective memory of the network seldom coincide. When these networks get larger however, we observe something unexpected: all virtual loops become increasingly decoherent. As the number of links grows larger, so does the pool of links from which a past state can be drawn, because of this we find that the memory in virtual loops ``averages out", a concept which we formally detail in the theorem below.

\begin{theorem}
Consider a CDARN($p$) network with $L$ links, memory strength $q$, link density $y$ and memory length $p$.
The conditional probability of a link $\ell$ occurring at time $t$, given the past $p$ states of the network is as follows:
\begin{equation}
	\mathbb{P}(a^{\ell}_t | a^{1,L}_{t,t-p} ) = (1-q)y + q \left( (1-c) \phi_{self} + c \phi_{other} \right),
\end{equation}
where $\phi_{self}$ and $\phi_{other}$ represent the contributions to the conditional from the past $p$ states of the link $\ell$ and every other link respectively. As $L \to \infty$, $\phi_{other}$ tends to a constant, and hence the link $\ell$ has no memory of the past states of any other link.
\label{theo:CDARN_VL}
\end{theorem}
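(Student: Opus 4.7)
The plan is to identify $\phi_{\text{other}}$ explicitly from the CDARN($p$) conditional in Eq.~\ref{CDARNp_conditional}, show that its expectation under the stationary law equals $y$, and then control its variance so that a law of large numbers applies. Matching the decomposition stated in the theorem with Eq.~\ref{CDARNp_conditional}, one reads off
\begin{equation}
\phi_{\text{other}} = \frac{1}{(L-1)p}\sum_{\ell'\neq \ell}\sum_{k=1}^{p} a^{\ell'}_{t-k},
\end{equation}
which is the empirical mean of $(L-1)p$ Bernoulli-valued link states across all links different from $\ell$ over the last $p$ time steps.

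First I would pin down the stationary marginal of each link. By the permutation symmetry of CDARN($p$) under relabelling of links, every link shares the same marginal $\mathbb{E}[a^{\ell'}_t]=:y_{*}$. Taking expectations of the defining stochastic recurrence and using stationarity yields the fixed-point equation $y_{*}=(1-q)y+q\,y_{*}$, whose unique solution (for $q<1$) is $y_{*}=y$. Hence $\mathbb{E}[\phi_{\text{other}}]=y$ for every $L$, and the whole question reduces to showing that $\phi_{\text{other}}$ concentrates on its mean.

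The main analytic step is to bound $\text{Var}(\phi_{\text{other}})$. Decomposing this variance as a double sum over pairs of link--lag indices, the diagonal contribution is bounded by $y(1-y)/((L-1)p)$ and already vanishes as $L\to\infty$. The off-diagonal contribution is controlled by the stationary covariances $\text{Cov}(a^{\ell_1}_{t-k_1},a^{\ell_2}_{t-k_2})$ with $\ell_1\neq \ell_2$. The key observation is that two distinct link processes can only become correlated through chains of copying events, and the per-step probability that any given link copies directly from any fixed other link is $qc/(L-1)$. A coupling/propagation-of-chaos argument then shows that the stationary pairwise covariance between two distinct links is itself $O(1/L)$; summing over the $\Theta(L^{2}p^{2})$ off-diagonal pairs and dividing by $((L-1)p)^{2}$ yields $\text{Var}(\phi_{\text{other}})=O(1/L)$. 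Chebyshev's inequality then delivers $\phi_{\text{other}}\to y$ in probability.

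Plugging this limit into Eq.~\ref{CDARNp_conditional} gives $\lim_{L\to\infty}\mathbb{P}(a^{\ell}_t\mid a^{1,L}_{t,t-p}) = (1-q)y + q(1-c)\phi_{\text{self}} + qcy$, an expression that no longer depends on the past of any $\ell'\neq \ell$, which is the claim. The hard part is the covariance bound: the mean-field intuition (each link interacts with any fixed partner only with probability $\Theta(1/L)$ per step) is transparent, but making it rigorous requires either a propagation-of-chaos coupling against an i.i.d.\ reference system, or a recursive inequality showing that $\max_{\ell_1\neq \ell_2}\text{Cov}(a^{\ell_1}_t,a^{\ell_2}_t)$ is contracted by a factor $\Theta(1/L)$ under each CDARN update, which inductively pins it at $O(1/L)$ in the stationary regime.
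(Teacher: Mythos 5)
Your proposal follows the same skeleton as the paper's proof: read off $\phi_{\text{other}}$ as the empirical average of link states over $\ell'\neq\ell$ and the last $p$ lags, use the permutation symmetry of CDARN($p$) to argue that all links share a common stationary marginal, and then claim that this empirical average concentrates on that marginal as $L\to\infty$. Where you differ is in how honestly you treat the concentration step. The paper's proof simply writes ``by the law of large numbers we can express this in terms of the sample average'' and replaces $\frac{1}{L-1}\sum_{\ell'\neq\ell}a^{\ell'}_{t-k}$ by $\bar a$, without addressing the fact that the $a^{\ell'}_{t-k}$ are \emph{not} independent (they are coupled precisely through the copying mechanism that the model is built on), so the classical LLN does not apply off the shelf. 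You correctly identify this as the crux and reduce it to showing $\mathrm{Var}(\phi_{\text{other}})=O(1/L)$, which in turn needs the stationary cross-link covariance to be $O(1/L)$; your mean-field heuristic (any fixed pair interacts directly with probability $qc/(L-1)$ per step) is the right intuition, and a recursive contraction or propagation-of-chaos argument would make it rigorous, though you stop short of carrying it out. You also go one step further than the paper by solving the fixed-point equation $y_{*}=(1-q)y+qy_{*}$ to identify the limiting constant as $y$ itself, whereas the paper leaves it as an unspecified $\bar a$. In short: same route, but your version localises and names the one genuinely nontrivial step that the paper's own proof glosses over; completing the covariance bound would make yours strictly stronger than what is printed.
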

\begin{proof}
The conditional probability of observing a link $\ell$ at time $t$ given the past $p$ states of the network is as follows:
\begin{equation}
	\mathbb{P}(a^{\ell}_t = 1 | a^{1,L}_{t,t-p}) = (1-q)y + q \left( \frac{(1-c)}{p} \sum_{k=1}^{p} a^{\ell}_{t-k} + \frac{c}{(L-1)p} \sum_{\ell' \neq \ell} \sum_{k=1}^{p} a^{\ell'}_{t-k} \right).
\end{equation}
We can therefor see that our memory kernels $\phi_{self}$ and $\phi_{other}$ are given as:
\begin{eqnarray}
\begin{aligned}
	\phi_{self} =&  \frac{(1-c)}{p} \sum_{k=1}^{p} a^{\ell}_{t-k},\\
	\phi_{other} =& \frac{c}{(L-1)p} \sum_{\ell' \neq \ell} \sum_{k=1}^{p} a^{\ell'}_{t-k}.
\end{aligned}
\end{eqnarray}
We need only focus on $\phi_{other}$.
First, let us consider the average value
\begin{equation}
	\left< a^{\ell'}_{t-k} \right>_{\ell'} = \mathbb{P}(a^{\ell'}_{t-k} = 1).
\end{equation}
The CDARN($p$) network is taken to be in a stationary state, and so the symmetry of the links under isomorphism 
guarantees us that $P(a^{\ell'}_{t-k})$ is the same for each link $\ell'$ and for each time $t-k$. Hence we can write 
$ P(a^{\ell'}_{t-k}) = \bar{a}$ for some constant $\bar{a}$. Then we must have, for any of the $L-1$ possible values of $\ell'$,
\begin{equation}
	\left< a^{\ell'}_{t-k} \right>_{\ell'} = \bar{a}.
\end{equation}

Now, $\phi_{other}$ can be re-written as follows:
\begin{equation}
	\phi_{other} = \frac{1}{p} \sum_{k=1}^p  \frac{1}{L-1} \sum_{\ell' \neq \ell} a^{\ell'}_{t-k}.
\end{equation} 
Then, by the law of large numbers we can express this in terms of the sample average:
\begin{eqnarray}
\begin{aligned}
	\phi_{other} =& \frac{1}{p} \sum_{k=1}^{p} \left< a^{\ell'}_{t-k} \right>_{\ell'}, \\
	=&  \frac{1}{p} \sum_{k=1}^{p} \bar{a},\\
	=&\bar{a}.
\end{aligned}
\end{eqnarray}
Hence $\phi_{other} \to \bar{a}$ as $L \to \infty$. Since there are no terms containing links other than $\ell$ in $\phi_{self}$, then we can conclude that the conditional probability is such that, in the same limit $L \to \infty$,
\begin{equation}
\mathbb{P}(a^{\ell}_t = 1 | a^{1,L}_{t,t-p})  \to \mathbb{P}(a^{\ell}_t = 1 | a^{\ell}_{t,t-p}),
\end{equation}
and so any memory of other links is lost.
\end{proof}

To summarise, in this section we have validated that the co-memory matrix correctly displays the memory of synthetic temporal networks. The maximum of this co-memory matrix, defined as the effective network memory $\Omega_{\text{\text{eff}}}(\mathcal{G})$ coincides with the scalar memory $\Omega(\mathcal{G})$, {and so is a good estimate of it,} when virtual loops are not present. 
In the cases where these {loops} are present, the two quantities in principle differ, and we know that one should consider $\Omega_{\text{\text{eff}}}(\mathcal{G})$ rather than $\Omega(\mathcal{G})$ if {in this} case. 
{Interestingly,} virtual loop decoherence {is also clearly present in} these networks. Moreover, when the network size (number of {links}) gets larger, theorem \ref{theo:CDARN_VL} suggests that we {asymptotically should} expect full virtual loop decoherence. This is particularly important when considering real-world temporal networks. We can hence conclude that while virtual loops play a role {in the memory of these networks}, in practice virtual loop decoherence might limit that role, and as such we would expect that $\Omega_{\text{\text{eff}}}(\mathcal{G})$ does not substantially differ from $\Omega(\mathcal{G})$ in real, complex temporal networks.

\section{Applications to real-world temporal networks}

\subsection{Processing empirical network data}
We have here presented temporal networks taken from 6 data sets, and 
at two different temporal resolutions. A varying amount of work 
has to be done to each of the used data sets before they can be fed into 
our memory estimator as a discrete time temporal network.
As part of the software developed for this research, we have 
developed a method to convert a time stamped edge list of the form
$(v_1,v_2,t)$ into a set of time series $X_t^{v_1,v_2}$ which represents
the edge process connecting nodes $v_1$ and $v_2$. Once we have this time series
then we can make use of our memory estimators. Hence the aim is 
now to take each data set and convert it into this time stamped edge list form.
We will here give a brief overview of each dataset and the steps required to process it.\\

\noindent The first two datasets correspond to a type of network which we can label as online social communication networks:
\begin{itemize} 
\item Text message interactions between college students (CM) \cite{panzarasa2009patterns}. This data 
represents messages sent between (anonymised) student users of an online communication platform
at the University of California, Irvine, over a period of 7 months.
Processing this is a simple task as it comes in what is almost a suitable format; 
the raw data is given as triplets $(I.D_1, I.D_2, t)$ for each pair of individuals
$I.D_1$ and $I.D_2$ interacting at time t measured in seconds from some starting 
value. Here we simply index the individuals from 0 to $N$ and shift the $t$ values 
so that the first link in the data occurs at time 0.
\item Email communications (EM) data set \cite{email_bb}.
This data set covers internal e-mail communications
between employees of a mid-sized manufacturing company 
over a period of nine months.
The data is (ignoring other irrelevant data) in the form
$(I.D_1, I.D_2, DT)$ where now $DT$ is in a date-time format to a resolution of seconds.
Hence in constructing our time stamped edge list we again index the $I.D$ values appropriately,
and again convert the date-time to the number of seconds elapsed since the first 
link in the data set occurred.
\end{itemize}
The third dataset is a social interaction or contact network (RM), that is to say, it is also a social network like the first two cases, but it is an `offline' one and can also be seen as a mobility network.
This data is taken from the ``Reality Mining" data set \cite{eagle2006reality}, collected from the interaction of 94 students at MIT
over 8 months. The data we have used here was taken from bluetooth interactions between
the phones carried by the subjects of the study. Each interaction indicates that the two individuals 
were within at least 5-10 meters of each other. Each device scans for other devices in its proximity every 5 minutes, 
however because a pair of devices can recognise each other independently this can produce 
two interactions every 5 minutes with an average inter interaction duration of 2.5 minutes.
The data is (ignoring other irrelevant data) in the form
$(I.D_1, I.D_2, DT)$ where now $DT$ is in a date-time format to a resolution of seconds.
Hence in constructing our time stamped edge list we again index the $I.D$ values appropriately,
and again convert the date-time to the number of seconds elapsed since the first 
link in the data set occurred.\\

\noindent The three remaining datasets correspond to what we could label as engineered transportation networks designed for the public transport in Paris \cite{kujala2018collection} via bus (PB), train (PT) and underground (PU), i.e. these are engineered, offline infrastructure networks.
These are comprised of a weeks worth of records for the movements of public transport 
systems from stop to stop.
This data is structured in a different way; 
each set contains the times taken for each journey occurring between any two 
stops on the bus (PB), train (PT) and underground (PU) public transport systems
in Paris. This data (again ignoring irrelevant data) is of the form 
$(I.D_1, I.D_2, t_{start}, t_{stop})$, where the values $I.D_1$ and $I.D_2$ 
represent the origin an destination stops respectively, and $t_{start}$ and $t_{stop}$
represent the date-times at which the origin was left and the destination was reached 
respectively, to a resolution of one second.\\

\noindent We here index the origin and destinations as expected, and convert the date-time values 
to the number of seconds elapsed since the first date-time value in the data set.
We then take the link between the origin and destination to be present in every second 
which elapses between $t_{start}$ and $t_{stop}$. If two journeys overlap then the link 
is kept {active} until the latter of the two journeys is completed.\\

\noindent {\bf Coarse-graining at different resolution timescales -- } Once we have a time stamped edge list for a data set we filter out the
100 links which occur the most. We then produce two temporal networks by,
 for each link, integrating over two different timescales $\Delta t=1$ minute (60 seconds) and  $\Delta t=10$ minutes (600 seconds).
That is to say, given a temporal network $\mathcal{G}$ with 
edge processes $E_t^i$, and with a unit timescale, which extends 
from time $t=0$ to $t=T$,
we define a new temporal network $\tilde{\mathcal{G}}$ with edge processes 
$\tilde{E}_{k}^i$, and with timescale {$\Delta t = 60, 600$} seconds, which extends 
from time $k=0$ to $k=T/{\Delta t} - 1$. The links in this second network 
are then drawn from the first in the following way: 
$\tilde{E}_k^i = 1$ if for any $t \in \{k \Delta t,..., (k+1) \Delta t \}$, 
$E_t^i = 1$. In this way we effectively ``integrate'' the time series
for each link over our time scale $\Delta t$.

\begin{figure}[htb]
  \includegraphics[width = 0.38\textwidth]{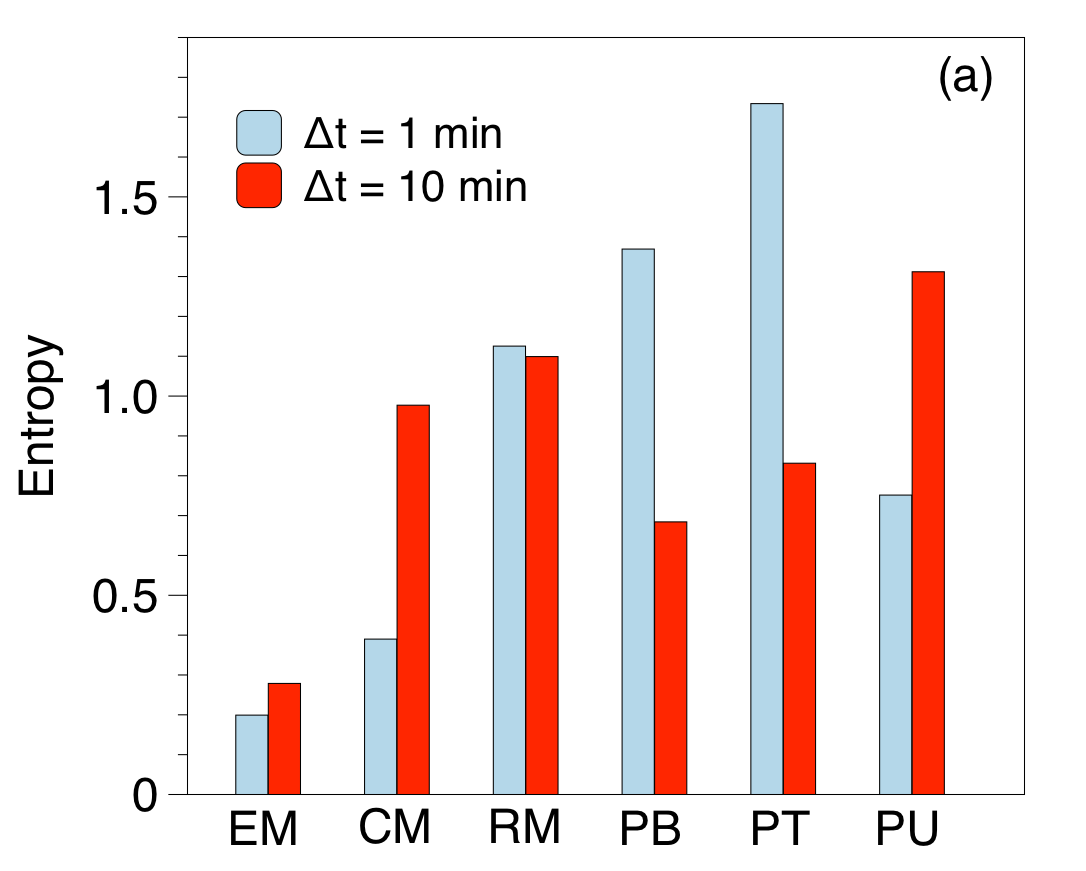}%
  \includegraphics[width = 0.38\textwidth]{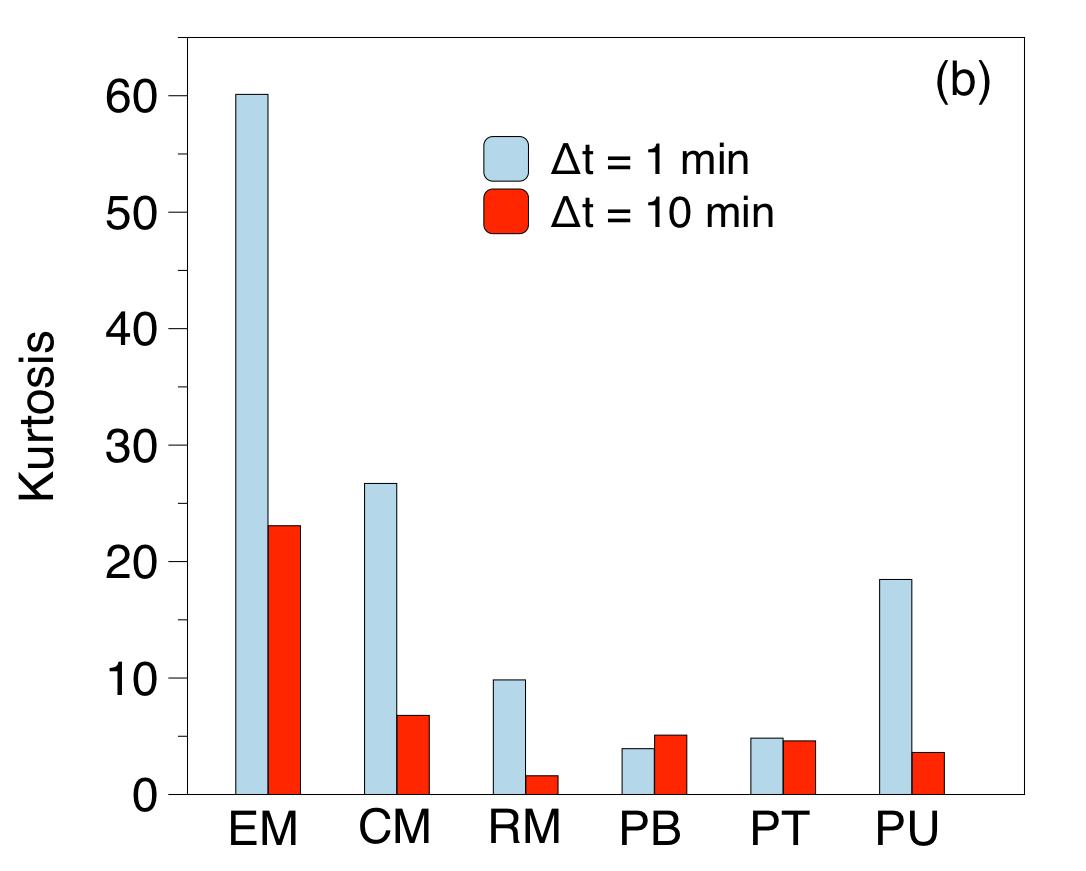}
\caption{{ \bf Assessing co-memory histogram heterogeneity} Entropy (panel a) and kurtosis (panel b) of the co-memory histogram for each of the six real-world temporal networks, at $\Delta t = 1$ min (blue) and $\Delta t = 10$ min (red).}
\label{fig:entropy}
\end{figure}

\subsection{Heterogeneity of co-memory histograms: entropy and kurtosis}
In Figure 3 of the main manuscript we have plotted the co-memory matrix histograms for the six temporal networks at the two resolution timescales. Here we give a further exploration about the shape of these histograms by computing their entropy and kurtosis. The entropy of a probability distribution function $p(x)$ is defined as $-\sum p(x)\log p(x)$ and characterises how `uneven' the distribution is, reaching a maximum when the distribution is uniform and reaching the minimum (zero) when the probability is fully concentrated. Accordingly, entropy {describes} in a scalar metric how heterogeneous the microscopic memory kernel of a given temporal network is. The kurtosis is the fourth standardised moment of $p(x)$, and is a measure of its ``tailedness".\\ 

\noindent Results are shown in Fig.\ref{fig:entropy}. The fact that transportation networks tend to have large entropy suggest that many different memory co-orders are detected, i.e. these networks display a highly heterogeneous memory kernel. On the other hand, we find that the online social networks have a strong kurtosis, meaning that even if most of the links have a weak memory kernel, there are a few whose co-order is large. Since $\Omega_{\text{eff}}(\mathcal{G})$ is defined as the maximum over all co-orders, from the kurtosis analysis we can conclude that online social communication networks analysed in this work display large memory but this only comes from a handful of links.

\subsection{Constructing memory communities and comparing networks in the $(\left< \Omega \right>_{\text{in}}^i, \left< \Omega \right>_{\text{out}}^i)$ plane}
Given the memory heterogeneity displayed in real temporal networks, we wish to further understand how a given links
activity influences the memory of the whole network, and,
in turn, how much the whole network has an influence on it. To quantify this we define two quantities: given the time series $E_t^l$
representing the evolution of each link $l \in 1,...,L$, 
the {\it average outgoing co-order} of a link is defined as
$$\left< \Omega \right>_{\text{out}}^i = \frac{1}{L} \sum_j \Omega(E_t^i || E_t^j).$$
This quantity characterises the net memory effect that the network as a whole has on link $i$. On the other hand, we define the {\it average incoming co-order} 
$$\left< \Omega \right>_{\text{in}}^i = \frac{1}{L}\sum_j \Omega(E_t^j || E_t^i),$$
characterising the net memory effect of the link $i$ on the whole network. The duple $(\left< \Omega \right>_{\text{in}}^i, \left< \Omega \right>_{\text{out}}^i)$ is therefore a compact representation of the role played by each link $i$, in this section we explore scatterplots of $\left< \Omega \right>_{\text{out}}^i$ vs $\left< \Omega \right>_{\text{in}}^i$. More concretly, for each of the six empirical temporal networks we have considered in this work, we focus on the top 100 most active links and make
scatter plots of $\left< \Omega \right>_{out}^i$ vs $\left< \Omega \right>_{in}^i$ for the two different resolution timescales $\Delta t=1$ and $10$ minutes.\\ 

\begin{figure}[htb]
  \includegraphics[width = 0.3\textwidth]{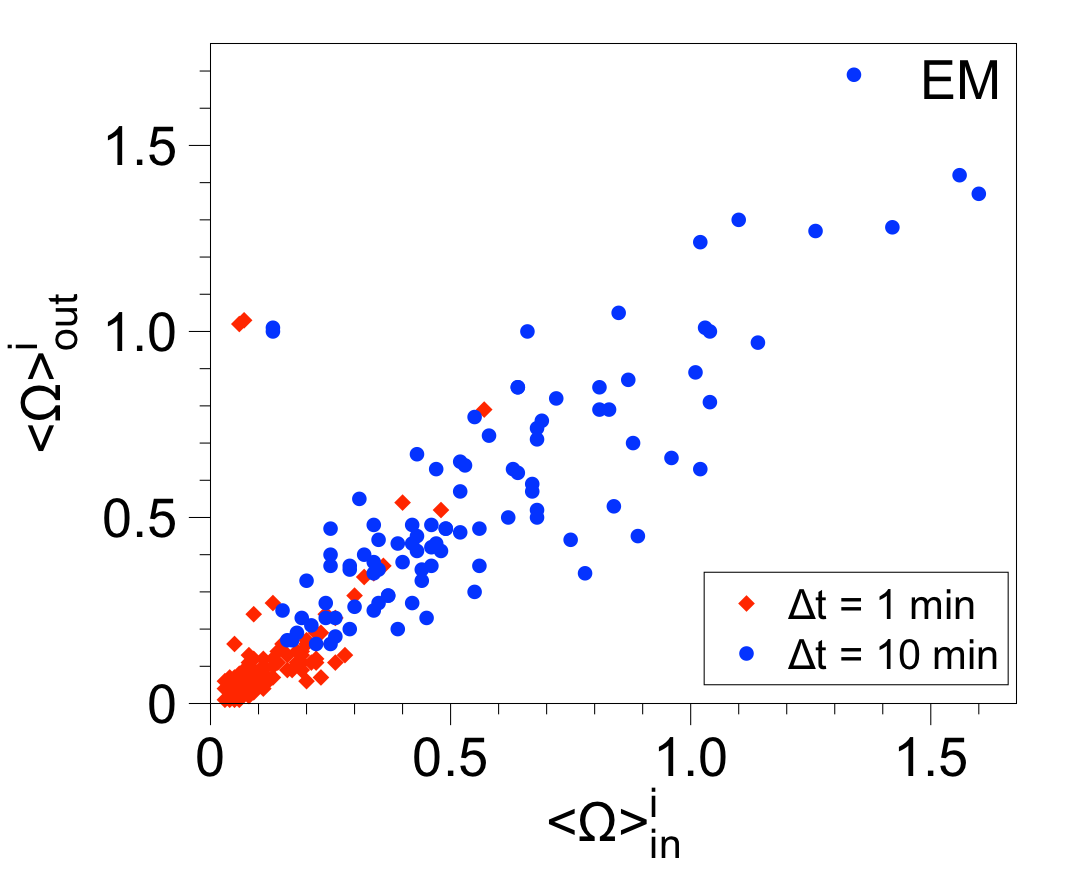}
    \includegraphics[width = 0.3\textwidth]{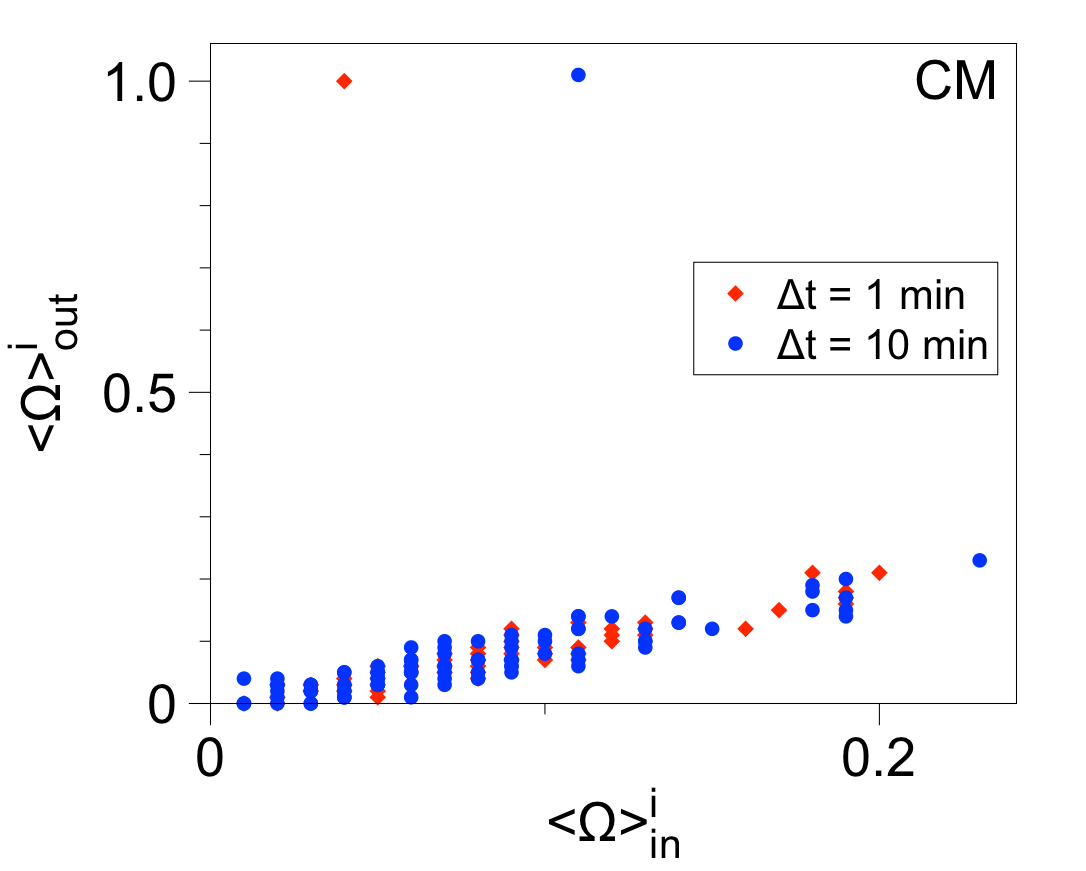}
      \includegraphics[width = 0.3\textwidth]{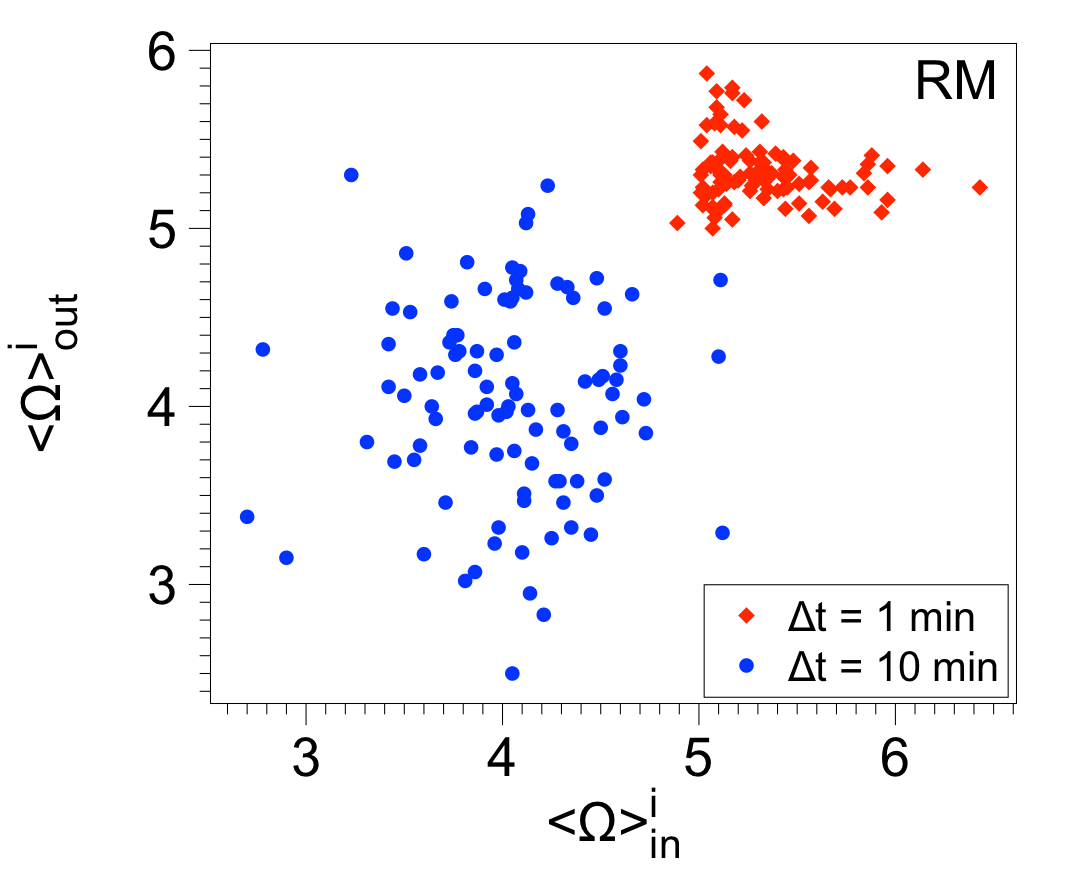}
        \includegraphics[width = 0.3\textwidth]{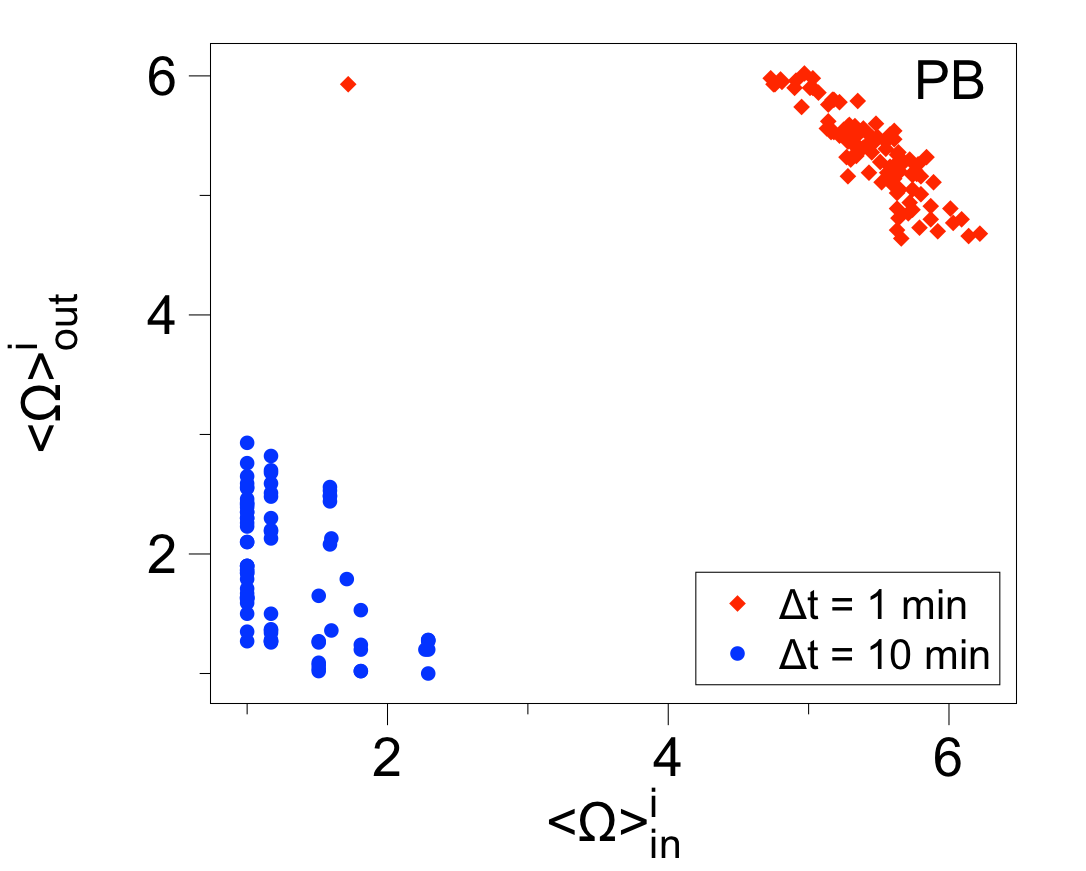}
          \includegraphics[width = 0.3\textwidth]{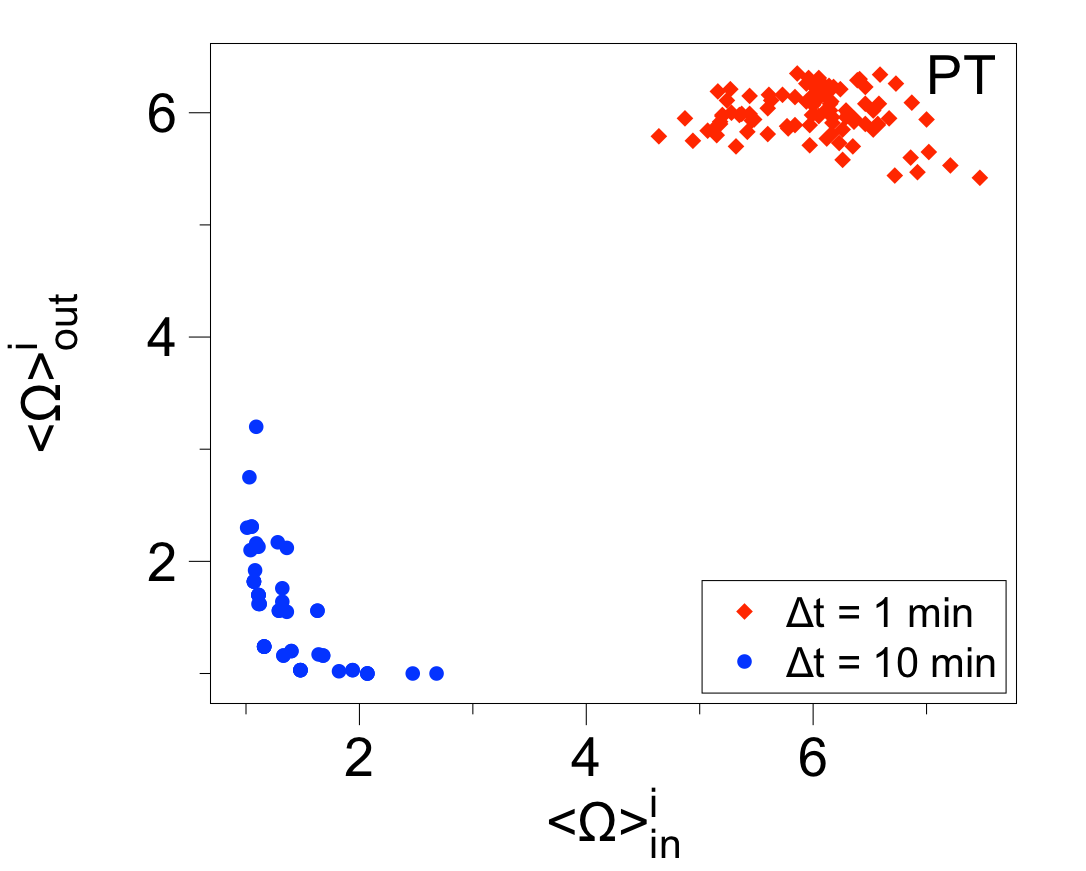}
            \includegraphics[width = 0.3\textwidth]{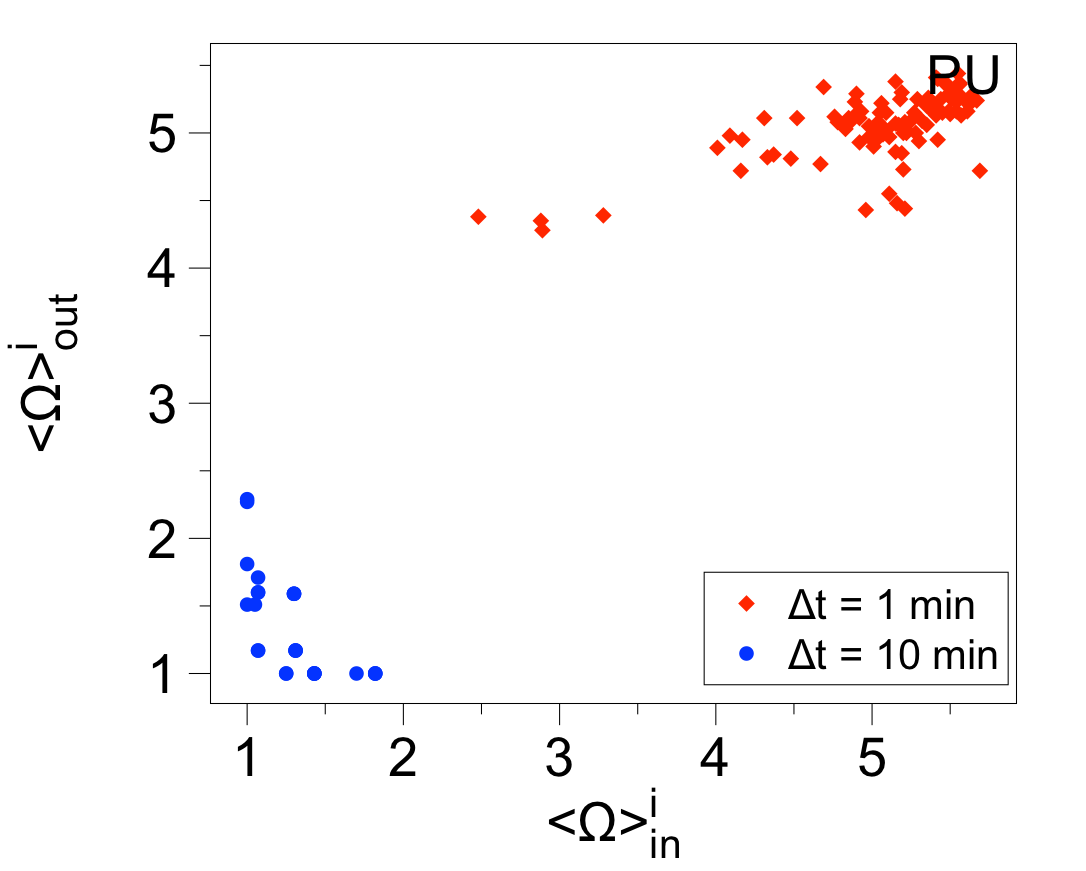}
\caption{{ \bf Comparison between the two resolution timescales.} For the six real-world temporal networks we scatter plot the average outgoing co-order  $\left< \Omega \right>_{out}^i$ vs the average incoming co-order $\left< \Omega \right>_{in}^i$ of the 100 most active links, and we systematically compare these for two resolution timescales: $\Delta t=1$ min (red diamonds) and $\Delta t=10$ min (blue dots). The online social interaction networks (EM, CM) have weak memory kernels which are maintained at the two resolution timescales. The offline social contact network (RM) have strong memory at both resolutions, but memory is overall larger at the lower resolution, thereby making timescale separation easy. All three engineered transportation networks (PB, PT, PU) display strong memory kernels only at the lower resolution timescale.} 
\label{fig:coord_scatter_2}
\end{figure}

\noindent Let us start by comparing, for each temporal network, these scatter plots for the two different resolution times. Results are plotted in the six panels of Fig.\ref{fig:coord_scatter_2}. The first observation is that transportation networks (bus (PB), train (PT) and underground (PU)) display very different memory properties at the $\Delta t=1$ and $\Delta t=10$ min resolution timescales, and hence their links systematically cluster apart. More particularly, for the $\Delta t=10$ the average co-orders are notably lower than for the $\Delta t=1$ scale, suggesting indeed that all the memory structure is captured at the $\Delta t=1$ scale, i.e. only one memory scale manifests, as expected
 as expected due to strong planning and scheduling restrictions.\\ 
 At the other extreme, many links in the two {\it online social communication} networks overlap in the scatterplots for the two resolution timescales, and memory is systematically weak. This effect is more acute in the college text message (CM) network than in the email (EM) network. Incidentally, for the CM network we find that there is a single link which, for both $\Delta t=1$ and $10$ min timescales, has a significantly larger $\left< \Omega \right>_{\text{out}}^i$ than the rest, meaning that there is one specific link whose activity is driven by the global activity of the network. The {\it offline social contact network} (RM) somehow interpolates the behaviour of the previous two groups: we can see that while links for the two timescales are closer together, they still cluster apart and the two different timescales are clearly visible. The memory of this network is strong at the two different timescales, concluding that we are indeed detecting two different memory timescales.\\

\begin{figure}[htb]
  \includegraphics[width = 0.35\textwidth]{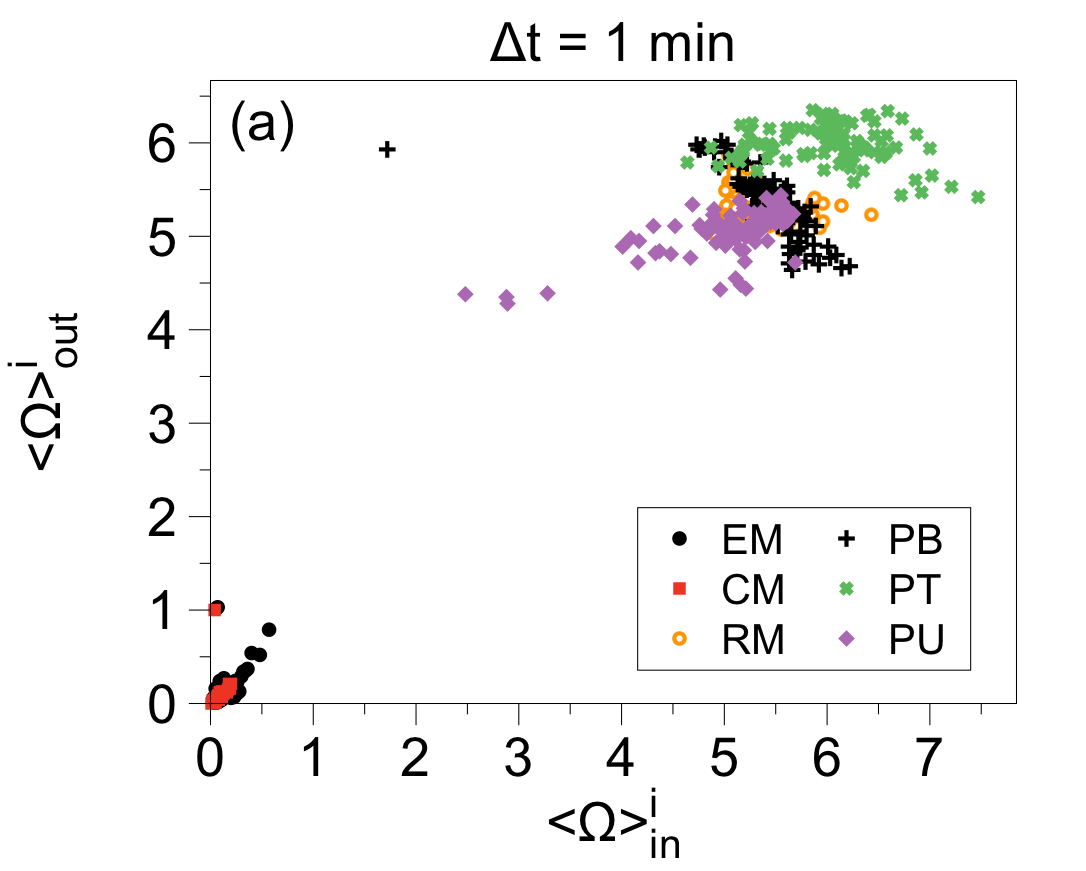}%
        \includegraphics[width = 0.35\textwidth]{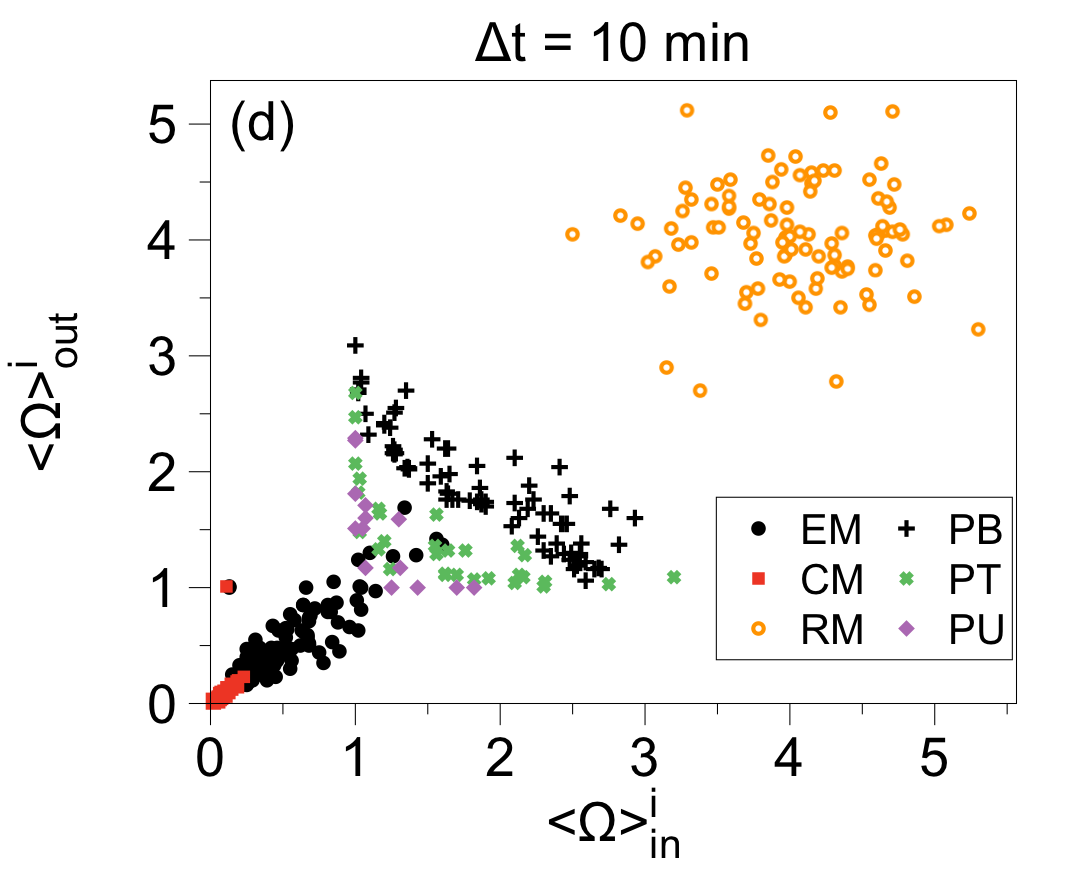}%
\caption{{ \bf Scatter plot of the average outgoing co-order  $\left< \Omega \right>_{out}^i$ vs the average incoming co-order $\left< \Omega \right>_{in}^i$ of the 100 most active links} for the six real-world temporal networks at the (a) $\Delta t = 1$ min  and (b) $\Delta t = 10$ min resolution timescales. At the lower resolution timescale online and offline temporal networks cluster together, and offline networks evidence stronger microscopic memory kernels. At the larger resolution timescale, only the (offline) social contact network (RM) displays consistently strong memory.}
\label{fig:coord_scatter}
\end{figure}

\noindent {Secondly}, we explore how these temporal networks compare to each other when projected in the $(\left< \Omega \right>_{\text{in}}^i, \left< \Omega \right>_{\text{out}}^i)$ plane.
In Fig.~\ref{fig:coord_scatter} we scatterplot the top 100 most active links for all six temporal networks at $\Delta t= 1$ min (panel (a)) and $\Delta t =$ 10 min (panel (d)). The first observation is that at the $\Delta t=1$ min, the two social communication networks cluster together, and the same is true for the three engineered transportation networks. Furthermore, there is a very clear separation between these two groups, indicating that the memory structure is very different: the online networks clearly display weaker memory than the offline ones. Perhaps unexpectedly, the social contact network (RM) clusters together with the engineered transportation networks (in particular, there is a large overlap with the bus network). Note however that, even if RM is a social network, it is (i) offline, like the second group, and also (ii) it is a mobility network, and therefore it is not unreasonable to find that its memory structure is more akin to the one displayed by a transportation one. When we switch the resolution scale to $\Delta t=10$ min, we observe that the engineered transportation networks are still grouped together, however they lose memory and therefore come closer to the social communication networks. As discussed before, the social contact network {retains its} high memory at the larger timescale, this being a manifestation of the second memory timescale present in this network.\\

\noindent We turn now to explore and quantify the extent to which links are clustered in the $(\left< \Omega \right>_{\text{in}}^i, \left< \Omega \right>_{\text{out}}^i)$ plane, by finding a 2-dimensional equivalent to their standard deviation $\sigma$ of the set of $L$
points representing each data set:
\begin{equation}
	\sigma =  \frac{1}{L-1} \sum_{i=1}^{L} \sqrt{\left( \left< \Omega \right>_{in}^i - \bar{\Omega}_{in}\right)^{2} +  \left( \left< \Omega \right>_{out}^i - \bar{\Omega}_{out}\right)^{2}},
\end{equation}
where $\bar{\Omega}_{in}$ and $\bar{\Omega}_{out}$ are the sample averages of 
$\left< \Omega \right>_{in}^i$ and $\left< \Omega \right>_{out}^i$ respectively.
If this value is large then there is a high average distance between points, and so
we can think of them as being less clustered. If this value is small then the points are
close together and we can think of them as being highly clustered. Results for this metric for each network are shown in table \ref{tab:cluster_metrics}. 
 This shows us that for $\Delta t = 1$ min the online social interaction networks (EM, CM) are far more clustered than the engineered networks, with the highest value of $\sigma$ for the social networks being 0.132, while the lowest value for engineered networks being 0.461, a fact which remains true for $\Delta t = 10$ min. The offline social contact network (RM) displays systematic spread, highlighting large memory link heterogeneity at different resolution times.\\
 
 \begin{table}[]
\begin{tabular}{|l|l|l|l|l|}
\hline
{\bf Network} & \multicolumn{2}{l|}{$\sigma$} & \multicolumn{2}{l|}{$D$} \\ \hline
$\Delta t$ & 1 & 10 & 1 & 10 \\ \hline
EM & 0.132 & 0.380 & 0.044 & 0.093 \\ \hline
CM & 0.063 & 0.069 & 0.017 & 0.019 \\ \hline
RM & 0.285 & 0.641 & 0.194 & 0.417 \\ \hline
PB & 0.408 & 0.674 & 0.421 & 0.623 \\ \hline
PT & 0.488 & 0.488 & 0.329 & 0.396 \\ \hline
PU & 0.461 & 0.415 & 0.229 & 0.342 \\ \hline
\end{tabular}
\caption{ Clustering metrics for each empirical data set and for both time resolutions $\Delta t = 1$ and 10 minutes.}
\label{tab:cluster_metrics}
\end{table}
 
\noindent Finally we consider the {\it input-output memory balance}, which measures how balanced the influence of the network is on each link
compared to each links influence on the network. For instance, if for all links we have 
$\left< \Omega \right>_{in}^i \approx \left< \Omega \right>_{out}^i$ then we know that
no link has a strong influence on the memory of the network, and equally no link has 
long memory of the rest of the network. To assess this we take the average distance from the line  $\left< \Omega \right>_{out}^i = \left< \Omega \right>_{in}^i$ for each network. More concretely, for each empirical network we find the average 
distance $D$ between the points in the scatter plot to the line $\left< \Omega \right>_{in}^i = \left< \Omega \right>_{out}^i$,
given $L$ sampled links, as follows:
\begin{equation}
	D = \frac{1}{L \sqrt{2}} \sum_{i=1}^{L} \left| \left< \Omega \right>_{in}^i -\left< \Omega \right>_{out}^i \right|.
\end{equation} 
The values of these metrics for each network are shown again in table \ref{tab:cluster_metrics}. 
Here we find that, as before, online social networks are far more balanced than transportation networks, {while} the offline social network {sits between the other two}.\\

Altogether, analysis of the co-memory matrix (and its projection in the co-memory histogram and the $(\left< \Omega \right>_{\text{in}}^i, \left< \Omega \right>_{\text{out}}^i)$ plane) at different resolutions provide insights on the microscopic memory heterogeneity displayed by real temporal networks, providing a simple method for discrimination between self-organised, online and offline engineered networks, and highlighting different and interpretable memory timescales. In particular, our results certify that while all networks considered display memory as estimated via $\Omega_{\text{eff}}(\mathcal{G})$, online social networks have a weaker and more homogeneous microscopic memory kernel than infrastructure transportation networks, and in between we find the case of the social contact network (RM) --a mix case which is `social', {like} the first group, but `offline' {like} the second--, and this latter unveils the presencee of two different memory timescales, probably related to two distinct mechanisms of social interaction between the university students: casual social interaction vs sharing a class together.

\section{Method implementations}
As part of this work we have provided a number of implementations
of our approach to finding the memory of a temporal network, or the distribution
of its co-orders, as based on the above mentioned efficient determination criterion (EDC)
given either a file containing a time stamped edge list or the input from a suitable 
function.
The intention here is to remove the complexity associated with implementing 
our method in an efficient way, and thus allow any future studies in this area 
to forge ahed without such an overhead. To this end we have provided versions 
in the following languages, each with their own advantages and disadvantages:
\begin{itemize}
\item C++
\item Java
\item Python 3.6
\item Python 2.7
\item Rust
\end{itemize}
The versions written in Python, and to some extent Java are intended to be used for testing of 
smaller data sets and prototyping of other experiments. This is because though the languages 
are common and their implementations are hopefully easy to understand, they
(again with the possible exception of Java) lack the raw speed and low memory overhead
of the other languages.\\
The C++ version is intended to be as fast as possible while 
maintaining the lowest possible memory overhead, and so is well suited 
to handling larger networks or the running of multiple experiments at once.\\
The Rust version (provided with both parallel and serial approaches) is
also intended to be as fast as possible, but does have a higher memory 
overhead when compared to the C++ version (though this is still the second
lowest memory usage), as such it is suited to larger networks, but may not 
be as good as C++ for running multiple experiments at once.
We have however provided a parallel implementation in Rust, which, provided there is no problem with 
memory requirements, is the fastest available. \\
The Java version is intended to be usably fast in any situation and with a moderate 
memory overhead, while being easy to work with.\\

\noindent We have tested the run-time of each implementation by finding the effective memory
of the edge list associated with the (EM) data set at a resolution of $\Delta t = 60 seconds$.
This comprises of 100 links over 390507 time steps.
These tests were run on a desktop PC (Ubuntu 18.04.3 LTS (64-bit)), with 
 a intel Core i7-6700k (4.00GHz, 8 core) processor, and 32GB of memory.
Each test was repeated 10 times and the results averaged. Runtimes are depicted in table \ref{table:runtime}.
\begin{table}[]
\begin{tabular}{|l|l|}
\hline
{\bf Version} & {\bf Average time (seconds)} \\ \hline
Rust (paralell implementation, rustc 1.39,  llvm 9.0, opt-level 2) & 14.843 \\ \hline
C++ (gcc9 -o2) & 59.453 \\ \hline
Java (OpenJDK 8) & 62.143 \\ \hline
Rust (serial implementation, rustc 1.39,  llvm 9.0, opt-level 3) & 62.216 \\ \hline
Python 3.7 (Parallel implementation) & 4206 \\ \hline
Python 2.7 & 7854 \\ \hline
Python 3.7 (Serial implementation) & 12691 \\ \hline
\end{tabular}
\caption{Runtimes for estimating the effective memory of the EM data set with $\Delta t= 60 seconds$ (see text for details).}
\label{table:runtime}
\end{table}
For access to the code and a more in depth description of 
its implementation see \url{github.com/oewilliams/temp-net-memory}.

\bibliographystyle{apsrev4-1.bst}

\begin{thebibliography}{67}%
\makeatletter
\providecommand \@ifxundefined [1]{%
 \@ifx{#1\undefined}
}%
\providecommand \@ifnum [1]{%
 \ifnum #1\expandafter \@firstoftwo
 \else \expandafter \@secondoftwo
 \fi
}%
\providecommand \@ifx [1]{%
 \ifx #1\expandafter \@firstoftwo
 \else \expandafter \@secondoftwo
 \fi
}%
\providecommand \natexlab [1]{#1}%
\providecommand \enquote  [1]{``#1''}%
\providecommand \bibnamefont  [1]{#1}%
\providecommand \bibfnamefont [1]{#1}%
\providecommand \citenamefont [1]{#1}%
\providecommand \href@noop [0]{\@secondoftwo}%
\providecommand \href [0]{\begingroup \@sanitize@url \@href}%
\providecommand \@href[1]{\@@startlink{#1}\@@href}%
\providecommand \@@href[1]{\endgroup#1\@@endlink}%
\providecommand \@sanitize@url [0]{\catcode `\\12\catcode `\$12\catcode
  `\&12\catcode `\#12\catcode `\^12\catcode `\_12\catcode `\%12\relax}%
\providecommand \@@startlink[1]{}%
\providecommand \@@endlink[0]{}%
\providecommand \url  [0]{\begingroup\@sanitize@url \@url }%
\providecommand \@url [1]{\endgroup\@href {#1}{\urlprefix }}%
\providecommand \urlprefix  [0]{URL }%
\providecommand \Eprint [0]{\href }%
\providecommand \doibase [0]{http://dx.doi.org/}%
\providecommand \selectlanguage [0]{\@gobble}%
\providecommand \bibinfo  [0]{\@secondoftwo}%
\providecommand \bibfield  [0]{\@secondoftwo}%
\providecommand \translation [1]{[#1]}%
\providecommand \BibitemOpen [0]{}%
\providecommand \bibitemStop [0]{}%
\providecommand \bibitemNoStop [0]{.\EOS\space}%
\providecommand \EOS [0]{\spacefactor3000\relax}%
\providecommand \BibitemShut  [1]{\csname bibitem#1\endcsname}%
\let\auto@bib@innerbib\@empty
\bibitem [{\citenamefont {Holme}\ and\ \citenamefont
  {Saram{\"a}ki}(2012)}]{Holme_rev12}%
  \BibitemOpen
  \bibfield  {author} {\bibinfo {author} {\bibfnamefont {P.}~\bibnamefont
  {Holme}}\ and\ \bibinfo {author} {\bibfnamefont {J.}~\bibnamefont
  {Saram{\"a}ki}},\ }\href {\doibase
  http://dx.doi.org/10.1016/j.physrep.2012.03.001} {\bibfield  {journal}
  {\bibinfo  {journal} {Physics Reports}\ }\textbf {\bibinfo {volume} {519}},\
  \bibinfo {pages} {97 } (\bibinfo {year} {2012})},\ \bibinfo {note} {temporal
  Networks}\BibitemShut {NoStop}%
\bibitem [{\citenamefont {Masuda}\ and\ \citenamefont
  {Lambiotte}(2016)}]{masuda_guide_temp_net}%
  \BibitemOpen
  \bibfield  {author} {\bibinfo {author} {\bibfnamefont {N.}~\bibnamefont
  {Masuda}}\ and\ \bibinfo {author} {\bibfnamefont {R.}~\bibnamefont
  {Lambiotte}},\ }\href@noop {} {\emph {\bibinfo {title} {A Guide to Temporal
  Networks}}}\ (\bibinfo  {publisher} {World Scientific (Europe)},\ \bibinfo
  {year} {2016})\BibitemShut {NoStop}%
\bibitem [{\citenamefont {Holme}\ and\ \citenamefont
  {Saram{\"a}ki}(2013)}]{holme2013temporal}%
  \BibitemOpen
  \bibfield  {author} {\bibinfo {author} {\bibfnamefont {P.}~\bibnamefont
  {Holme}}\ and\ \bibinfo {author} {\bibfnamefont {J.}~\bibnamefont
  {Saram{\"a}ki}},\ }\href@noop {} {\emph {\bibinfo {title} {Temporal
  networks}}}\ (\bibinfo  {publisher} {Springer},\ \bibinfo {year}
  {2013})\BibitemShut {NoStop}%
\bibitem [{\citenamefont {Holme}\ and\ \citenamefont
  {Saram{\"a}ki}(2019)}]{holme2019temporal}%
  \BibitemOpen
  \bibfield  {author} {\bibinfo {author} {\bibfnamefont {P.}~\bibnamefont
  {Holme}}\ and\ \bibinfo {author} {\bibfnamefont {J.}~\bibnamefont
  {Saram{\"a}ki}},\ }\href@noop {} {\emph {\bibinfo {title} {Temporal Network
  Theory}}}\ (\bibinfo  {publisher} {Springer},\ \bibinfo {year}
  {2019})\BibitemShut {NoStop}%
\bibitem [{\citenamefont {Starnini}\ \emph {et~al.}(2013)\citenamefont
  {Starnini}, \citenamefont {Baronchelli},\ and\ \citenamefont
  {Pastor-Satorras}}]{Starnini:2013}%
  \BibitemOpen
  \bibfield  {author} {\bibinfo {author} {\bibfnamefont {M.}~\bibnamefont
  {Starnini}}, \bibinfo {author} {\bibfnamefont {A.}~\bibnamefont
  {Baronchelli}}, \ and\ \bibinfo {author} {\bibfnamefont {R.}~\bibnamefont
  {Pastor-Satorras}},\ }\href {\doibase 10.1103/PhysRevLett.110.168701}
  {\bibfield  {journal} {\bibinfo  {journal} {Phys. Rev. Lett.}\ }\textbf
  {\bibinfo {volume} {110}},\ \bibinfo {pages} {168701} (\bibinfo {year}
  {2013})}\BibitemShut {NoStop}%
\bibitem [{\citenamefont {Szell}\ \emph {et~al.}(2012)\citenamefont {Szell},
  \citenamefont {Sinatra}, \citenamefont {Petri}, \citenamefont {Thurner},\
  and\ \citenamefont {Latora}}]{Szell:2012aa}%
  \BibitemOpen
  \bibfield  {author} {\bibinfo {author} {\bibfnamefont {M.}~\bibnamefont
  {Szell}}, \bibinfo {author} {\bibfnamefont {R.}~\bibnamefont {Sinatra}},
  \bibinfo {author} {\bibfnamefont {G.}~\bibnamefont {Petri}}, \bibinfo
  {author} {\bibfnamefont {S.}~\bibnamefont {Thurner}}, \ and\ \bibinfo
  {author} {\bibfnamefont {V.}~\bibnamefont {Latora}},\ }\href
  {http://dx.doi.org/10.1038/srep00457} {\bibfield  {journal} {\bibinfo
  {journal} {Scientific Reports}\ }\textbf {\bibinfo {volume} {2}},\ \bibinfo
  {pages} {457 EP } (\bibinfo {year} {2012})}\BibitemShut {NoStop}%
\bibitem [{\citenamefont {Yoneki}\ \emph {et~al.}(2009)\citenamefont {Yoneki},
  \citenamefont {Greenfield},\ and\ \citenamefont {Crowcroft}}]{Yoneki:2009}%
  \BibitemOpen
  \bibfield  {author} {\bibinfo {author} {\bibfnamefont {E.}~\bibnamefont
  {Yoneki}}, \bibinfo {author} {\bibfnamefont {D.}~\bibnamefont {Greenfield}},
  \ and\ \bibinfo {author} {\bibfnamefont {J.}~\bibnamefont {Crowcroft}},\ }in\
  \href {\doibase 10.1109/ASONAM.2009.42} {\emph {\bibinfo {booktitle} {2009
  International Conference on Advances in Social Network Analysis and
  Mining}}}\ (\bibinfo {year} {2009})\ pp.\ \bibinfo {pages}
  {356--361}\BibitemShut {NoStop}%
\bibitem [{\citenamefont {Corsi}\ \emph {et~al.}(2018)\citenamefont {Corsi},
  \citenamefont {Lillo}, \citenamefont {Pirino},\ and\ \citenamefont
  {Trapin}}]{corsi2018measuring}%
  \BibitemOpen
  \bibfield  {author} {\bibinfo {author} {\bibfnamefont {F.}~\bibnamefont
  {Corsi}}, \bibinfo {author} {\bibfnamefont {F.}~\bibnamefont {Lillo}},
  \bibinfo {author} {\bibfnamefont {D.}~\bibnamefont {Pirino}}, \ and\ \bibinfo
  {author} {\bibfnamefont {L.}~\bibnamefont {Trapin}},\ }\href@noop {}
  {\bibfield  {journal} {\bibinfo  {journal} {Journal of Financial Stability}\
  }\textbf {\bibinfo {volume} {38}},\ \bibinfo {pages} {18} (\bibinfo {year}
  {2018})}\BibitemShut {NoStop}%
\bibitem [{\citenamefont {Mazzarisi}\ \emph {et~al.}(2019)\citenamefont
  {Mazzarisi}, \citenamefont {Barucca}, \citenamefont {Lillo},\ and\
  \citenamefont {Tantari}}]{mazzarisi2019dynamic}%
  \BibitemOpen
  \bibfield  {author} {\bibinfo {author} {\bibfnamefont {P.}~\bibnamefont
  {Mazzarisi}}, \bibinfo {author} {\bibfnamefont {P.}~\bibnamefont {Barucca}},
  \bibinfo {author} {\bibfnamefont {F.}~\bibnamefont {Lillo}}, \ and\ \bibinfo
  {author} {\bibfnamefont {D.}~\bibnamefont {Tantari}},\ }\href@noop {}
  {\bibfield  {journal} {\bibinfo  {journal} {European Journal of Operational
  Research}\ } (\bibinfo {year} {2019})}\BibitemShut {NoStop}%
\bibitem [{\citenamefont {Mill{\'a}n}\ \emph {et~al.}(2018)\citenamefont
  {Mill{\'a}n}, \citenamefont {Torres}, \citenamefont {Johnson},\ and\
  \citenamefont {Marro}}]{millan2018concurrence}%
  \BibitemOpen
  \bibfield  {author} {\bibinfo {author} {\bibfnamefont {A.~P.}\ \bibnamefont
  {Mill{\'a}n}}, \bibinfo {author} {\bibfnamefont {J.}~\bibnamefont {Torres}},
  \bibinfo {author} {\bibfnamefont {S.}~\bibnamefont {Johnson}}, \ and\
  \bibinfo {author} {\bibfnamefont {J.}~\bibnamefont {Marro}},\ }\href@noop {}
  {\bibfield  {journal} {\bibinfo  {journal} {Nature communications}\ }\textbf
  {\bibinfo {volume} {9}},\ \bibinfo {pages} {2236} (\bibinfo {year}
  {2018})}\BibitemShut {NoStop}%
\bibitem [{\citenamefont {Valencia}\ \emph {et~al.}(2008)\citenamefont
  {Valencia}, \citenamefont {Martinerie}, \citenamefont {Dupont},\ and\
  \citenamefont {Chavez}}]{Valencia08}%
  \BibitemOpen
  \bibfield  {author} {\bibinfo {author} {\bibfnamefont {M.}~\bibnamefont
  {Valencia}}, \bibinfo {author} {\bibfnamefont {J.}~\bibnamefont
  {Martinerie}}, \bibinfo {author} {\bibfnamefont {S.}~\bibnamefont {Dupont}},
  \ and\ \bibinfo {author} {\bibfnamefont {M.}~\bibnamefont {Chavez}},\ }\href
  {\doibase 10.1103/PhysRevE.77.050905} {\bibfield  {journal} {\bibinfo
  {journal} {Phys. Rev. E}\ }\textbf {\bibinfo {volume} {77}},\ \bibinfo
  {pages} {050905} (\bibinfo {year} {2008})}\BibitemShut {NoStop}%
\bibitem [{\citenamefont {Zanin}\ \emph {et~al.}(2009)\citenamefont {Zanin},
  \citenamefont {Lacasa},\ and\ \citenamefont {Cea}}]{zanin2009dynamics}%
  \BibitemOpen
  \bibfield  {author} {\bibinfo {author} {\bibfnamefont {M.}~\bibnamefont
  {Zanin}}, \bibinfo {author} {\bibfnamefont {L.}~\bibnamefont {Lacasa}}, \
  and\ \bibinfo {author} {\bibfnamefont {M.}~\bibnamefont {Cea}},\ }\href@noop
  {} {\bibfield  {journal} {\bibinfo  {journal} {Chaos: An Interdisciplinary
  Journal of Nonlinear Science}\ }\textbf {\bibinfo {volume} {19}},\ \bibinfo
  {pages} {023111} (\bibinfo {year} {2009})}\BibitemShut {NoStop}%
\bibitem [{\citenamefont {Tang}\ \emph {et~al.}(2010)\citenamefont {Tang},
  \citenamefont {Scellato}, \citenamefont {Musolesi}, \citenamefont {Mascolo},\
  and\ \citenamefont {Latora}}]{tang2010sw}%
  \BibitemOpen
  \bibfield  {author} {\bibinfo {author} {\bibfnamefont {J.}~\bibnamefont
  {Tang}}, \bibinfo {author} {\bibfnamefont {S.}~\bibnamefont {Scellato}},
  \bibinfo {author} {\bibfnamefont {M.}~\bibnamefont {Musolesi}}, \bibinfo
  {author} {\bibfnamefont {C.}~\bibnamefont {Mascolo}}, \ and\ \bibinfo
  {author} {\bibfnamefont {V.}~\bibnamefont {Latora}},\ }\href@noop {}
  {\bibfield  {journal} {\bibinfo  {journal} {Phys. Rev. E}\ }\textbf {\bibinfo
  {volume} {81}},\ \bibinfo {pages} {055101} (\bibinfo {year}
  {2010})}\BibitemShut {NoStop}%
\bibitem [{\citenamefont {Lambiotte}\ \emph {et~al.}(2019)\citenamefont
  {Lambiotte}, \citenamefont {Rosvall},\ and\ \citenamefont
  {Scholtes}}]{lambiotte2019networks}%
  \BibitemOpen
  \bibfield  {author} {\bibinfo {author} {\bibfnamefont {R.}~\bibnamefont
  {Lambiotte}}, \bibinfo {author} {\bibfnamefont {M.}~\bibnamefont {Rosvall}},
  \ and\ \bibinfo {author} {\bibfnamefont {I.}~\bibnamefont {Scholtes}},\
  }\href@noop {} {\bibfield  {journal} {\bibinfo  {journal} {Nature physics}\
  ,\ \bibinfo {pages} {1}} (\bibinfo {year} {2019})}\BibitemShut {NoStop}%
\bibitem [{\citenamefont {Delvenne}\ \emph {et~al.}(2015)\citenamefont
  {Delvenne}, \citenamefont {Lambiotte},\ and\ \citenamefont
  {Rocha}}]{delvenne2015diffusion}%
  \BibitemOpen
  \bibfield  {author} {\bibinfo {author} {\bibfnamefont {J.-C.}\ \bibnamefont
  {Delvenne}}, \bibinfo {author} {\bibfnamefont {R.}~\bibnamefont {Lambiotte}},
  \ and\ \bibinfo {author} {\bibfnamefont {L.~E.}\ \bibnamefont {Rocha}},\
  }\href@noop {} {\bibfield  {journal} {\bibinfo  {journal} {Nature
  communications}\ }\textbf {\bibinfo {volume} {6}},\ \bibinfo {pages} {7366}
  (\bibinfo {year} {2015})}\BibitemShut {NoStop}%
\bibitem [{\citenamefont {Lambiotte}\ \emph {et~al.}(2015)\citenamefont
  {Lambiotte}, \citenamefont {Salnikov},\ and\ \citenamefont
  {Rosvall}}]{Lambiotte_jcn15}%
  \BibitemOpen
  \bibfield  {author} {\bibinfo {author} {\bibfnamefont {R.}~\bibnamefont
  {Lambiotte}}, \bibinfo {author} {\bibfnamefont {V.}~\bibnamefont {Salnikov}},
  \ and\ \bibinfo {author} {\bibfnamefont {M.}~\bibnamefont {Rosvall}},\ }\href
  {\doibase 10.1093/comnet/cnu017} {\bibfield  {journal} {\bibinfo  {journal}
  {Journal of Complex Networks}\ }\textbf {\bibinfo {volume} {3}},\ \bibinfo
  {pages} {177} (\bibinfo {year} {2015})}\BibitemShut {NoStop}%
\bibitem [{\citenamefont {Masuda}\ \emph {et~al.}(2013)\citenamefont {Masuda},
  \citenamefont {Klemm},\ and\ \citenamefont
  {Egu{\'\i}luz}}]{masuda2013temporal}%
  \BibitemOpen
  \bibfield  {author} {\bibinfo {author} {\bibfnamefont {N.}~\bibnamefont
  {Masuda}}, \bibinfo {author} {\bibfnamefont {K.}~\bibnamefont {Klemm}}, \
  and\ \bibinfo {author} {\bibfnamefont {V.~M.}\ \bibnamefont {Egu{\'\i}luz}},\
  }\href@noop {} {\bibfield  {journal} {\bibinfo  {journal} {Physical Review
  Letters}\ }\textbf {\bibinfo {volume} {111}},\ \bibinfo {pages} {188701}
  (\bibinfo {year} {2013})}\BibitemShut {NoStop}%
\bibitem [{\citenamefont {Scholtes}\ \emph {et~al.}(2014)\citenamefont
  {Scholtes}, \citenamefont {Wider}, \citenamefont {Pfitzner}, \citenamefont
  {Garas}, \citenamefont {Tessone},\ and\ \citenamefont
  {Schweitzer}}]{Scholtes_natcomm14}%
  \BibitemOpen
  \bibfield  {author} {\bibinfo {author} {\bibfnamefont {I.}~\bibnamefont
  {Scholtes}}, \bibinfo {author} {\bibfnamefont {N.}~\bibnamefont {Wider}},
  \bibinfo {author} {\bibfnamefont {R.}~\bibnamefont {Pfitzner}}, \bibinfo
  {author} {\bibfnamefont {A.}~\bibnamefont {Garas}}, \bibinfo {author}
  {\bibfnamefont {C.~J.}\ \bibnamefont {Tessone}}, \ and\ \bibinfo {author}
  {\bibfnamefont {F.}~\bibnamefont {Schweitzer}},\ }\href
  {http://dx.doi.org/10.1038/ncomms6024} {\bibfield  {journal} {\bibinfo
  {journal} {Nat. Commun.}\ }\textbf {\bibinfo {volume} {5}},\ \bibinfo {pages}
  {5024} (\bibinfo {year} {2014})}\BibitemShut {NoStop}%
\bibitem [{\citenamefont {Hiraoka}\ and\ \citenamefont
  {Jo}(2018)}]{Hiraoka:2018aa}%
  \BibitemOpen
  \bibfield  {author} {\bibinfo {author} {\bibfnamefont {T.}~\bibnamefont
  {Hiraoka}}\ and\ \bibinfo {author} {\bibfnamefont {H.-H.}\ \bibnamefont
  {Jo}},\ }\href {\doibase 10.1038/s41598-018-33700-8} {\bibfield  {journal}
  {\bibinfo  {journal} {Scientific Reports}\ }\textbf {\bibinfo {volume} {8}},\
  \bibinfo {pages} {15321} (\bibinfo {year} {2018})}\BibitemShut {NoStop}%
\bibitem [{\citenamefont {Takaguchi}\ \emph {et~al.}(2013)\citenamefont
  {Takaguchi}, \citenamefont {Masuda},\ and\ \citenamefont
  {Holme}}]{takaguchi2013bursty}%
  \BibitemOpen
  \bibfield  {author} {\bibinfo {author} {\bibfnamefont {T.}~\bibnamefont
  {Takaguchi}}, \bibinfo {author} {\bibfnamefont {N.}~\bibnamefont {Masuda}}, \
  and\ \bibinfo {author} {\bibfnamefont {P.}~\bibnamefont {Holme}},\
  }\href@noop {} {\bibfield  {journal} {\bibinfo  {journal} {PloS one}\
  }\textbf {\bibinfo {volume} {8}},\ \bibinfo {pages} {e68629} (\bibinfo {year}
  {2013})}\BibitemShut {NoStop}%
\bibitem [{\citenamefont {Lambiotte}\ \emph {et~al.}(2013)\citenamefont
  {Lambiotte}, \citenamefont {Tabourier},\ and\ \citenamefont
  {Delvenne}}]{lambiotte2013burstiness}%
  \BibitemOpen
  \bibfield  {author} {\bibinfo {author} {\bibfnamefont {R.}~\bibnamefont
  {Lambiotte}}, \bibinfo {author} {\bibfnamefont {L.}~\bibnamefont
  {Tabourier}}, \ and\ \bibinfo {author} {\bibfnamefont {J.-C.}\ \bibnamefont
  {Delvenne}},\ }\href@noop {} {\bibfield  {journal} {\bibinfo  {journal} {The
  European Physical Journal B}\ }\textbf {\bibinfo {volume} {86}},\ \bibinfo
  {pages} {320} (\bibinfo {year} {2013})}\BibitemShut {NoStop}%
\bibitem [{\citenamefont {Karsai}\ \emph {et~al.}(2011)\citenamefont {Karsai},
  \citenamefont {Kivel{\"a}}, \citenamefont {Pan}, \citenamefont {Kaski},
  \citenamefont {Kert{\'e}sz}, \citenamefont {Barab{\'a}si},\ and\
  \citenamefont {Saram{\"a}ki}}]{karsai2011small}%
  \BibitemOpen
  \bibfield  {author} {\bibinfo {author} {\bibfnamefont {M.}~\bibnamefont
  {Karsai}}, \bibinfo {author} {\bibfnamefont {M.}~\bibnamefont {Kivel{\"a}}},
  \bibinfo {author} {\bibfnamefont {R.~K.}\ \bibnamefont {Pan}}, \bibinfo
  {author} {\bibfnamefont {K.}~\bibnamefont {Kaski}}, \bibinfo {author}
  {\bibfnamefont {J.}~\bibnamefont {Kert{\'e}sz}}, \bibinfo {author}
  {\bibfnamefont {A.-L.}\ \bibnamefont {Barab{\'a}si}}, \ and\ \bibinfo
  {author} {\bibfnamefont {J.}~\bibnamefont {Saram{\"a}ki}},\ }\href@noop {}
  {\bibfield  {journal} {\bibinfo  {journal} {Physical Review E}\ }\textbf
  {\bibinfo {volume} {83}},\ \bibinfo {pages} {025102} (\bibinfo {year}
  {2011})}\BibitemShut {NoStop}%
\bibitem [{\citenamefont {Williams}\ \emph
  {et~al.}(2019{\natexlab{a}})\citenamefont {Williams}, \citenamefont {Lillo},\
  and\ \citenamefont {Latora}}]{Williams_2019}%
  \BibitemOpen
  \bibfield  {author} {\bibinfo {author} {\bibfnamefont {O.~E.}\ \bibnamefont
  {Williams}}, \bibinfo {author} {\bibfnamefont {F.}~\bibnamefont {Lillo}}, \
  and\ \bibinfo {author} {\bibfnamefont {V.}~\bibnamefont {Latora}},\ }\href
  {\doibase 10.1088/1367-2630/ab13fb} {\bibfield  {journal} {\bibinfo
  {journal} {New Journal of Physics}\ }\textbf {\bibinfo {volume} {21}},\
  \bibinfo {pages} {043028} (\bibinfo {year} {2019}{\natexlab{a}})}\BibitemShut
  {NoStop}%
\bibitem [{\citenamefont {Van~Mieghem}\ and\ \citenamefont {Van~de
  Bovenkamp}(2013)}]{van2013non}%
  \BibitemOpen
  \bibfield  {author} {\bibinfo {author} {\bibfnamefont {P.}~\bibnamefont
  {Van~Mieghem}}\ and\ \bibinfo {author} {\bibfnamefont {R.}~\bibnamefont
  {Van~de Bovenkamp}},\ }\href@noop {} {\bibfield  {journal} {\bibinfo
  {journal} {Physical review letters}\ }\textbf {\bibinfo {volume} {110}},\
  \bibinfo {pages} {108701} (\bibinfo {year} {2013})}\BibitemShut {NoStop}%
\bibitem [{\citenamefont {Fallani}\ \emph {et~al.}(2008)\citenamefont
  {Fallani}, \citenamefont {Latora}, \citenamefont {Astolfi}, \citenamefont
  {Cincotti}, \citenamefont {Mattia}, \citenamefont {Marciani}, \citenamefont
  {Salinari}, \citenamefont {Colosimo},\ and\ \citenamefont
  {Babiloni}}]{Fallani08}%
  \BibitemOpen
  \bibfield  {author} {\bibinfo {author} {\bibfnamefont {F.~D.~V.}\
  \bibnamefont {Fallani}}, \bibinfo {author} {\bibfnamefont {V.}~\bibnamefont
  {Latora}}, \bibinfo {author} {\bibfnamefont {L.}~\bibnamefont {Astolfi}},
  \bibinfo {author} {\bibfnamefont {F.}~\bibnamefont {Cincotti}}, \bibinfo
  {author} {\bibfnamefont {D.}~\bibnamefont {Mattia}}, \bibinfo {author}
  {\bibfnamefont {M.~G.}\ \bibnamefont {Marciani}}, \bibinfo {author}
  {\bibfnamefont {S.}~\bibnamefont {Salinari}}, \bibinfo {author}
  {\bibfnamefont {A.}~\bibnamefont {Colosimo}}, \ and\ \bibinfo {author}
  {\bibfnamefont {F.}~\bibnamefont {Babiloni}},\ }\href
  {http://stacks.iop.org/1751-8121/41/i=22/a=224014} {\bibfield  {journal}
  {\bibinfo  {journal} {Journal of Physics A: Mathematical and Theoretical}\
  }\textbf {\bibinfo {volume} {41}},\ \bibinfo {pages} {224014} (\bibinfo
  {year} {2008})}\BibitemShut {NoStop}%
\bibitem [{\citenamefont {Singer}\ \emph {et~al.}(2014)\citenamefont {Singer},
  \citenamefont {Helic}, \citenamefont {Taraghi},\ and\ \citenamefont
  {Strohmaier}}]{Singer_PLOSONE14}%
  \BibitemOpen
  \bibfield  {author} {\bibinfo {author} {\bibfnamefont {P.}~\bibnamefont
  {Singer}}, \bibinfo {author} {\bibfnamefont {D.}~\bibnamefont {Helic}},
  \bibinfo {author} {\bibfnamefont {B.}~\bibnamefont {Taraghi}}, \ and\
  \bibinfo {author} {\bibfnamefont {M.}~\bibnamefont {Strohmaier}},\ }\href
  {\doibase 10.1371/journal.pone.0102070} {\bibfield  {journal} {\bibinfo
  {journal} {PLOS ONE}\ }\textbf {\bibinfo {volume} {9}},\ \bibinfo {pages} {1}
  (\bibinfo {year} {2014})}\BibitemShut {NoStop}%
\bibitem [{\citenamefont {Peixoto}\ and\ \citenamefont
  {Rosvall}(2017)}]{peixoto2017modelling}%
  \BibitemOpen
  \bibfield  {author} {\bibinfo {author} {\bibfnamefont {T.~P.}\ \bibnamefont
  {Peixoto}}\ and\ \bibinfo {author} {\bibfnamefont {M.}~\bibnamefont
  {Rosvall}},\ }\href@noop {} {\bibfield  {journal} {\bibinfo  {journal}
  {Nature communications}\ }\textbf {\bibinfo {volume} {8}},\ \bibinfo {pages}
  {582} (\bibinfo {year} {2017})}\BibitemShut {NoStop}%
\bibitem [{\citenamefont {Rosvall}\ \emph {et~al.}(2014)\citenamefont
  {Rosvall}, \citenamefont {Esquivel}, \citenamefont {Lancichinetti},
  \citenamefont {West},\ and\ \citenamefont {Lambiotte}}]{Rosvall_natcomm14}%
  \BibitemOpen
  \bibfield  {author} {\bibinfo {author} {\bibfnamefont {M.}~\bibnamefont
  {Rosvall}}, \bibinfo {author} {\bibfnamefont {A.~V.}\ \bibnamefont
  {Esquivel}}, \bibinfo {author} {\bibfnamefont {A.}~\bibnamefont
  {Lancichinetti}}, \bibinfo {author} {\bibfnamefont {J.~D.}\ \bibnamefont
  {West}}, \ and\ \bibinfo {author} {\bibfnamefont {R.}~\bibnamefont
  {Lambiotte}},\ }\href {http://dx.doi.org/10.1038/ncomms5630} {\bibfield
  {journal} {\bibinfo  {journal} {Nat. Commun.}\ }\textbf {\bibinfo {volume}
  {5}},\ \bibinfo {pages} {4630} (\bibinfo {year} {2014})}\BibitemShut
  {NoStop}%
\bibitem [{\citenamefont {Scholtes}(2017)}]{scholtes2017network}%
  \BibitemOpen
  \bibfield  {author} {\bibinfo {author} {\bibfnamefont {I.}~\bibnamefont
  {Scholtes}},\ }in\ \href@noop {} {\emph {\bibinfo {booktitle} {Proceedings of
  the 23rd ACM SIGKDD International Conference on Knowledge Discovery and Data
  Mining}}}\ (\bibinfo {organization} {ACM},\ \bibinfo {year} {2017})\ pp.\
  \bibinfo {pages} {1037--1046}\BibitemShut {NoStop}%
\bibitem [{\citenamefont {Peixoto}\ and\ \citenamefont
  {Gauvin}(2018)}]{peixoto2018change}%
  \BibitemOpen
  \bibfield  {author} {\bibinfo {author} {\bibfnamefont {T.~P.}\ \bibnamefont
  {Peixoto}}\ and\ \bibinfo {author} {\bibfnamefont {L.}~\bibnamefont
  {Gauvin}},\ }\href@noop {} {\bibfield  {journal} {\bibinfo  {journal}
  {Scientific reports}\ }\textbf {\bibinfo {volume} {8}},\ \bibinfo {pages}
  {15511} (\bibinfo {year} {2018})}\BibitemShut {NoStop}%
\bibitem [{\citenamefont {Zhao}\ \emph {et~al.}(2001)\citenamefont {Zhao},
  \citenamefont {Dorea},\ and\ \citenamefont
  {Gon{\c{c}}alves}}]{zhao2001determination}%
  \BibitemOpen
  \bibfield  {author} {\bibinfo {author} {\bibfnamefont {L.}~\bibnamefont
  {Zhao}}, \bibinfo {author} {\bibfnamefont {C.}~\bibnamefont {Dorea}}, \ and\
  \bibinfo {author} {\bibfnamefont {C.}~\bibnamefont {Gon{\c{c}}alves}},\
  }\href@noop {} {\bibfield  {journal} {\bibinfo  {journal} {Statistical
  inference for stochastic processes}\ }\textbf {\bibinfo {volume} {4}},\
  \bibinfo {pages} {273} (\bibinfo {year} {2001})}\BibitemShut {NoStop}%
\bibitem [{\citenamefont {Dorea}\ \emph {et~al.}(2014)\citenamefont {Dorea},
  \citenamefont {Goncalves},\ and\ \citenamefont
  {Resende}}]{dorea2014simulation}%
  \BibitemOpen
  \bibfield  {author} {\bibinfo {author} {\bibfnamefont {C.~C.}\ \bibnamefont
  {Dorea}}, \bibinfo {author} {\bibfnamefont {C.~R.}\ \bibnamefont
  {Goncalves}}, \ and\ \bibinfo {author} {\bibfnamefont {P.}~\bibnamefont
  {Resende}},\ }in\ \href@noop {} {\emph {\bibinfo {booktitle} {Proc. World
  Congress on Engineering and Computer Science}}},\ Vol.~\bibinfo {volume} {2}\
  (\bibinfo {year} {2014})\ pp.\ \bibinfo {pages} {899--901}\BibitemShut
  {NoStop}%
\bibitem [{\citenamefont {Michalski}\ \emph {et~al.}(2011)\citenamefont
  {Michalski}, \citenamefont {Palus},\ and\ \citenamefont
  {Kazienko}}]{email_bb}%
  \BibitemOpen
  \bibfield  {author} {\bibinfo {author} {\bibfnamefont {R.}~\bibnamefont
  {Michalski}}, \bibinfo {author} {\bibfnamefont {S.}~\bibnamefont {Palus}}, \
  and\ \bibinfo {author} {\bibfnamefont {P.}~\bibnamefont {Kazienko}},\ }in\
  \href@noop {} {\emph {\bibinfo {booktitle} {Lecture Notes in Business
  Information Processing}}},\ Vol.~\bibinfo {volume} {87}\ (\bibinfo
  {publisher} {Springer Berlin Heidelberg},\ \bibinfo {year} {2011})\ pp.\
  \bibinfo {pages} {197--206}\BibitemShut {NoStop}%
\bibitem [{\citenamefont {Panzarasa}\ \emph {et~al.}(2009)\citenamefont
  {Panzarasa}, \citenamefont {Opsahl},\ and\ \citenamefont
  {Carley}}]{panzarasa2009patterns}%
  \BibitemOpen
  \bibfield  {author} {\bibinfo {author} {\bibfnamefont {P.}~\bibnamefont
  {Panzarasa}}, \bibinfo {author} {\bibfnamefont {T.}~\bibnamefont {Opsahl}}, \
  and\ \bibinfo {author} {\bibfnamefont {K.~M.}\ \bibnamefont {Carley}},\
  }\href@noop {} {\bibfield  {journal} {\bibinfo  {journal} {Journal of the
  American Society for Information Science and Technology}\ }\textbf {\bibinfo
  {volume} {60}},\ \bibinfo {pages} {911} (\bibinfo {year} {2009})}\BibitemShut
  {NoStop}%
\bibitem [{\citenamefont {Eagle}\ and\ \citenamefont
  {Pentland}(2006)}]{eagle2006reality}%
  \BibitemOpen
  \bibfield  {author} {\bibinfo {author} {\bibfnamefont {N.}~\bibnamefont
  {Eagle}}\ and\ \bibinfo {author} {\bibfnamefont {A.~S.}\ \bibnamefont
  {Pentland}},\ }\href@noop {} {\bibfield  {journal} {\bibinfo  {journal}
  {Personal and ubiquitous computing}\ }\textbf {\bibinfo {volume} {10}},\
  \bibinfo {pages} {255} (\bibinfo {year} {2006})}\BibitemShut {NoStop}%
\bibitem [{\citenamefont {Kujala}\ \emph {et~al.}(2018)\citenamefont {Kujala},
  \citenamefont {Weckstr{\"o}m}, \citenamefont {Darst}, \citenamefont
  {Mladenovi{\'c}},\ and\ \citenamefont {Saram{\"a}ki}}]{kujala2018collection}%
  \BibitemOpen
  \bibfield  {author} {\bibinfo {author} {\bibfnamefont {R.}~\bibnamefont
  {Kujala}}, \bibinfo {author} {\bibfnamefont {C.}~\bibnamefont
  {Weckstr{\"o}m}}, \bibinfo {author} {\bibfnamefont {R.~K.}\ \bibnamefont
  {Darst}}, \bibinfo {author} {\bibfnamefont {M.~N.}\ \bibnamefont
  {Mladenovi{\'c}}}, \ and\ \bibinfo {author} {\bibfnamefont {J.}~\bibnamefont
  {Saram{\"a}ki}},\ }\href@noop {} {\bibfield  {journal} {\bibinfo  {journal}
  {Scientific data}\ }\textbf {\bibinfo {volume} {5}},\ \bibinfo {pages}
  {180089} (\bibinfo {year} {2018})}\BibitemShut {NoStop}%
\bibitem [{\citenamefont {Williams}\ \emph
  {et~al.}(2019{\natexlab{b}})\citenamefont {Williams}, \citenamefont {Lillo},\
  and\ \citenamefont {Latora}}]{williams2019_diff}%
  \BibitemOpen
  \bibfield  {author} {\bibinfo {author} {\bibfnamefont {O.~E.}\ \bibnamefont
  {Williams}}, \bibinfo {author} {\bibfnamefont {F.}~\bibnamefont {Lillo}}, \
  and\ \bibinfo {author} {\bibfnamefont {V.}~\bibnamefont {Latora}},\
  }\href@noop {} {\  (\bibinfo {year} {2019}{\natexlab{b}})},\ \Eprint
  {http://arxiv.org/abs/1909.08134} {arXiv:1909.08134 [cond-mat.stat-mech]}
  \BibitemShut {NoStop}%
\bibitem [{\citenamefont {Gagniuc}(2017)}]{gagniuc2017markov}%
  \BibitemOpen
  \bibfield  {author} {\bibinfo {author} {\bibfnamefont {P.~A.}\ \bibnamefont
  {Gagniuc}},\ }\href@noop {} {\emph {\bibinfo {title} {Markov chains: from
  theory to implementation and experimentation}}}\ (\bibinfo  {publisher} {John
  Wiley \& Sons},\ \bibinfo {year} {2017})\BibitemShut {NoStop}%
\bibitem [{\citenamefont {Tong}(1975)}]{tong1975determination}%
  \BibitemOpen
  \bibfield  {author} {\bibinfo {author} {\bibfnamefont {H.}~\bibnamefont
  {Tong}},\ }\href@noop {} {\bibfield  {journal} {\bibinfo  {journal} {Journal
  of applied probability}\ }\textbf {\bibinfo {volume} {12}},\ \bibinfo {pages}
  {488} (\bibinfo {year} {1975})}\BibitemShut {NoStop}%
\bibitem [{\citenamefont {Schwarz}\ \emph {et~al.}(1978)\citenamefont {Schwarz}
  \emph {et~al.}}]{schwarz1978estimating}%
  \BibitemOpen
  \bibfield  {author} {\bibinfo {author} {\bibfnamefont {G.}~\bibnamefont
  {Schwarz}} \emph {et~al.},\ }\href@noop {} {\bibfield  {journal} {\bibinfo
  {journal} {The annals of statistics}\ }\textbf {\bibinfo {volume} {6}},\
  \bibinfo {pages} {461} (\bibinfo {year} {1978})}\BibitemShut {NoStop}%
\bibitem [{\citenamefont {Van~der Heyden}\ \emph {et~al.}(1998)\citenamefont
  {Van~der Heyden}, \citenamefont {Diks}, \citenamefont {Hoekstra},\ and\
  \citenamefont {DeGoede}}]{van1998testing}%
  \BibitemOpen
  \bibfield  {author} {\bibinfo {author} {\bibfnamefont {M.~J.}\ \bibnamefont
  {Van~der Heyden}}, \bibinfo {author} {\bibfnamefont {C.~G.}\ \bibnamefont
  {Diks}}, \bibinfo {author} {\bibfnamefont {B.~P.}\ \bibnamefont {Hoekstra}},
  \ and\ \bibinfo {author} {\bibfnamefont {J.}~\bibnamefont {DeGoede}},\
  }\href@noop {} {\bibfield  {journal} {\bibinfo  {journal} {Physica D:
  Nonlinear Phenomena}\ }\textbf {\bibinfo {volume} {117}},\ \bibinfo {pages}
  {299} (\bibinfo {year} {1998})}\BibitemShut {NoStop}%
\bibitem [{\citenamefont {Katz}(1981)}]{katz1981some}%
  \BibitemOpen
  \bibfield  {author} {\bibinfo {author} {\bibfnamefont {R.~W.}\ \bibnamefont
  {Katz}},\ }\href@noop {} {\bibfield  {journal} {\bibinfo  {journal}
  {Technometrics}\ }\textbf {\bibinfo {volume} {23}},\ \bibinfo {pages} {243}
  (\bibinfo {year} {1981})}\BibitemShut {NoStop}%
\bibitem [{\citenamefont {Papapetrou}\ and\ \citenamefont
  {Kugiumtzis}(2016)}]{papapetrou2016markov}%
  \BibitemOpen
  \bibfield  {author} {\bibinfo {author} {\bibfnamefont {M.}~\bibnamefont
  {Papapetrou}}\ and\ \bibinfo {author} {\bibfnamefont {D.}~\bibnamefont
  {Kugiumtzis}},\ }\href@noop {} {\bibfield  {journal} {\bibinfo  {journal}
  {Simulation Modelling Practice and Theory}\ }\textbf {\bibinfo {volume}
  {61}},\ \bibinfo {pages} {1} (\bibinfo {year} {2016})}\BibitemShut {NoStop}%
\bibitem [{\citenamefont {Granger}(1969)}]{granger1969investigating}%
  \BibitemOpen
  \bibfield  {author} {\bibinfo {author} {\bibfnamefont {C.~W.}\ \bibnamefont
  {Granger}},\ }\href@noop {} {\bibfield  {journal} {\bibinfo  {journal}
  {Econometrica: Journal of the Econometric Society}\ ,\ \bibinfo {pages}
  {424}} (\bibinfo {year} {1969})}\BibitemShut {NoStop}%
\bibitem [{\citenamefont {Jacobs}\ and\ \citenamefont
  {Lewis}(1978)}]{jacobs1978discrete}%
  \BibitemOpen
  \bibfield  {author} {\bibinfo {author} {\bibfnamefont {P.~A.}\ \bibnamefont
  {Jacobs}}\ and\ \bibinfo {author} {\bibfnamefont {P.~A.}\ \bibnamefont
  {Lewis}},\ }\href@noop {} {\emph {\bibinfo {title} {Discrete Time Series
  Generated by Mixtures. III. Autoregressive Processes (DAR (p)).}}},\ \bibinfo
  {type} {Tech. Rep.}\ (\bibinfo  {institution} {NAVAL POSTGRADUATE SCHOOL
  MONTEREY CALIF},\ \bibinfo {year} {1978})\BibitemShut {NoStop}%
\bibitem [{\citenamefont {Pearl}(1982)}]{pearl1982reverend}%
  \BibitemOpen
  \bibfield  {author} {\bibinfo {author} {\bibfnamefont {J.}~\bibnamefont
  {Pearl}},\ }\href@noop {} {\emph {\bibinfo {title} {Reverend Bayes on
  inference engines: A distributed hierarchical approach}}}\ (\bibinfo
  {publisher} {Cognitive Systems Laboratory, School of Engineering and Applied
  Science, University of California, Los Angeles},\ \bibinfo {year}
  {1982})\BibitemShut {NoStop}%
\bibitem [{\citenamefont {Yedidia}\ \emph {et~al.}(2003)\citenamefont
  {Yedidia}, \citenamefont {Freeman},\ and\ \citenamefont
  {Weiss}}]{yedidia2003understanding}%
  \BibitemOpen
  \bibfield  {author} {\bibinfo {author} {\bibfnamefont {J.~S.}\ \bibnamefont
  {Yedidia}}, \bibinfo {author} {\bibfnamefont {W.~T.}\ \bibnamefont
  {Freeman}}, \ and\ \bibinfo {author} {\bibfnamefont {Y.}~\bibnamefont
  {Weiss}},\ }\href@noop {} {\bibfield  {journal} {\bibinfo  {journal}
  {Exploring artificial intelligence in the new millennium}\ }\textbf {\bibinfo
  {volume} {8}},\ \bibinfo {pages} {236} (\bibinfo {year} {2003})}\BibitemShut
  {NoStop}%
\bibitem [{\citenamefont {Weiss}\ and\ \citenamefont
  {Freeman}(2000)}]{weiss2000correctness}%
  \BibitemOpen
  \bibfield  {author} {\bibinfo {author} {\bibfnamefont {Y.}~\bibnamefont
  {Weiss}}\ and\ \bibinfo {author} {\bibfnamefont {W.~T.}\ \bibnamefont
  {Freeman}},\ }in\ \href@noop {} {\emph {\bibinfo {booktitle} {Advances in
  neural information processing systems}}}\ (\bibinfo {year} {2000})\ pp.\
  \bibinfo {pages} {673--679}\BibitemShut {NoStop}%
\bibitem [{\citenamefont {Baron}\ \emph {et~al.}(2009)\citenamefont {Baron},
  \citenamefont {Sarvotham},\ and\ \citenamefont
  {Baraniuk}}]{baron2009bayesian}%
  \BibitemOpen
  \bibfield  {author} {\bibinfo {author} {\bibfnamefont {D.}~\bibnamefont
  {Baron}}, \bibinfo {author} {\bibfnamefont {S.}~\bibnamefont {Sarvotham}}, \
  and\ \bibinfo {author} {\bibfnamefont {R.~G.}\ \bibnamefont {Baraniuk}},\
  }\href@noop {} {\bibfield  {journal} {\bibinfo  {journal} {IEEE Transactions
  on Signal Processing}\ }\textbf {\bibinfo {volume} {58}},\ \bibinfo {pages}
  {269} (\bibinfo {year} {2009})}\BibitemShut {NoStop}%
\bibitem [{\citenamefont {Ihler}\ and\ \citenamefont
  {McAllester}(2009)}]{ihler2009particle}%
  \BibitemOpen
  \bibfield  {author} {\bibinfo {author} {\bibfnamefont {A.}~\bibnamefont
  {Ihler}}\ and\ \bibinfo {author} {\bibfnamefont {D.}~\bibnamefont
  {McAllester}},\ }in\ \href@noop {} {\emph {\bibinfo {booktitle} {Artificial
  Intelligence and Statistics}}}\ (\bibinfo {year} {2009})\ pp.\ \bibinfo
  {pages} {256--263}\BibitemShut {NoStop}%
\bibitem [{\citenamefont {Felzenszwalb}\ and\ \citenamefont
  {Huttenlocher}(2006)}]{felzenszwalb2006efficient}%
  \BibitemOpen
  \bibfield  {author} {\bibinfo {author} {\bibfnamefont {P.~F.}\ \bibnamefont
  {Felzenszwalb}}\ and\ \bibinfo {author} {\bibfnamefont {D.~P.}\ \bibnamefont
  {Huttenlocher}},\ }\href@noop {} {\bibfield  {journal} {\bibinfo  {journal}
  {International journal of computer vision}\ }\textbf {\bibinfo {volume}
  {70}},\ \bibinfo {pages} {41} (\bibinfo {year} {2006})}\BibitemShut {NoStop}%
\bibitem [{\citenamefont {Lokhov}\ \emph {et~al.}(2014)\citenamefont {Lokhov},
  \citenamefont {M{\'e}zard}, \citenamefont {Ohta},\ and\ \citenamefont
  {Zdeborov{\'a}}}]{lokhov2014inferring}%
  \BibitemOpen
  \bibfield  {author} {\bibinfo {author} {\bibfnamefont {A.~Y.}\ \bibnamefont
  {Lokhov}}, \bibinfo {author} {\bibfnamefont {M.}~\bibnamefont {M{\'e}zard}},
  \bibinfo {author} {\bibfnamefont {H.}~\bibnamefont {Ohta}}, \ and\ \bibinfo
  {author} {\bibfnamefont {L.}~\bibnamefont {Zdeborov{\'a}}},\ }\href@noop {}
  {\bibfield  {journal} {\bibinfo  {journal} {Physical Review E}\ }\textbf
  {\bibinfo {volume} {90}},\ \bibinfo {pages} {012801} (\bibinfo {year}
  {2014})}\BibitemShut {NoStop}%
\bibitem [{\citenamefont {Yedidia}\ \emph {et~al.}(2005)\citenamefont
  {Yedidia}, \citenamefont {Freeman},\ and\ \citenamefont
  {Weiss}}]{yedidia2005constructing}%
  \BibitemOpen
  \bibfield  {author} {\bibinfo {author} {\bibfnamefont {J.~S.}\ \bibnamefont
  {Yedidia}}, \bibinfo {author} {\bibfnamefont {W.~T.}\ \bibnamefont
  {Freeman}}, \ and\ \bibinfo {author} {\bibfnamefont {Y.}~\bibnamefont
  {Weiss}},\ }\href@noop {} {\bibfield  {journal} {\bibinfo  {journal} {IEEE
  Transactions on information theory}\ }\textbf {\bibinfo {volume} {51}},\
  \bibinfo {pages} {2282} (\bibinfo {year} {2005})}\BibitemShut {NoStop}%
\bibitem [{\citenamefont {Opper}\ \emph {et~al.}(2001)\citenamefont {Opper},
  \citenamefont {Winther} \emph {et~al.}}]{opper2001naive}%
  \BibitemOpen
  \bibfield  {author} {\bibinfo {author} {\bibfnamefont {M.}~\bibnamefont
  {Opper}}, \bibinfo {author} {\bibfnamefont {O.}~\bibnamefont {Winther}},
  \emph {et~al.},\ }\href@noop {} {\bibfield  {journal} {\bibinfo  {journal}
  {Advanced mean field methods: theory and practice}\ ,\ \bibinfo {pages} {7}}
  (\bibinfo {year} {2001})}\BibitemShut {NoStop}%
\bibitem [{\citenamefont {Kabashima}(2003)}]{kabashima2003cdma}%
  \BibitemOpen
  \bibfield  {author} {\bibinfo {author} {\bibfnamefont {Y.}~\bibnamefont
  {Kabashima}},\ }\href@noop {} {\bibfield  {journal} {\bibinfo  {journal}
  {Journal of Physics A: Mathematical and General}\ }\textbf {\bibinfo {volume}
  {36}},\ \bibinfo {pages} {11111} (\bibinfo {year} {2003})}\BibitemShut
  {NoStop}%
\bibitem [{\citenamefont {Neirotti}\ and\ \citenamefont
  {Saad}(2005)}]{neirotti2005improved}%
  \BibitemOpen
  \bibfield  {author} {\bibinfo {author} {\bibfnamefont {J.~P.}\ \bibnamefont
  {Neirotti}}\ and\ \bibinfo {author} {\bibfnamefont {D.}~\bibnamefont
  {Saad}},\ }\href@noop {} {\bibfield  {journal} {\bibinfo  {journal} {EPL
  (Europhysics Letters)}\ }\textbf {\bibinfo {volume} {71}},\ \bibinfo {pages}
  {866} (\bibinfo {year} {2005})}\BibitemShut {NoStop}%
\bibitem [{\citenamefont {Murphy}\ \emph {et~al.}(1999)\citenamefont {Murphy},
  \citenamefont {Weiss},\ and\ \citenamefont {Jordan}}]{murphy1999loopy}%
  \BibitemOpen
  \bibfield  {author} {\bibinfo {author} {\bibfnamefont {K.~P.}\ \bibnamefont
  {Murphy}}, \bibinfo {author} {\bibfnamefont {Y.}~\bibnamefont {Weiss}}, \
  and\ \bibinfo {author} {\bibfnamefont {M.~I.}\ \bibnamefont {Jordan}},\ }in\
  \href@noop {} {\emph {\bibinfo {booktitle} {Proceedings of the Fifteenth
  conference on Uncertainty in artificial intelligence}}}\ (\bibinfo
  {organization} {Morgan Kaufmann Publishers Inc.},\ \bibinfo {year} {1999})\
  pp.\ \bibinfo {pages} {467--475}\BibitemShut {NoStop}%
\bibitem [{\citenamefont {Yedidia}\ \emph {et~al.}(2001)\citenamefont
  {Yedidia}, \citenamefont {Freeman},\ and\ \citenamefont
  {Weiss}}]{yedidia2001generalized}%
  \BibitemOpen
  \bibfield  {author} {\bibinfo {author} {\bibfnamefont {J.~S.}\ \bibnamefont
  {Yedidia}}, \bibinfo {author} {\bibfnamefont {W.~T.}\ \bibnamefont
  {Freeman}}, \ and\ \bibinfo {author} {\bibfnamefont {Y.}~\bibnamefont
  {Weiss}},\ }in\ \href@noop {} {\emph {\bibinfo {booktitle} {Advances in
  neural information processing systems}}}\ (\bibinfo {year} {2001})\ pp.\
  \bibinfo {pages} {689--695}\BibitemShut {NoStop}%
\bibitem [{\citenamefont {Ihler}\ \emph {et~al.}(2005)\citenamefont {Ihler},
  \citenamefont {John~III},\ and\ \citenamefont {Willsky}}]{ihler2005loopy}%
  \BibitemOpen
  \bibfield  {author} {\bibinfo {author} {\bibfnamefont {A.~T.}\ \bibnamefont
  {Ihler}}, \bibinfo {author} {\bibfnamefont {W.~F.}\ \bibnamefont {John~III}},
  \ and\ \bibinfo {author} {\bibfnamefont {A.~S.}\ \bibnamefont {Willsky}},\
  }\href@noop {} {\bibfield  {journal} {\bibinfo  {journal} {Journal of Machine
  Learning Research}\ }\textbf {\bibinfo {volume} {6}},\ \bibinfo {pages} {905}
  (\bibinfo {year} {2005})}\BibitemShut {NoStop}%
\bibitem [{\citenamefont {Cantwell}\ and\ \citenamefont
  {Newman}(2019)}]{cantwell2019message}%
  \BibitemOpen
  \bibfield  {author} {\bibinfo {author} {\bibfnamefont {G.~T.}\ \bibnamefont
  {Cantwell}}\ and\ \bibinfo {author} {\bibfnamefont {M.~E.~J.}\ \bibnamefont
  {Newman}},\ }\href@noop {} {\enquote {\bibinfo {title} {Message passing on
  networks with loops},}\ } (\bibinfo {year} {2019}),\ \Eprint
  {http://arxiv.org/abs/1907.08252} {arXiv:1907.08252 [cs.SI]} \BibitemShut
  {NoStop}%
\bibitem [{\citenamefont {Passarino}\ and\ \citenamefont
  {Veltman}(1979)}]{passarino1979one}%
  \BibitemOpen
  \bibfield  {author} {\bibinfo {author} {\bibfnamefont {G.}~\bibnamefont
  {Passarino}}\ and\ \bibinfo {author} {\bibfnamefont {M.}~\bibnamefont
  {Veltman}},\ }\href@noop {} {\bibfield  {journal} {\bibinfo  {journal}
  {Nuclear Physics B}\ }\textbf {\bibinfo {volume} {160}},\ \bibinfo {pages}
  {151} (\bibinfo {year} {1979})}\BibitemShut {NoStop}%
\bibitem [{\citenamefont {t~Hooft}\ and\ \citenamefont
  {Veltman}(1974)}]{t1974one}%
  \BibitemOpen
  \bibfield  {author} {\bibinfo {author} {\bibfnamefont {G.}~\bibnamefont
  {t~Hooft}}\ and\ \bibinfo {author} {\bibfnamefont {M.}~\bibnamefont
  {Veltman}},\ }in\ \href@noop {} {\emph {\bibinfo {booktitle} {Annales de
  l'IHP Physique th{\'e}orique}}},\ Vol.~\bibinfo {volume} {20}\ (\bibinfo
  {year} {1974})\ pp.\ \bibinfo {pages} {69--94}\BibitemShut {NoStop}%
\bibitem [{\citenamefont {Mostepanenko}\ and\ \citenamefont
  {Trunov}(1997)}]{mostepanenko1997casimir}%
  \BibitemOpen
  \bibfield  {author} {\bibinfo {author} {\bibfnamefont {V.~M.}\ \bibnamefont
  {Mostepanenko}}\ and\ \bibinfo {author} {\bibfnamefont {N.}~\bibnamefont
  {Trunov}},\ }\href@noop {} {\emph {\bibinfo {title} {The Casimir effect and
  its applications}}}\ (\bibinfo  {publisher} {Oxford University Press},\
  \bibinfo {year} {1997})\BibitemShut {NoStop}%
\bibitem [{\citenamefont {Jaffe}(2005)}]{jaffe2005casimir}%
  \BibitemOpen
  \bibfield  {author} {\bibinfo {author} {\bibfnamefont {R.}~\bibnamefont
  {Jaffe}},\ }\href@noop {} {\bibfield  {journal} {\bibinfo  {journal}
  {Physical Review D}\ }\textbf {\bibinfo {volume} {72}},\ \bibinfo {pages}
  {021301} (\bibinfo {year} {2005})}\BibitemShut {NoStop}%
\bibitem [{\citenamefont {Russo}\ \emph {et~al.}(1992)\citenamefont {Russo},
  \citenamefont {Susskind},\ and\ \citenamefont {Thorlacius}}]{russo1992end}%
  \BibitemOpen
  \bibfield  {author} {\bibinfo {author} {\bibfnamefont {J.~G.}\ \bibnamefont
  {Russo}}, \bibinfo {author} {\bibfnamefont {L.}~\bibnamefont {Susskind}}, \
  and\ \bibinfo {author} {\bibfnamefont {L.}~\bibnamefont {Thorlacius}},\
  }\href@noop {} {\bibfield  {journal} {\bibinfo  {journal} {Physical Review
  D}\ }\textbf {\bibinfo {volume} {46}},\ \bibinfo {pages} {3444} (\bibinfo
  {year} {1992})}\BibitemShut {NoStop}%
\bibitem [{\citenamefont {Czarnecki}\ \emph {et~al.}(2005)\citenamefont
  {Czarnecki}, \citenamefont {Jentschura},\ and\ \citenamefont
  {Pachucki}}]{czarnecki2005calculation}%
  \BibitemOpen
  \bibfield  {author} {\bibinfo {author} {\bibfnamefont {A.}~\bibnamefont
  {Czarnecki}}, \bibinfo {author} {\bibfnamefont {U.~D.}\ \bibnamefont
  {Jentschura}}, \ and\ \bibinfo {author} {\bibfnamefont {K.}~\bibnamefont
  {Pachucki}},\ }\href@noop {} {\bibfield  {journal} {\bibinfo  {journal}
  {Physical review letters}\ }\textbf {\bibinfo {volume} {95}},\ \bibinfo
  {pages} {180404} (\bibinfo {year} {2005})}\BibitemShut {NoStop}%
\bibitem [{\citenamefont {Goh}\ and\ \citenamefont
  {Barab{\'a}si}(2008)}]{goh2008burstiness}%
  \BibitemOpen
  \bibfield  {author} {\bibinfo {author} {\bibfnamefont {K.-I.}\ \bibnamefont
  {Goh}}\ and\ \bibinfo {author} {\bibfnamefont {A.-L.}\ \bibnamefont
  {Barab{\'a}si}},\ }\href@noop {} {\bibfield  {journal} {\bibinfo  {journal}
  {EPL (Europhysics Letters)}\ }\textbf {\bibinfo {volume} {81}},\ \bibinfo
  {pages} {48002} (\bibinfo {year} {2008})}\BibitemShut {NoStop}%
\end{thebibliography}
%

\end{document}